\newtheorem{definition}{Definition}[section]
\newtheorem{proposition}{Proposition}[section]
\newtheorem{theorem}{Theorem}[section]
\newtheorem{corollary}{Corollary}[section]
\newtheorem{lemma}{Lemma}[section]
\newcommand{\low}{\textit{\textbf{l}}}
\newcommand{\upp}{\textit{\textbf{u}}}
\begin{document}

	\begin{center}
		\doublespacing	{\textbf{\huge{Double Successive Rough Set Approximations}}}	\\
		
		\vspace{0.8cm}
		Alexa Gopaulsingh\\
		
		\vspace{1mm}
	\singlespacing{	Central European University\\
		Budapest, Hungary\\
		\ttfamily{\bfseries{gopaulsingh\_alexa@phd.ceu.edu}}}\\
		\end{center}
		
\vspace{1cm}

\noindent \textbf{Abstract.} We examine double successive approximations on a set, which we  denote by $L_2L_1, \ U_2U_1, \\ U_2L_1,$  $L_2U_1$ where $L_1,  U_1$  and $L_2, U_2$ are based  on generally non-equivalent equivalence relations $E_1$ and $E_2$ respectively, on a finite non-empty set $V.$ We consider the case of these operators being given fully defined on its powerset $\mathscr{P}(V).$ Then, we investigate if we can reconstruct the equivalence relations which they may be based on. Directly related to this, is the question of whether there are unique solutions for a given defined operator and the existence of conditions which may characterise this. We find and prove these characterising conditions that equivalence relation pairs should satisfy in order to generate  unique such operators.
 
\vspace{5mm} \noindent \textbf{Keywords:} Double Approximations $\cdot$ Successive Approximations $\cdot$ Double, Successive Rough Set Approximations

	\section{Successive Approximations}   
	Double successive rough set approximations here, are considered using two, generally different equivalence relations. These are interesting because one can imagine a situation or model where sets/information to be approximated is input through two different approximations before returning the output. It is possible that for example  heuristics in the brain can be modelled using such layered approximations. Decomposing successive approximations  into constituent parts is somewhat analogous to decomposing a wave into sine and cosine parts using Fourier analysis.  
	
	In our case, we have two equivalence relations $E_1$ and $E_2$ on a set $V$ with lower and upper approximations operators acting on its powerset $\mathscr{P}(V),$ denoted by $L_1, \ U_1$ and $L_2, \ U_2$ respectively. What if we knew the results of passing all the elements in $\mathscr{P}(V)$ through $L_1$ and then $L_2,$ which we denote by $L_2L_1.$ Could we then reconstruct $E_1$ and $E_2$ from this information? In this paper, we will investigate this question and consider the four cases of being given a defined $L_2L_1, \ U_2U_1, \  U_2L_1, \ L_2U_1$ operators. We will find that two equivalence relations do not always produce unique such operators but that some pairs do. We find and characterise conditions which the pairs of equivalence relations must satisfy for them to produce a unique operator.  Cattaneo and Ciucci found that preclusive relations are especially useful for  rough approximations in information systems in \cite{PR}. For the $L_2L_1$ case we will show that these conditions from a preclusive relation between pairs of equivalence relations on a set and so we can define a related notion of independence from it. After this, we will find a more conceptual but equivalent version of the conditions of the uniqueness theorem. These conditions are more illuminating in that we can easier see why these conditions work while the conditions in the first version of the theorem are easier to use in practice.  Lastly, we will consider the cases of the remaining operators, $U_2U_1, \ U_2L_1$ and $L_2U_1.$ We note that the $L_2L_1$ and $U_2U_1$ cases are dual to each other and similarly for the $U_2L_1$ and  $L_2U_1$ cases.

	Rough set theory has quite a large number of practical applications. This is due in part to the computation of reducts and decision rules for databases. Predictions can be made after the data is mined to extract decision rules of manageable size (i.e. attribute reduction). In this way,   rough set theory can be used to make decisions using data in the absence of major prior assumptions as argued in more detail in \cite{Baye}.  Hence in retrospect, it is perhaps not so surprising that this leads to tremendous applications. Therefore, rough set analysis is added to the tools, which includes regression analysis and  Bayes' Theorem, for pattern recognition and feature selection in data mining, see  \cite{DM4, DM1, DM2, DM3, DM5, DM6, DM7, DM8}.  The resulting applications include in medical databases \cite{MD1, MD2, MD3, MD4, MD5, MD6, MD7}, cognitive science  \cite{CG1, CG2, CG3, CG4, CG5}, artificial intelligence and machine learning \cite{AC1, AC2, AC3, AC4, AC5, AC6, AC7} and  engineering  \cite{EN1, EN2, EN3, EN4, EN5}. Indeed in \cite{TSid}, Yao noted  that there is currently an imbalance in the literature between the conceptual unfolding of rough set theory and its practical computational progress. He observed that the amount of computational literature currently far exceeds the amount of conceptual, theoretical literature. Moreover, he made the case that the field would prosper from a correction of this imbalance.  To illustrate this, he began his recommendation in \cite{TSid} by formulating a conceptual example of reducts that unifies three reduct definitions used in the literature which on the surface look different. We strongly agree that more efforts to find  conceptual formulations of notions and results would increase the discovery of unifying notions.  This would greatly aid the aim of making a cohesive and coherent map of the present mass of existing literature. In this direction, we have developed subsections 4.2 and 4.3 in section 4.

		\section{Basic Concepts of Rough Set Theory}   
		We go over some basic notions and definitions which can be found in   \cite{PZ}.   Let $V$ be a set and $E$ be an equivalence relation on $V$. Also, let the set of equivalence classes of $E$ be denoted by $V/E.$  If a set $X \subseteq V$,  is equal to a union of some of the  equivalence classes of $E$ then it is called \textit{E-exact}. Otherwise, $X$ is called \textit{E-inexact} or \textit{E-rough} or simply \textit{rough} when the equivalence relation under consideration is clear from the context. Inexact sets  may be approximated by  two exact sets, the lower and upper  approximations as is respectively defined below:
		\begin{center} 
			$ \low_E (X)	 = \{ x \in V\ | \ [x]_E \subseteq X \} $, 
		\end{center}
		\begin{equation} \label{eq:1}
		\upp_E (X) = \{  x\in V \ | \ [x]_E  \cap X \neq \emptyset \}. 
		\end{equation}
		Equivalently,  we may use a granule based definition instead of a pointwise based definition:
		\begin{center}
			$ \low_E (X)= \bigcup \{ Y\subseteq V/E \ | \ Y \subseteq X  \}, $
		\end{center}
		\begin{equation} \label{eq:2}
		\upp_E (X)= \bigcup \{  Y \subseteq V/E \ | \ Y \cap X \neq \emptyset  \}. 
		\end{equation}

		The pair $(V, E)$ is called an \textit{approximation space}. 	It may be the case that  several equivalence relations are considered over a set. Let $\mathscr{E} $ being a family of equivalence relations over a finite non-empty set $V$.  The pair,  $K = (V, \mathscr{E} )$ is called \textit{knowledge base}.  If $ \mathscr{P} \subseteq \mathscr{E} $, we recall that $\bigcap \mathscr{P}$ is alao an equivalence relation. The intersection of all equivalence relations belonging to $\mathscr{P}$ is denoted by $IND(\mathscr{P}) = \bigcap \mathscr{P}$. This is called  the \textit{indiscernibility relation} over $\mathscr{P}$.\\  
		
		\noindent For two equivalence relations $E_1$ and $E_2,$ we say that $E_1 \leq E_2$ iff $E_1 \subseteq E_2.$ In this case we say that $E_1$ is \emph{finer} than $E_2$ or that $E_2$ is \emph{coarser} than $E_1.$\\
		
		\noindent We recall from \cite{TA} some definitions about different types of roughly definable and undefinable sets. Let $V$ be a set then for $X \subseteq V:$
		
		\vspace{2mm}
		\noindent(i) If $ \low_E (X)\neq \emptyset$ and $\upp_E (X) \neq V,$ then $X$ is called \textit{roughly E-definable.}
		
		\vspace{2mm}
		\noindent (ii) If $ \low_E (X)= \emptyset$ and $\upp_E (X) \neq V,$ then $X$ is called \emph{internally roughly E-undefinable.}
		
		\vspace{2mm}
		\noindent (iii) If $ \low_E (X) \neq \emptyset$ and $\upp_E (X) = V,$ then $X$ is called \emph{externally roughly E-definable.}
		
		\vspace{2mm}
		\noindent (iv) If $ \low_E (X) = \emptyset$ and $\upp_E (X) = V,$ then $X$ is called \emph{totally roughly E-definable.}

		\subsection{Properties Satisfied by Rough Sets} 
		
		In \cite{PZ}, Pawlak enlists the following properties of lower and upper approximations. Let $V$ be a non-empty finite set and $X, Y \subseteq V$. Then, the following holds:\\

		\begin{onehalfspace}
			\noindent $1) \low_E (X) \subseteq X \subseteq \upp_E (X),$
			
			\vspace{2mm}
			\noindent $2) \low_E (\emptyset) = \upp_E (\emptyset) = \emptyset; \quad \low_E (V) = \upp_E (V)= V,$
			
			\vspace{2mm}
			\noindent $ 3) \upp_E (X \cup Y) = \upp_E (X) \cup \upp_E (Y),$
			
			\vspace{2mm}
			\noindent $ 4) \low_E (X \cap Y) = \low_E (X) \cap \low_E (Y),$
			
			\vspace{2mm}
			\noindent $ 5)  X \subseteq Y \Rightarrow \low_E (X) \subseteq \low_E (Y),$
			
			\vspace{2mm}
			\noindent $ 6)  X \subseteq Y \Rightarrow \upp_E (X) \subseteq \upp_E (Y),$
			
			\vspace{2mm}
			\noindent $ 7) \low_E (X\cup Y) \supseteq \low_E (X) \cup \low_E (Y),$
			
			\vspace{2mm}
			\noindent $ 8) \upp_E (X\cap Y) \supseteq \upp_E (X) \cap \upp_E (Y),$
			
			\vspace{2mm}
			\noindent $ 9) \low_E (-X) = -\upp_E (X),$
			
			\vspace{2mm}
			\noindent $ 10) \upp_E (-X) = -\low_E (X),$
			
			\vspace{2mm}
			\noindent $ 11) \low_E (\low_E (X)) = \upp_E (\low_E (X)) = \low_E (X),$
			
			\vspace{2mm}
			\noindent $ 12) \upp_E (\upp_E (X)) = \low_E (\upp_E (X)) = \upp_E (X).$
		\end{onehalfspace}
		
	\subsection{ Dependencies in Knowledge Bases}  
		
		A database can also be represented  in the form of a matrix of \emph{Objects} versus \emph{Attributes} with the entry corresponding to an object attribute pair being assigned the value of that attribute which the object satisfies. From the following definition, we can form equivalence relations on the objects for each given attribute. The set of these equivalence relations can then be used as our knowledge base.
		
		\begin{definition}
			Let $V$ be the set of objects and $P$ be the set of attributes.	Let $ Q \subseteq P$, then V/Q is an equivalence relation on $U$ induced by Q as follows: $x\sim_Qy $ iff $q(x) = q(y)$ for every $q \in Q.$
			
		\end{definition}
		
		\noindent To construct decision rules, we may fix two sets of attributes  called \emph{condition attributes} and  \emph{decision attributes} denoted by ${C}$ and ${D}$ respectively. We then use these to make predictions about the decision attributes based on the condition attributes. \emph{Decision rules} are made by recording which values of decision attributes correlate with which values of condition attributes. As this information can be of considerable size, one of the primary goals of rough set theory is to reduce the number of decision attributes without losing predictive power. A minimal set of attributes which contains the same predictive power as the full set of decision attributes is called a \emph{reduct} with respect to \emph{D}.  
		
		\vspace{2mm}
		\noindent Next we give the definition of the positive region of one equivalence relation with respect to another.

		\begin{definition}
			Let $C$ and $  D $ be equivalence relations on a finite non-empty set $V.$ The \emph{positive region} of the partition $D$ with respect to $C$ is given by, 
			\begin{equation}
			POS_C(D) = \bigcup\limits_{X \in D} \low_C (X),
			\end{equation} 	
			
		\end{definition}

		\begin{definition}
			It is said that  \emph{$ {D} $ depends on ${C} $ in a degree ${k}$}, where 	$0 \leq k \leq 1$, denoted by $C \Rightarrow_{k} D,$ if
			\begin{equation}
			k = \gamma(C,D) = \frac{|POS_C(D)|}{|V|}.
			\end{equation}
			  
		\end{definition} 
		
		\noindent If $k = 1,$ then we say that   $C$ depends totally on $D$ i.e $C \Rightarrow D.$

		\vspace{2mm}
		\noindent Let $K_1 = (V, \mathscr{P})$ and $K_2 = (V, \mathscr{Q}).$  We  now give the definitions dependency of knowledge and then partial dependency. We say that \emph{$\mathscr{Q}$ depends on $\mathscr{P}$} i.e. $ \mathscr{P} \Rightarrow \mathscr{Q}$ iff $IND(\mathscr{P}) \subseteq IND(\mathscr{Q}).$   
		
		\begin{proposition}
			$I_{IND(\mathscr{P})}  \leq I_{IND(\mathscr{Q})}$ iff $ \mathscr{P} \Rightarrow \mathscr{Q}.$	
		\end{proposition}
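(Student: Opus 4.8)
The plan is to collapse the two-family statement to a comparison of two single equivalence relations, and then to show that the assignment $E \mapsto I_E$ sending a relation to its associated approximation operator is an order embedding, i.e. it both preserves and reflects the inclusion order. First I would set $E = IND(\mathscr{P})$ and $F = IND(\mathscr{Q})$. Since the intersection of a family of equivalence relations is again an equivalence relation (as recalled above for $\bigcap \mathscr{P}$), both $E$ and $F$ are equivalence relations on $V$. By the definition of dependency, the assertion $\mathscr{P} \Rightarrow \mathscr{Q}$ is \emph{literally} $E \subseteq F$, that is, $E \leq F$ in the finer/coarser order. Hence the whole proposition reduces to the single equivalence $I_E \leq I_F \iff E \subseteq F$, and the multi-relation bookkeeping plays no further role.

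For the forward direction ($E \subseteq F \Rightarrow I_E \leq I_F$) I would argue by monotonicity of the approximation in its underlying relation. From $E \subseteq F$ one obtains $[x]_E \subseteq [x]_F$ for every $x \in V$, and feeding this containment of granules into the defining formula (\ref{eq:1}) (and its lower counterpart) yields the pointwise operator inequality $I_E(X) \subseteq I_F(X)$ for all $X \subseteq V$. This is the relational analogue of the monotonicity properties (5) and (6) in the list above and amounts to a routine containment check.

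The converse ($I_E \leq I_F \Rightarrow E \subseteq F$) is where the content lies, and I expect it to be the main obstacle. The key observation is that the operator $I_E$ already determines $E$ through its action on a canonical family of test sets: evaluating on singletons gives $\upp_E(\{x\}) = \{\,y \in V : x \in [y]_E\,\} = [x]_E$, while dually $[x]_E = V \setminus \low_E(V \setminus \{x\})$. Specialising the assumed operator inequality to these test sets then reads off $[x]_E \subseteq [x]_F$ for every $x$, which is exactly $E \subseteq F$. The delicate points are selecting a test family that genuinely separates the two relations and matching the direction of the order: for the upper operator the comparison on $\{x\}$ preserves the order directly, whereas for the lower (interior) operator one must pass to the complements $V \setminus \{x\}$, where the containment flips, in order to land on the stated direction. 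Chaining the two implications with the reduction of the first paragraph then yields $I_{IND(\mathscr{P})} \leq I_{IND(\mathscr{Q})} \iff \mathscr{P} \Rightarrow \mathscr{Q}$, which completes the argument.
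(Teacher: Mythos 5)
The paper itself offers no proof of this proposition: it is recalled background from Pawlak, stated in Section~2.2 without argument, and under Pawlak's reading of the notation it is essentially definitional, since the paper defines $\mathscr{P} \Rightarrow \mathscr{Q}$ to mean $IND(\mathscr{P}) \subseteq IND(\mathscr{Q})$ and defines $\leq$ as $\subseteq$. Your reduction to two single relations $E = IND(\mathscr{P})$, $F = IND(\mathscr{Q})$, and your observation that the right-hand side is literally $E \subseteq F$, is therefore exactly the right first move; nothing more would be needed if $I_E$ is read as the indiscernibility relation (or its partition) itself.

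The genuine gap is in your operator-theoretic elaboration. In the forward direction you claim that $E \subseteq F$ yields $I_E(X) \subseteq I_F(X)$ for the upper operator \emph{and} ``its lower counterpart,'' citing properties (5) and (6). Those properties assert monotonicity in the set argument $X$, not in the underlying relation, and the analogy fails: the lower approximation is \emph{antitone} in the relation. If $E \subseteq F$ then $[x]_E \subseteq [x]_F$, so $[x]_F \subseteq X$ implies $[x]_E \subseteq X$, giving $\low_F(X) \subseteq \low_E(X)$ --- the reverse inclusion. Concretely, take $V = \{a,b\}$, $E$ the identity relation, $F = V \times V$, $X = \{a\}$: then $\low_E(X) = \{a\} \not\subseteq \emptyset = \low_F(X)$. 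Your converse paragraph silently concedes this: the complement identity $[x]_E = V \setminus \low_E(V \setminus \{x\})$ turns the assumption $\low_E \leq \low_F$ into $[x]_F \subseteq [x]_E$, i.e.\ $F \subseteq E$, which is the \emph{opposite} of the ``stated direction'' you claim to land on and contradicts your own forward claim. So your argument is coherent only under the reading $I_E = \upp_E$, where $E \mapsto \upp_E$ is indeed an order embedding via $\upp_E(\{x\}) = [x]_E$; under the reading $I_E = \low_E$ the equivalence you set out to prove is false as stated (it holds with $\mathscr{Q} \Rightarrow \mathscr{P}$ on the right instead). To repair the write-up, either fix the interpretation of $I$ as the upper approximation operator, or prove the antitone statement for the lower one --- but do not assert both monotonicities at once.
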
  
		
		\begin{proposition}
			$POS_{IND(\mathscr{P})}IND((\mathscr{Q})) = U$ iff $\mathscr{P} \Rightarrow \mathscr{Q}.$ 
		\end{proposition}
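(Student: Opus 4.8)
The plan is to reduce the statement to a purely combinatorial fact about refinement of the two induced partitions. Write $C = IND(\mathscr{P})$ and $D = IND(\mathscr{Q})$; both are equivalence relations on $V$, and by definition $\mathscr{P} \Rightarrow \mathscr{Q}$ means exactly $C \subseteq D$. Using the pointwise form of the lower approximation in \eqref{eq:1}, the positive region unfolds to $POS_C(D) = \bigcup_{X \in V/D} \{ v \in V \mid [v]_C \subseteq X \}$. So the content of the proposition is that this union exhausts $V$ precisely when $C$ refines $D$. First I would isolate the elementary lemma that $C \subseteq D$ holds if and only if every $C$-class sits inside a single $D$-class, equivalently $[v]_C \subseteq [v]_D$ for every $v \in V$; both directions follow immediately from the definition of equivalence class (if $C \subseteq D$ and $u \in [v]_C$ then $(u,v) \in C \subseteq D$ so $u \in [v]_D$, and conversely if $[v]_C \subseteq [v]_D$ for all $v$ then any $(u,v)\in C$ forces $u \in [v]_D$, i.e. $(u,v) \in D$).

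For the forward direction, assume $\mathscr{P} \Rightarrow \mathscr{Q}$, so $C \subseteq D$. Fix any $v \in V$; by the lemma $[v]_C \subseteq [v]_D$, and since $[v]_D \in V/D$ this shows $v \in \low_C([v]_D) \subseteq POS_C(D)$. As $v$ was arbitrary and the reverse inclusion $POS_C(D) \subseteq V$ is trivial, we get $POS_C(D) = V$.

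For the converse, assume $POS_C(D) = V$. Take any pair $(u,v) \in C$; I want $(u,v) \in D$. Since $v \in POS_C(D)$, there is some $D$-class $X \in V/D$ with $[v]_C \subseteq X$. Because $(u,v) \in C$ we have $u \in [v]_C \subseteq X$, and likewise $v \in X$, so $u$ and $v$ lie in the same $D$-class and hence $(u,v) \in D$. This gives $C \subseteq D$, i.e. $\mathscr{P} \Rightarrow \mathscr{Q}$.

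The argument is essentially routine, so there is no serious obstacle; the only point demanding care is the bookkeeping between the granule-based and pointwise descriptions — specifically the observation that the $D$-class $X$ witnessing $v \in POS_C(D)$ must be $[v]_D$ itself (since $v \in X$ and $X \in V/D$), which is what licenses reading off $[v]_C \subseteq [v]_D$. One could alternatively route the proof through Proposition 2.4, but I expect the direct unfolding above to be cleaner and self-contained.
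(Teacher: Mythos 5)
Your proof is correct: the unfolding of $POS_C(D)$ via the pointwise lower approximation, the refinement lemma ($C \subseteq D$ iff $[v]_C \subseteq [v]_D$ for all $v$), and both directions of the equivalence are all sound, including the key observation in the converse that the witnessing $D$-class containing $[v]_C$ must be $[v]_D$ itself. For the record, the paper itself gives no proof of this proposition --- it is stated (together with the proposition preceding it) as recalled background from Pawlak's framework in \cite{PZ} --- so there is no authorial argument to compare against; your direct, self-contained verification is exactly the standard one and settles the claim.
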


		\noindent Otherwise,  in the above case, $\gamma(IND(\mathscr{P}), IND(\mathscr{Q})) = k <1 $ and then we say that $\mathscr{P} \Rightarrow_k \mathscr{Q}.$\\

	\section{Properties of Successive Approximations} 
	
	Next, we see that in general, approximating with respect to $E_1$ and then approximating the result with respect to $E_2$ gives a different result than if we had done it in the reverse order. That is, successive approximations do not commute. We consider some properties of successive approximations below. 
	
		\begin{proposition}
			Let $V$ be a set and $E_1$ and $E_2$ be equivalence relations on $V.$ Then for $Y \in \mathscr{P}(V),$  the following holds,\\

			\noindent 1. $\textbf{l}_{E_1}(\textbf{l}_{E_2}(Y)) = Z \ \, \, \, \, \not\Rightarrow \  \textbf{l}_{E_2}(\textbf{l}_{E_1}(Y)) = Z, $\\
			2. $\textbf{u}_{E_1}(\textbf{u}_{E_2}(Y)) = Z \ \not\Rightarrow \  \textbf{u}_{E_2}(\textbf{u}_{E_1}(Y)) = Z,$\\
			3. $\textbf{u}_{E_1}(\textbf{l}_{E_2}(Y)) = Z \ \  \not\Rightarrow \  \textbf{l}_{E_2}(\textbf{u}_{E_1}(Y)) = Z,$\\
			4. $\textbf{l}_{E_1}(\textbf{u}_{E_2}(Y)) = Z \ \  \not\Rightarrow \ \textbf{u}_{E_2}(\textbf{l}_{E_1}(Y)) = Z.$\\

		\end{proposition}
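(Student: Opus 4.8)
The four assertions are all non-implications of the form ``inner-then-outer $= Z$ does not force outer-then-inner $= Z$,'' so the natural strategy is to refute each one by an explicit counterexample. It suffices, for each of the four cases, to exhibit a finite set $V$, two equivalence relations $E_1, E_2$, and a subset $Y \subseteq V$ for which the two orders of composition yield different sets. The plan is to fix a single small approximation space rich enough to witness all four failures simultaneously, and then to select an appropriate $Y$ for each case.

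Concretely, I would take $V = \{1,2,3\}$ with $E_1$ inducing the partition $\{\{1,2\},\{3\}\}$ and $E_2$ inducing the partition $\{\{1\},\{2,3\}\}$; these are non-comparable, with neither finer than the other, which is precisely the feature that lets the compositions disagree. For each case I would then compute both composite operators directly from the pointwise definitions in (\ref{eq:1}). For instance, for case~1 one takes $Y = \{2,3\}$: applying $\low_{E_2}$ first returns $\{2,3\}$ and then $\low_{E_1}$ returns $\{3\}$, whereas applying $\low_{E_1}$ first returns $\{3\}$ and then $\low_{E_2}$ returns $\emptyset$, so with $Z = \{3\}$ the two orders differ. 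Analogous witnesses settle the remaining three cases, for example $Y = \{1\}$ for case~2 (yielding $\{1,2\}$ versus $V$), $Y = \{1,2\}$ for case~3 (yielding $\{1,2\}$ versus $\{1\}$), and $Y = \{2\}$ for case~4 (yielding $\{3\}$ versus $\emptyset$).

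The only real subtlety, and the step demanding the most care, is the choice of $Y$ in each case. Many subsets are fixed points of one operator or collapse to $\emptyset$ or $V$, in which case both orders of composition agree and no contradiction arises; the witness must be chosen so that the intermediate set produced by the inner operator is genuinely misaligned with the granules of the outer relation. In practice I would verify each candidate $Y$ by a short direct computation of the relevant equivalence classes against the set in question, discarding any $Y$ that gives equality on both sides and retaining one witness per case. Since a single counterexample refutes an implication, once the four witnesses are checked against (\ref{eq:1}) no further argument is required, and the asserted non-commutativity in all four parts follows.
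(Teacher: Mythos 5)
Your proposal is correct and takes essentially the same approach as the paper: both refute each implication by exhibiting explicit counterexamples over a single small approximation space with two incomparable equivalence relations (the paper uses $V=\{a,b,c,d\}$ with partitions $\{\{a,b,c\},\{d\}\}$ and $\{\{a,b\},\{c,d\}\}$, while you use a three-element set). All four of your witnesses check out under direct computation, so no further argument is needed.
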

		
		\begin{proof}
			We give a counterexample to illustrate the proposition.  Let $V = \{a, b, c, d\}$ and  let $E_1 = \{  \{a, b,c\}, \{  d \}    \}$ and $E_2 = \{  \{a, b\}, \{c, d\}    \}.$
			
			\noindent To illustrate 1., let $Y = \{  a, b, c\}$. Then $\textbf{\textit{l}}_{E_1}(\textbf{\textit{l}}_{E_2}(Y)) = \emptyset$ while $\textbf{\textit{l}}_{E_2}(\textbf{\textit{l}}_{E_1}(Y)) = \{ a, b\}.$
			
			\vspace{2mm}
			\noindent For 2., let $Y = \{a\}$. Then $\textbf{\textit{u}}_{E_1}(\textbf{\textit{u}}_{E_2}(Y)) = \{ a, b, c  \}$ while $\textbf{\textit{u}}_{E_2}(\textbf{\textit{u}}_{E_1}(Y)) = \{a, b, c, d\}.$
			
			\vspace{2mm}
			\noindent For 3., let $Y = \{a, b\}$. Then $\textbf{\textit{u}}_{E_1}(\textbf{\textit{l}}_{E_2}(Y)) = \{ a, b, c  \}$ while $\textbf{\textit{l}}_{E_2}(\textbf{\textit{u}}_{E_1}(Y)) = \{a, b\}.$
			
			\vspace{2mm} 
			\noindent For 4., let $Y = \{a, b, c\}$. Then $\textbf{\textit{l}}_{E_1}(\textbf{\textit{u}}_{E_2}(Y)) = \emptyset $ while $\textbf{\textit{u}}_{E_2}(\textbf{\textit{l}}_{E_1}(Y)) = \{a, b, c, d\}.$
		\end{proof}
	
	\noindent From Properties 1), 5) and 6) of lower and upper approximations in Section 2.1, we immediately get that,\\
	(i) $\textbf{\textit{l}}_{E_1}(\textbf{\textit{l}}_{E_2}(Y)) \subseteq \textbf{\textit{l}}_{E_2}(Y), 
	\textbf{\textit{u}}_{E_1}(\textbf{\textit{u}}_{E_2}(Y)) \supseteq \textbf{\textit{u}}_{E_2}(Y),$\\
	$\textcolor{white}{ggl} \textbf{\textit{u}}_{E_1}(\textbf{\textit{l}}_{E_2}(Y)) \supseteq \textbf{\textit{l}}_{E_2}(Y) \ \text{and}
	\ \textbf{\textit{l}}_{E_1}(\textbf{\textit{u}}_{E_2}(Y)) \subseteq \textbf{\textit{u}}_{E_2}(Y).$\\
	\hfill\\      
	\noindent If we do not know anything more about the relationship between $E_1$ and $E_2$ then nothing further may be deduced. However, if for example  we know that $E_1 \leq E_2$ then the successive approximations are constrained as follows:
	
	\begin{proposition}
		If $E_1 \leq E_2$  then the following properties hold;
		
		\vspace{2mm}
		\noindent (ii) $\textbf{\textit{l}}_{E_1}(\textbf{\textit{l}}_{E_2}(Y)) = \textbf{\textit{l}}_{E_2}(Y) $\\
		(iii) $ \textbf{\textit{l}}_{E_2}(\textbf{\textit{l}}_{E_1}(Y))  \subseteq \textbf{\textit{l}}_{E_2}(Y) $ \\
		(iv) $ \textbf{\textit{u}}_{E_1}(\textbf{\textit{u}}_{E_2}(Y)) \supseteq \textbf{\textit{u}}_{E_1}(Y) $\\
		(v) $ \textbf{\textit{u}}_{E_2}(\textbf{\textit{u}}_{E_1}(Y)) = \textbf{\textit{u}}_{E_2}(Y) $
	\end{proposition}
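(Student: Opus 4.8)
The plan is to reduce everything to two already-available facts — the inclusion $\low_E(X)\subseteq X\subseteq\upp_E(X)$ and monotonicity (Properties 1, 5, 6 of Section 2.1) — together with one new observation about how the operators behave when one relation refines another. First I would record this refinement lemma: if $E_1\leq E_2$, i.e.\ $E_1\subseteq E_2$ as subsets of $V\times V$, then $[x]_{E_1}\subseteq[x]_{E_2}$ for every $x\in V$, since $(x,y)\in E_1$ implies $(x,y)\in E_2$. Reading off the pointwise definitions in \eqref{eq:1}, this gives, for every $X\subseteq V$,
\[
\low_{E_2}(X)\subseteq\low_{E_1}(X)\quad\text{and}\quad\upp_{E_1}(X)\subseteq\upp_{E_2}(X):
\]
a smaller granule makes the ``$\subseteq X$'' condition for the lower operator easier to meet and the ``$\cap X\neq\emptyset$'' condition for the upper operator harder, which flips the direction of the two inclusions.

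With this in hand, (iii) and (iv) are immediate and in fact require no hypothesis on $E_1,E_2$. For (iii), Property 1 gives $\low_{E_1}(Y)\subseteq Y$, and applying the monotone operator $\low_{E_2}$ (Property 5) yields $\low_{E_2}(\low_{E_1}(Y))\subseteq\low_{E_2}(Y)$. Dually, for (iv), $Y\subseteq\upp_{E_2}(Y)$ by Property 1, and monotonicity of $\upp_{E_1}$ (Property 6) gives $\upp_{E_1}(Y)\subseteq\upp_{E_1}(\upp_{E_2}(Y))$.

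The substance lies in the two equalities (ii) and (v), where the refinement lemma is essential. For (ii), I would apply the lemma with $X=\low_{E_2}(Y)$ to get $\low_{E_2}(\low_{E_2}(Y))\subseteq\low_{E_1}(\low_{E_2}(Y))$; by idempotence (Property 11) the left side is exactly $\low_{E_2}(Y)$, so $\low_{E_2}(Y)\subseteq\low_{E_1}(\low_{E_2}(Y))$, while Property 1 gives the reverse inclusion $\low_{E_1}(\low_{E_2}(Y))\subseteq\low_{E_2}(Y)$; combining the two yields equality. Conceptually this is the statement that $\low_{E_2}(Y)$ is $E_2$-exact, hence (since each $E_2$-class is a union of $E_1$-classes) also $E_1$-exact, and the lower operator fixes exact sets. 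Statement (v) is the exact dual: either run the symmetric argument using $\upp_{E_1}(Y)\subseteq\upp_{E_2}(Y)$ from the lemma together with idempotence (Property 12) and Property 1, or derive it directly from (ii) by complementation via Properties 9 and 10.

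I do not anticipate a genuine obstacle here; the only point requiring care is getting the two inclusions in the refinement lemma pointing the correct way, since the lower and upper operators reverse direction under the passage from $E_2$ to the finer relation $E_1$. Once the lemma is stated with the correct orientation, each of the four parts is a one- or two-line consequence of the Section 2.1 properties.
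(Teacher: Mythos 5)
Your main line of argument is correct, and it supplies precisely what the paper omits: the paper's entire proof of this proposition is the single word ``Straightforward,'' so there is no official argument to compare against. Your refinement lemma ($E_1\leq E_2$ gives $[x]_{E_1}\subseteq[x]_{E_2}$, hence $\low_{E_2}(X)\subseteq\low_{E_1}(X)$ and $\upp_{E_1}(X)\subseteq\upp_{E_2}(X)$ for all $X$), combined with Properties 1, 5, 6, 11, 12, does prove all four parts: (iii) and (iv) by monotonicity alone, (ii) by the exactness/fixed-point argument, and (v) by the sandwich $\upp_{E_2}(Y)\subseteq\upp_{E_2}(\upp_{E_1}(Y))\subseteq\upp_{E_2}(\upp_{E_2}(Y))=\upp_{E_2}(Y)$, which is your first suggested route.

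However, your second suggested route for (v) --- ``derive it directly from (ii) by complementation via Properties 9 and 10'' --- is wrong, as is the claim that (v) is ``the exact dual'' of (ii). Complementation preserves the order of composition: applying Property 9 twice gives $\low_{E_1}(\low_{E_2}(-Y))=-\upp_{E_1}(\upp_{E_2}(Y))$, so the complementation-dual of (ii) is $\upp_{E_1}(\upp_{E_2}(Y))=\upp_{E_2}(Y)$, with the finer relation still on the \emph{outside}; it is not (v), where the coarser relation is outside. (The true dual of (v) is $\low_{E_2}(\low_{E_1}(Y))=\low_{E_2}(Y)$, i.e.\ statement (iii) strengthened to an equality --- which incidentally also holds under $E_1\leq E_2$, so the paper's (iii) is not sharp.) The distinction matters structurally: (ii) holds because $\low_{E_2}(Y)$ is $E_2$-exact, hence $E_1$-exact, hence fixed by $\low_{E_1}$; but in (v) the outer operator $\upp_{E_2}$ acts on $\upp_{E_1}(Y)$, which is only $E_1$-exact, so no fixed-point argument is available and the monotonicity sandwich is genuinely needed. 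Keep your first route for (v) and delete the complementation alternative.
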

	
	\begin{proof}
		Straightforward.
	\end{proof}
	
	\begin{proposition}
		Let $V$ be a finite non-empty set and let $E_1$ and $E_2$ be equivalence relations on $V.$ Let $ x\in V.$ Then 	$\textbf{\textit{l}}_{E_1}( \textbf{\textit{u}}_{E_2}(\{x\}))  \subseteq POS_{E_1}({E_2}).$
	\end{proposition}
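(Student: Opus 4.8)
The plan is to reduce both sides to explicit descriptions in terms of equivalence classes, after which the inclusion becomes immediate. First I would observe that the upper approximation of a singleton collapses to a single equivalence class. By the pointwise definition \eqref{eq:1}, $\upp_{E_2}(\{x\}) = \{ y \in V \mid [y]_{E_2} \cap \{x\} \neq \emptyset \}$, and the condition $[y]_{E_2} \cap \{x\} \neq \emptyset$ holds precisely when $x \in [y]_{E_2}$, i.e. when $y \mathrel{E_2} x$. Hence $\upp_{E_2}(\{x\})$ is exactly the equivalence class $[x]_{E_2}$.

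Next I would unfold the right-hand side using the definition of the positive region: $POS_{E_1}(E_2) = \bigcup_{X \in V/E_2} \low_{E_1}(X)$, a union of lower approximations taken over \emph{all} the $E_2$-equivalence classes. The key point is then simply that $[x]_{E_2}$ is one of these classes. Combining this with the first step, $\low_{E_1}(\upp_{E_2}(\{x\})) = \low_{E_1}([x]_{E_2})$ is one of the terms appearing in the union defining $POS_{E_1}(E_2)$, and a single term of a union is contained in the whole union. This yields $\low_{E_1}(\upp_{E_2}(\{x\})) \subseteq POS_{E_1}(E_2)$ directly.

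There is no genuine obstacle here; the entire content sits in the first observation that the upper approximation of a singleton equals its own $E_2$-class, after which everything follows from the definition of $POS$. I would add the remark that the inclusion is in general strict, since $POS_{E_1}(E_2)$ aggregates lower approximations over every $E_2$-class, whereas the left-hand side only sees the single class containing $x$; so equality would require the remaining classes to contribute nothing new, which need not happen.
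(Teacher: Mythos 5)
Your proof is correct. The paper states this proposition without any proof, so there is no argument to compare against; your reduction --- first showing $\upp_{E_2}(\{x\}) = [x]_{E_2}$ via the pointwise definition, then observing that $\low_{E_1}([x]_{E_2})$ is literally one of the terms in the union $POS_{E_1}(E_2) = \bigcup_{X \in V/E_2} \low_{E_1}(X)$ --- is the canonical argument the author evidently intended as immediate. Your closing remark about strictness is also consistent with the paper's subsequent corollaries, which recover all of $POS_{E_1}(E_2)$ only after taking the union of $\low_{E_1}(\upp_{E_2}(\{x\}))$ over all $x \in V$, not from a single $x$.
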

	
	\begin{corollary}
	Let $V$ be a finite non-empty set and let $E_1$ and $E_2$ be equivalence relations on $V.$	Let $ X \subseteq V.$ Then $POS_{E_1}({E_2}) \cap X \subseteq \bigcup \limits_{x \in X} \textbf{\textit{l}}_{E_1}( \textbf{\textit{u}}_{E_2}(\{x\})).$ 
	\end{corollary}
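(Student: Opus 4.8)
The plan is to argue elementwise, showing that every element of $POS_{E_1}(E_2) \cap X$ lies in the union on the right-hand side. So I would fix an arbitrary $x \in POS_{E_1}(E_2) \cap X$ and try to locate a single index under which $x$ appears in $\bigcup_{x' \in X} \low_{E_1}(\upp_{E_2}(\{x'\}))$. The natural candidate is the index $x$ itself, which is a legitimate choice precisely because $x \in X$.

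Two preliminary computations, both direct unwindings of the definitions in Section 2, would set this up. First, from the pointwise definition of the upper approximation, $\upp_{E_2}(\{x\}) = \{ y \in V : [y]_{E_2} \cap \{x\} \neq \emptyset \}$, and since $x \in [y]_{E_2} \iff y \in [x]_{E_2}$, this set is exactly the class $[x]_{E_2}$. Second, I would translate membership in the positive region into a statement about classes: $x \in POS_{E_1}(E_2) = \bigcup_{Y \in V/E_2} \low_{E_1}(Y)$ means that $[x]_{E_1} \subseteq Y$ for some $E_2$-class $Y$; because $x$ belongs to its own class $[x]_{E_1}$, that class $Y$ must contain $x$ and hence equals $[x]_{E_2}$, giving $[x]_{E_1} \subseteq [x]_{E_2}$.

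Combining the two facts, $[x]_{E_1} \subseteq [x]_{E_2} = \upp_{E_2}(\{x\})$, so by the definition of the lower approximation $x \in \low_{E_1}(\upp_{E_2}(\{x\}))$. Since $x \in X$, the set $\low_{E_1}(\upp_{E_2}(\{x\}))$ occurs as the $x$-indexed term of the union, and therefore $x$ lies in the union. As $x$ was arbitrary, the claimed inclusion follows.

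There is no substantive obstacle here; the only point requiring care is the bookkeeping of the index set of the union, where one must use that the chosen element $x$ is itself a legitimate index (i.e.\ $x \in X$). It is worth noting that the preceding proposition supplies the reverse containment $\low_{E_1}(\upp_{E_2}(\{x\})) \subseteq POS_{E_1}(E_2)$ for each individual term, which is the opposite direction and so is not what drives this argument; the corollary is instead a partial converse obtained by restricting attention to the indices lying in $X$.
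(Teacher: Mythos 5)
Your proof is correct. The paper states this corollary without any proof, so there is no argument of its own to compare against; your elementwise verification --- identifying $\upp_{E_2}(\{x\}) = [x]_{E_2}$, translating $x \in POS_{E_1}(E_2)$ into $[x]_{E_1} \subseteq [x]_{E_2}$, and then using $x$ itself as the index in the union --- is the natural way to fill that gap, and every step unwinds the definitions correctly. Your closing observation is also accurate and worth keeping: despite the label ``corollary,'' the statement does not follow formally from the preceding proposition, which supplies only the reverse containment $\low_{E_1}(\upp_{E_2}(\{x\})) \subseteq POS_{E_1}(E_2)$; the two directions are complementary and only combine in the subsequent corollary to yield the equality $POS_{E_1}(E_2) = \bigcup_{x \in V} \low_{E_1}(\upp_{E_2}(\{x\}))$.
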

	
	\begin{corollary}
	Let $V$ be a finite non-empty set and let $E_1$ and $E_2$ be equivalence relations on $V.$	 Then $POS_{E_1}({E_2}) = \bigcup \limits_{x \in V} \textbf{\textit{l}}_{E_1}( \textbf{\textit{u}}_{E_2}(\{x\})).$ 
	\end{corollary}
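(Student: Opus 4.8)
The plan is to prove the set equality $POS_{E_1}(E_2) = \bigcup_{x \in V} \low_{E_1}(\upp_{E_2}(\{x\}))$ by establishing the two inclusions separately, and to do so by simply assembling the two results that immediately precede it, since together they already deliver both directions.

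For the inclusion $\supseteq$, I would note that the preceding Corollary holds for every $X \subseteq V$ and specialise it to the case $X = V$. This gives $POS_{E_1}(E_2) \cap V \subseteq \bigcup_{x \in V} \low_{E_1}(\upp_{E_2}(\{x\}))$. Since $POS_{E_1}(E_2) \subseteq V$ by construction, the left-hand side collapses to $POS_{E_1}(E_2)$ itself, yielding $POS_{E_1}(E_2) \subseteq \bigcup_{x \in V} \low_{E_1}(\upp_{E_2}(\{x\}))$.

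For the reverse inclusion $\subseteq$, I would appeal to the preceding Proposition, which asserts $\low_{E_1}(\upp_{E_2}(\{x\})) \subseteq POS_{E_1}(E_2)$ for each fixed $x \in V$. Taking the union over all $x \in V$ and using that a union of subsets of a fixed set remains a subset of that set, I obtain $\bigcup_{x \in V} \low_{E_1}(\upp_{E_2}(\{x\})) \subseteq POS_{E_1}(E_2)$. Combining the two inclusions gives the claimed equality.

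There is essentially no hard step once this scaffolding is in place; the only point worth flagging is that the specialisation $X = V$ in the first direction works precisely because $POS_{E_1}(E_2)$ is a subset of $V$, so intersecting with $V$ is harmless. If one preferred a self-contained argument, the conceptual crux would be the identity $\upp_{E_2}(\{x\}) = [x]_{E_2}$: as $x$ ranges over $V$, the sets $\upp_{E_2}(\{x\})$ range exactly over the equivalence classes $V/E_2$ (with repetition), so $\bigcup_{x \in V} \low_{E_1}(\upp_{E_2}(\{x\})) = \bigcup_{X \in V/E_2} \low_{E_1}(X)$, which is the definition of $POS_{E_1}(E_2)$. This alternative route makes transparent why the equality must hold.
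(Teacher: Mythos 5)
Your proof is correct and follows exactly the route the paper intends: the corollary is stated immediately after Proposition 3.3 and Corollary 3.1 precisely so that it follows by taking the union over $x \in V$ in the proposition (giving $\supseteq$ of $POS_{E_1}(E_2)$ over the union, i.e.\ the $\subseteq$ direction for the union) and specialising the corollary to $X = V$ (giving the other inclusion). Your closing observation that $\upp_{E_2}(\{x\}) = [x]_{E_2}$, so the union ranges exactly over $\bigcup_{X \in V/E_2} \low_{E_1}(X) = POS_{E_1}(E_2)$, is also the underlying reason all three statements hold.
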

	
	\begin{center} 
		
		\includegraphics[scale = .5]{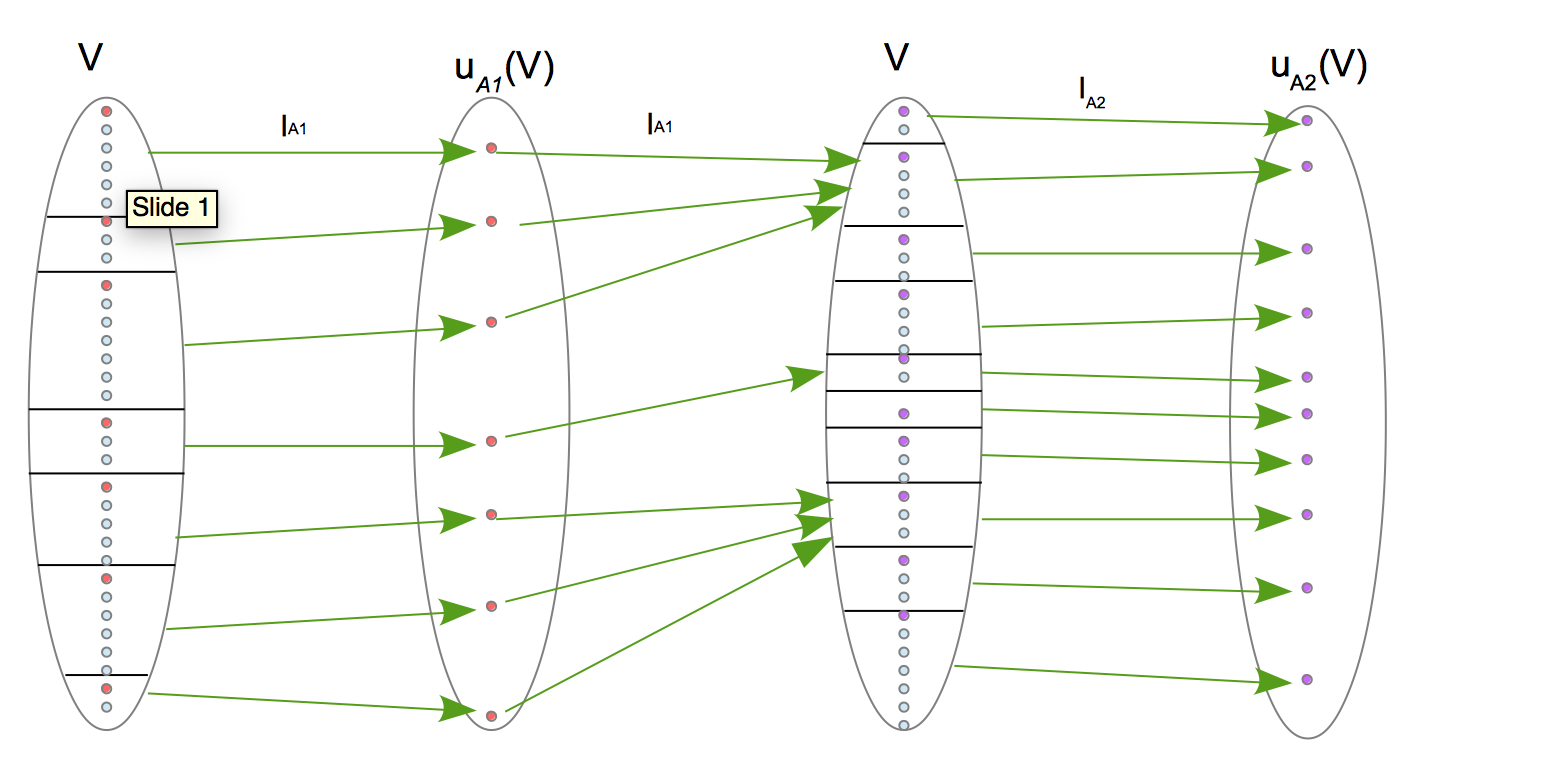}
		\textcolor{white}{aaaaaaa} Figure 4.1:  Illustrates that successive approximations get more coarse when iterated.
		
	\end{center}  
	
	\begin{proposition}
		Let $G$ be a graph with vertex set $V$ and $E$ an equivalence relation on $V.$ Let $S_E$ be the set containing equivalence classes of $E$ and taking the closure under union. Let $ F : \mathscr{P}(V) \rightarrow S_E$ be such that $F(X) = \bigcup\limits_{x \in X}  [x]_E$ and let  $Id: S_E \rightarrow \mathscr{P}(V) $ be such that $Id(Y) = Y$ Then $F$ and $Id$ form a Galois connection.
	\end{proposition}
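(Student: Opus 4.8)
The plan is to recognise this as a monotone Galois connection (an adjunction) between the posets $(\mathscr{P}(V), \subseteq)$ and $(S_E, \subseteq)$, with $F$ the lower adjoint and $Id$ the upper adjoint, and then to verify the single defining biconditional
\[
F(X) \subseteq Y \iff X \subseteq Id(Y)
\]
for all $X \in \mathscr{P}(V)$ and all $Y \in S_E$. First I would dispose of two bookkeeping points. The map $F$ genuinely lands in $S_E$, since $F(X) = \bigcup_{x \in X} [x]_E$ is by construction a union of equivalence classes and hence $E$-exact (with $F(\emptyset) = \emptyset$ arising as the empty union), and $Id$ trivially sends $S_E$ into $\mathscr{P}(V)$. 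I would also record the identification $F(X) = \upp_E(X)$, obtained by comparing $\bigcup_{x\in X}[x]_E$ with the pointwise definition in \eqref{eq:1}, so that the properties of Section 2.1 become available; both $F$ and $Id$ are then manifestly monotone with respect to $\subseteq$.

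Next I would prove the biconditional in both directions. For the forward implication, assume $F(X) \subseteq Y$. Each $x \in X$ satisfies $x \in [x]_E \subseteq \bigcup_{x\in X}[x]_E = F(X)$, which is just Property~1 in the form $X \subseteq \upp_E(X)$; composing with the hypothesis gives $X \subseteq F(X) \subseteq Y = Id(Y)$. For the backward implication, assume $X \subseteq Id(Y) = Y$. The crucial observation is that $Y \in S_E$ is $E$-exact, hence a union of equivalence classes, so for every $x \in X \subseteq Y$ the whole class $[x]_E$ is contained in $Y$. Taking the union over $x \in X$ then yields $F(X) = \bigcup_{x\in X}[x]_E \subseteq Y$, completing the equivalence.

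The main and essentially only obstacle is the backward direction, and it rests entirely on the exactness of the members of $S_E$: the implication $x \in Y \Rightarrow [x]_E \subseteq Y$ fails for an arbitrary subset $Y$ but holds precisely because $Y$ is a union of $E$-classes. This is exactly why the codomain of $F$ (equivalently, the domain of $Id$) must be $S_E$ rather than all of $\mathscr{P}(V)$. Once the biconditional is in hand the Galois connection follows at once; equivalently, I could phrase the conclusion through the unit and counit, noting $X \subseteq Id(F(X)) = \upp_E(X)$ and $F(Id(Y)) = \upp_E(Y) = Y$ for $E$-exact $Y$, which also exhibits $Id$ as a full embedding realising $S_E$ as the image of the closure operator $\upp_E$.
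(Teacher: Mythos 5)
Your proof is correct and follows essentially the same route as the paper: establish monotonicity of $F$ and $Id$, then verify the defining biconditional $F(X) \subseteq Y \iff X \subseteq Id(Y)$. In fact your write-up is more complete than the paper's own, which justifies only the forward direction via $X \subseteq F(X)$, whereas you make explicit the backward direction and correctly identify that it hinges on the $E$-exactness of $Y \in S_E$ (i.e., $x \in Y \Rightarrow [x]_E \subseteq Y$), the point the paper leaves unstated.
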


	\begin{proof}
		
		It is clear from  definitions that both $F$ and $Id$ are  monotone.  We need that for $X \in \mathscr{P}(V)$ and $Y \in S_E, \ F(X) \subseteq Y$ iff $ X \subseteq Id(Y).$  This is also the case because from the definition of $F,$ we have the $X \subseteq F(X).$  
	\end{proof}  
	 
	\noindent \textbf{Remark 3.1}. Successive approximations break the Galois structure of single approximations. We can imagine that single approximations are a kind of sorting on the domain of a structure. We partition objects in the domain into boxes and in each box there is a special member (the lower or upper approximation) which identifies/represents any member in its respective box.  We may say that objects are approximated by  their representative. 
	
	For successive approximations, we have two different sortings of the same domain. Objects are sorted by the first approximation and only their representative members are then sorted by the second approximation. An object is then placed in the box that its representative member is assigned to in the second approximation, even though the object itself may be placed differently if the second approximation alone was used. Hence the errors `add' in some sense.  In Figure 4.1, the final grouping as seen by following successive arrows, may be coarser than both the first and second approximations used singly. An interesting problem is how to correct/minimise these errors. It is also interesting how much of the individual approximations can be reconstructed from knowledge of the combined approximation. In the next section we will investigate this problem.

	\section{Decomposing $L_2L_1$ Approximations} 
	
	What if we knew that a system contained exactly two successive approximations? Would we be able to decompose them into its individual components? Before getting into what we can do and what information can be extracted, we start with an example to illustrate this.\\ 
	
	\noindent \textbf{Notation}: Let $V$ be a finite set. Let a function representing the output of a subset of $V$ when acted on by a lower approximation operator $L_1$ followed by a lower approximation operator $L_2,$ based on the equivalence relations $E_1$ and $E_2$ respectively, be denoted by $L_2L_1$ where $L_2L_1(X) = L_2(L_1(X))$ and $L_2L_1: \mathscr{P}(V) \rightarrow \mathscr{P}(V).$ Similarly, other combinations of successive lower and upper approximations examined will be denoted by $U_2U_1,\ L_2U_1, U_2L_1$  which denotes successive upper approximations, an upper approximation followed by a lower approximation and a lower approximation followed by an upper approximation respectively. 
	
	Sometimes when we know that the approximations are based on equivalence relations $P$ and $Q$ we may use the subscripts to indicate this for example; $L_QL_P.$
	
	Lastly, if for a defined $L_2L_1$ operator there exists a pair of equivalence relation solutions $E_1$ and $E_2$ which are such that the lower approximation operators $L_1$ and $L_2$ are based on them respectively, then we may denote this solution by the pair $(E_1, E_2).$ Also, $(E_1, E_2)$ can be said to produce or generate the operators based on them.\\

	\noindent \textbf{Example 4.1}\\
	Let $V = \{a, b, c, d, e\}.$ Let a function representing the output of a subset of $V$ when acted on by a lower approximation operator $L_1$ followed by a lower approximation operator $L_2,$ which are induced by equivalence relations $E_1$ and $E_2$ respectively and let  $L_2L_1: \mathscr{P}(V) \rightarrow \mathscr{P}(V)$ be as follows:\\
	\hfill\\
	$L_2L_1(\{ \emptyset\}) = \emptyset     \qquad \qquad \qquad \qquad \qquad \qquad \qquad  L_2L_1(\{a, b, c, d, e\}) =  \{a, b, c, d, e\} $ \\
	$L_2L_1(\{a\}) = \emptyset    \qquad \qquad \qquad \qquad \qquad \qquad \qquad  L_2L_1(\{b, c, d, e\})= \{e\}  $\\ 
	$L_2L_1(\{b\}) = \emptyset     \qquad \qquad \qquad \qquad \qquad \qquad \qquad  L_2L_1(\{a, c, d, e\})  = \{c, d, e\} $\\ 
	$L_2L_1(\{c\}) =  \emptyset    \qquad \qquad \qquad \qquad \qquad \qquad \qquad   L_2L_1(\{a, b, d, e\})  = \{e\} $\\  
	$L_2L_1(\{d\}) = \emptyset     \qquad \qquad \qquad \qquad \qquad \qquad \qquad   L_2L_1(\{a, b, c, e\}) = \{a, b \} $\\ 
	$L_2L_1(\{e\}) = \emptyset     \qquad \qquad \qquad \qquad \qquad \qquad \qquad   L_2L_1(\{a, b, c, d\}) = \{a, b\} $\\   
	$L_2L_1(\{a, b\}) = \emptyset     \qquad \qquad \qquad \qquad \qquad \qquad \quad L_2L_1(\{c, d, e\}) = \{e\}  $\\ 
	$L_2L_1(\{a, c\}) =  \emptyset    \qquad \qquad \qquad \qquad \qquad \qquad \quad  L_2L_1(\{b, d, e\}) =  \{e\} $\\ 
	$L_2L_1(\{a, d\}) = \emptyset     \qquad \qquad \qquad \qquad \qquad \qquad \quad  L_2L_1(\{b, c, e\}) = \emptyset $\\ 
	$L_2L_1(\{a, e\}) = \emptyset     \qquad \qquad \qquad \qquad \qquad \qquad \quad  L_2L_1(\{b, c, d\}) = \emptyset  $\\ 
	$L_2L_1(\{b, c\}) = \emptyset     \qquad \qquad \qquad \qquad \qquad \qquad \quad  L_2L_1(\{b, c, d\}) = \emptyset $\\ 
	$L_2L_1(\{b, d\}) = \emptyset     \qquad \qquad \qquad \qquad \qquad \qquad \quad  L_2L_1(\{a, d, e\}) = \{e\}  $\\ 
	$L_2L_1(\{b, e\}) =  \emptyset    \qquad \qquad \qquad \qquad \qquad \qquad \quad  L_2L_1(\{a, c, d\}) = \emptyset $\\ 
	$L_2L_1(\{c, d\}) = \emptyset     \qquad \qquad \qquad \qquad \qquad \qquad \quad  L_2L_1(\{a, b, e\}) = \emptyset $\\ 
	$L_2L_1(\{c, e\}) = \emptyset     \qquad \qquad \qquad \qquad \qquad \qquad \quad  L_2L_1(\{a, b, d\}) = \emptyset $\\ 
	$L_2L_1(\{d, e\}) = \{ e\}     \qquad \qquad \qquad \qquad \qquad \qquad  L_2L_1(\{a, b, c\}) = \{a, b\} $\\ 
	\hfill\\
	We will now try to reconstruct $E_1$ and $E_2.$  The minimal sets in the output are $\{e\}$ and $\{a,b\}.$ Clearly, these are either equivalence classes of $E_2$ or a union of two or more equivalence classes of $E_2.$ Since $\{e\}$ is a singleton it is must be an equivalence class of $E_2.$  So far we have partially reconstructed $E_2$ and it is equal to or finer than $\{ \{a, b\}, \{c, d\}, \{e\}     \}.$ 
	
	Let us consider the pre-images of these sets in  $L_2L_1$ to try to reconstruct $E_1.$ Now, $L_2L_1^{-1}(\{e\}) = \{  \{ d,e \}, \{a, d, e \},  \{ b, d, e\}, \{ c, d, e\}, \{ a, b, d, e\}, \{ b, c,  d, e\} \}.$ We see that this set has a minimum with respect to containment and it is $\{d, e\}.$ Hence  either $\{d,e\}$ is an equivalence class of $E_1$ or both of $\{d\}$ and $\{e\}$ are equivalent classes of $E_1.$ 
	
	Similarly, $L_2L_1^{-1}(\{a, b\}) = \{  \{ a, b, c \}, \{a, b, c, e\}, \{a, b, c, d\} \}.$ We, see that this set has a minimum which is $\{a, b, c\}$ hence either this set is an equivalence class or is a union of equivalence classes in $E_1.$ Now, $L_2L_1^{-1}(\{ c, d, e\}) = \{\{a, c, d, e\}\}.$ Hence, $\{a, c, d, e\}$ also consists of a union of equivalence classes of $E_1.$ Since we know from above that $\{d,e\}$ consists of the union of one or more equivalence classes of $E_1$, this means that $\{a,c\}$ consists of the union of one or more equivalence classes of $E_1$ and $\{b\}$ is an equivalence class of $E_1.$ So far we have that $E_1$ is equal to or finer than $\{ \{a,c\}, \{b\}, \{d, e\}      \} .$
	
	Now we consider if $\{a, c\} \in E_1$ or both of $\{a\}$ and $\{c\}$ are in $E_1.$ We can rule out the latter for suppose it was the case. Then $L_2L_1(\{a, b\})$ would be equal to $\{a, b\}$ since we already have that $\{b\} \in E_1$ and $\{a, b\}$ is the union of equivalence classes in $E_2.$ Since this is not the case we get that $\{a,c\} \in E_1.$ By a similar analysis of $L_2L_1(\{a,c, d\}) \neq \{c, d\}$ but only $\emptyset$ we get that $\{ d, e\} \in E_1.$  Hence, we have fully constructed $E_1$ and $E_1 = \{  \{a, c\}, \{b\}, \{d, e\} \}.$ 
	
	With $E_1$ constructed we can complete the construction of $E_2.$ Recall, that we have that $\{a, b\}$ is a union of equivalence classes in $E_2.$ Suppose that $\{a\} \in E_2.$ Then $L_2L_1(\{a,c\}) $ would be equal to $\{a\}$ since $\{a, c \} \in E_1$ but from the given list we see that it is not. Hence, $\{a, b\} \in E_2.$ Similarly, we recall that $\{c, d \}$ is a union of equivalence classes in $E_2.$ Suppose that $\{d\} \in E_2.$ Then $L_2L_1(\{d, e\})$ would be equal to $\{d, e\}$ since $\{d,e\} \in E_1$ but it is only equal to $\{e\}.$ Hence, $\{c, d\} \in E_2.$ We have now fully reconstructed $E_2$ and $E_2 = \{ \{a, b\}, \{ c, d\}, \{ e\}  \}.$
	\hfill\\
	
	\noindent The next example shows that we cannot always uniquely decompose successive approximations.\\
	\hfill\\ 
	\textbf{Example 4.2}
	
	\noindent  Let $V = \{a, b, c, d\}$ and let $E_1 = \{ \{ a, b\}, \{ c, d\} \}, \ E_2 = \{ \{a, c \}, \{ b, d\} \}$ and $E_3 = \{ \{a, d \}, \{ b, c\} \}.$  We see that $L_1L_2(X) = L_1L_3(X) = \emptyset$ for all $X \in (\mathscr{P}(V) -V)$ and $L_1L_2(X) = L_1L_3(X) = V$ when $X = V.$ Then for all $X \subseteq U,$ $ L_1L_2(X) = L_1L_3(X)$ even though $E_2 \neq E_3.$ Hence, if we are given a double, lower successive approximation on $\mathscr{P}(V)$ which outputs $\emptyset$ for all $X \in (\mathscr{P}(V) -V)$ and $V$ for $X = V$ then we would be unable to say that it was uniquely produced by $L_1L_2$ or $L_1L_3.$ \\
	\hfill\\
	\noindent In the following we start to build to picture of what conditions are needed for the existence of unique solutions for  double, successive approximations. \\
	
	\begin{proposition}
		Let $V$ be a set with equivalence relations $E_1$ and $E_2$  on $V.$ If for each $[x]_{E_1} \in E_1,$ $[x]_{E_1}$ is such that $L_2([x]_{E_1}) = \emptyset$ i.e $[x]_{E_1}$ is either internally $E_2$--undefinable or totally $E_2$--undefinable, then the corresponding approximation operator, $L_2L_1$ on  $\mathscr{P}(V)$ will be such that $L_2L_1([x]_{E_1}) = \emptyset.$ 
	\end{proposition}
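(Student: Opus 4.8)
The plan is to observe that the inner and outer operators decouple cleanly on an $E_1$-equivalence class: applying $L_1$ to such a class returns the class unchanged, so that $L_2L_1([x]_{E_1})$ collapses to a single application of $L_2$, at which point the hypothesis finishes the argument.

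First I would establish that $L_1([x]_{E_1}) = [x]_{E_1}$ for every class $[x]_{E_1} \in E_1$. Using the pointwise definition $\low_{E_1}(X) = \{ y \in V \mid [y]_{E_1} \subseteq X \}$ from equation (\ref{eq:1}), I would take $X = [x]_{E_1}$ and note that for any $y \in V$ the classes $[y]_{E_1}$ and $[x]_{E_1}$ are, by the partition property of an equivalence relation, either equal or disjoint. Hence $[y]_{E_1} \subseteq [x]_{E_1}$ holds precisely when $[y]_{E_1} = [x]_{E_1}$, i.e. precisely when $y \in [x]_{E_1}$. This yields $L_1([x]_{E_1}) = [x]_{E_1}$; equivalently, one may simply invoke that each class is $E_1$-exact and is therefore fixed by its own lower approximation.

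Next I would substitute this into the composite. By the definition of the successive operator, $L_2L_1([x]_{E_1}) = L_2(L_1([x]_{E_1})) = L_2([x]_{E_1})$, and the hypothesis states exactly that $L_2([x]_{E_1}) = \emptyset$ for each such class. Here I would also record that the two phrasings of the hypothesis coincide: by conditions (ii) and (iv) of Section 2, a set with empty lower approximation is either internally $E_2$-undefinable (when its upper approximation is not $V$) or totally $E_2$-undefinable (when its upper approximation equals $V$), so ``$L_2([x]_{E_1}) = \emptyset$'' is merely a restatement of the stated disjunction. Combining the two displayed equalities gives $L_2L_1([x]_{E_1}) = \emptyset$, as required.

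I do not expect a genuine obstacle here: the entire content of the statement is the fixing behaviour of a lower approximation on an exact set, and once that is isolated the conclusion is immediate. The only point meriting care is to make explicit that the inner $L_1$ does not shrink the class — a property that fails for an arbitrary subset but holds for exact sets — so that the outer $L_2$ is applied to the full class $[x]_{E_1}$ rather than to some proper subset of it.
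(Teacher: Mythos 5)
Your proof is correct and complete. It does not, however, match the paper's own proof, because the paper's proof as written is erroneous: it asserts $L_1([x]_{E_1}) = \emptyset$ and then concludes $L_2L_1([x]_{E_1}) = L_2(\emptyset) = \emptyset$. But an equivalence class of $E_1$ is non-empty and $E_1$-exact, so its lower approximation under $L_1$ is the class itself, $L_1([x]_{E_1}) = [x]_{E_1}$ --- precisely the fact you establish via the partition property (or via property 11 of Section 2.1). Symptomatically, the paper's two-line argument never invokes the hypothesis $L_2([x]_{E_1}) = \emptyset$ at all, whereas your argument uses it at exactly the point where it is needed, after collapsing $L_2L_1([x]_{E_1})$ to $L_2([x]_{E_1})$. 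The closing remark of your write-up --- that the inner approximation fixes, rather than shrinks or empties, an exact set --- identifies exactly the step at which the paper's proof goes astray, so your version should be regarded as the correct repair of the intended argument rather than a different route to the same conclusion.
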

	
	\begin{proof}
		Here, $L_1([x]) = \emptyset.$ Hence $L_2L_1([x]) = L_2(\emptyset) = \emptyset.$
	\end{proof}   
	
	\noindent  \textbf {Remark 4.1} We note that the union of $E$-undefinable sets is not necessarily $E$-undefinable. Consider Example 4.2. Here, $\{a, b\}$ and  $\{c, d\}$ are both  totally $E_2$--undefinable but their union, $\{a, b, c, d\}$ is $E_2$--definable.\\

	\noindent \textbf{Algorithm 4.1: For Partial Decomposition of Double Successive Lower Approximations}
	
	\vspace{4mm}
	\noindent Let $V$ be a finite set. Given an input of a  fully defined operator $L_2L_1 : \mathscr{P}(V) \rightarrow \mathscr{P}(V),$  if a solution exists, we can produce a solution $(S, R)$, i.e. where $L_1$ and $L_2$ are the lower approximation operators of equivalence relations $S$ and $R$ respectively,  by performing the following steps:
	
	\vspace{4mm}

	\vspace{3mm}
	\noindent \textbf{1}. Let $J$ be the set of output sets of the given $L_2L_1$ operator. We form the relation $R$ to be such that for $a, b \in V,$ $a \sim_R b \iff (a \in X \iff b\in X)$ for any $X \in J.$ It is clear that $R$ is an equivalence relation.

	\vspace{3mm}
	\noindent \textbf{2}. For each $Y \neq \emptyset$ output set, find the minimum pre-image set with respect to $\subseteq,$ $Y_m,$ such that $L_2L_1(Y_m) = Y$. Collect all these minimum sets in a set $K.$  If there is any non-empty output set $Y,$ such that the minimum $Y_m$ does not exist, then there is no solution to the given operator and we return 0 signifying that no solution exists.
	
	\vspace{3mm}
	\noindent \textbf{3}. Using $K,$ we form the relation $S$ to be such that for $a, b \in V,$ $a \sim_S b \iff (a \in X \iff b\in X)$ for any $X \in K.$ It is clear that $S$ is an equivalence relation.
	
	\vspace{3mm}
	\noindent \textbf{4}. Form the operator $L_RL_S : \mathscr{P}(V) \rightarrow \mathscr{P}(V)$ generated by $(S, R).$ If for all $X \in \mathscr{P}(V)$, the given $L_2L_1$ operator is such that $L_2L_1(X) = L_RL_S(X),$ then $(S, R)$ is a solution proving that a solution exists (note that it is not necessarily unique). Return $(S, R).$ Otherwise, discard $S$ and $R$ and return 0 signifying that no solution exists.\\
	
	\noindent We will  prove the claims in step 2 and step 4 in this section. Next, we prove step 2.

	\begin{proposition} \label{l4}
		Let $V$ be a set and $L_2L_1 : \mathscr{P}(V) \rightarrow \mathscr{P}(V)$ be a given fully defined operator on $\mathscr{P}(V).$ If for $Y \neq \emptyset$ in the range of $L_2L_1,$  there does not exist a minimum set $Y_m,$ with respect to $\subseteq$ such that $L_2L_1(Y_m) = Y,$  then there is no equivalence relation pair solution to the given operator.
		
	\end{proposition}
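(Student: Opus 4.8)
The plan is to prove the contrapositive: I assume that an equivalence relation pair $(E_1, E_2)$ solves the given operator—so that $L_1 = \low_{E_1}$, $L_2 = \low_{E_2}$ and $L_2L_1(X) = \low_{E_2}(\low_{E_1}(X))$ for every $X$—and then show that \emph{every} nonempty $Y$ in the range of $L_2L_1$ does admit a minimum preimage $Y_m$. The engine of the argument is the observation that, in contrast to the failure of union-preservation that drives the non-uniqueness seen in Example 4.2, lower approximation operators preserve intersection.

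First I would record that by Property $4)$ of Section 2.1, $\low_E(A \cap B) = \low_E(A) \cap \low_E(B)$ for any equivalence relation $E$. Applying this to $E_1$ and then to $E_2$ gives, for all $X_1, X_2 \in \mathscr{P}(V)$,
\[
L_2L_1(X_1 \cap X_2) = \low_{E_2}\big(\low_{E_1}(X_1) \cap \low_{E_1}(X_2)\big) = L_2L_1(X_1) \cap L_2L_1(X_2),
\]
so that $L_2L_1$ preserves binary, and hence by induction all finite, intersections.

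Next I would fix a nonempty $Y$ in the range and consider the preimage family $\mathcal{P}_Y = \{\, X \in \mathscr{P}(V) : L_2L_1(X) = Y \,\}$, which is nonempty precisely because $Y$ lies in the range. Since $V$ is finite, $\mathscr{P}(V)$ is finite and so is $\mathcal{P}_Y$; set $X^* = \bigcap_{X \in \mathcal{P}_Y} X$. Applying finite intersection-preservation across this family yields $L_2L_1(X^*) = \bigcap_{X \in \mathcal{P}_Y} L_2L_1(X) = Y$, so $X^* \in \mathcal{P}_Y$, while $X^*$ is contained in every member of $\mathcal{P}_Y$ by construction and is therefore the sought minimum $Y_m$. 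This contradicts the hypothesis that no such minimum exists, and the contrapositive delivers the claim.

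The only real content is the first step—recognising that the correct structural invariant to exploit is intersection-preservation rather than union-preservation. Indeed $L_2L_1$ generally fails to preserve unions, which is exactly why the minimal (not maximal) preimage is the natural object here and why the algorithm in Step 2 searches for a minimum. Once intersection-preservation is in hand, finiteness of $\mathscr{P}(V)$ makes the minimum preimage automatic, so I do not expect any genuine obstacle beyond stating this cleanly.
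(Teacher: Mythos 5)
Your proof is correct, but it takes a genuinely different route from the paper's. The paper argues by contradiction on minimal preimages: assuming a solution $(E_1,E_2)$ exists and no minimum preimage of $Y$ does, finiteness gives two distinct minimal preimages $Y_s, Y_t$; each must be a union of $E_1$-classes and must contain $Y$ (by contractiveness of lower approximations), so some $E_1$-class lies in $Y_s$ but is disjoint from $Y_t$, and minimality forces that class to meet $Y$, producing a point $x \in Y$ with $x \notin Y_t$, contradicting $Y \subseteq Y_t$. You instead exploit multiplicativity: since each $L_i$ satisfies Property 4), the composite $L_2L_1$ preserves binary and hence finite intersections, so the intersection $X^*$ of the (finite, nonempty) preimage family of $Y$ is itself a preimage, and is by construction the minimum. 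Your argument is shorter and more general --- it needs neither contractiveness nor any analysis of equivalence classes, and it shows that \emph{every} fiber of \emph{any} finite-intersection-preserving operator on a finite powerset is closed under intersection and so has a minimum; it also constructs $Y_m$ explicitly rather than merely showing non-existence is contradictory. What the paper's more hands-on argument buys is structural information that is reused downstream: the fact that minimal preimages are unions of $E_1$-classes containing $Y$ is exactly the observation that drives Lemma \ref{p15} and the reconstruction steps of Algorithm 4.1, whereas your intersection argument, while cleaner for this proposition in isolation, does not by itself expose that structure.
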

	
	\begin{proof}
		Suppose to get a contradiction that a solution $(E_1, E_2)$ exists and  there is no minimum set $Y_m$ such that $L_2L_1(Y_m) =Y.$ Since $V$ is finite, then there exists at least two minimal sets $Y_k$ and $Y_l$ say, such that $L_2L_1(Y_s) =Y$ and $L_2L_1(Y_t) =Y.$	Since $Y_s$ and $Y_t$ are minimal sets with the same output after two successive lower approximations, then $Y_s$ and $Y_t$ must each be unions of equivalence classes in $E_1$ which contain $Y.$ Since they are unequal, then WLOG there exists $[a]_{E_1} \in E_1$ which is such that $[a]_{E_1} \in Y_s$ but $[a]_{E_1} \not\in Y_t.$ Since $Y_s$ is minimal, then $[a]_{E_1} \cap Y \neq \emptyset$ (or else $L_2L_1(Y_s) = L_2L_1(Y_s - [a]_{E_1}) = Y$). So let $ x \in [a]_{E_1} \cap Y.$ Then $Y_t \not\supseteq x$ which contradicts $Y_t \supseteq Y.$
	\end{proof} 
	
	\noindent We now prove three lemmas on the way to proving the claim in step 4. 
	
	\begin{lemma} \label{p15}
		Let $V$ be a set and $L_2L_1 : \mathscr{P}(V) \rightarrow \mathscr{P}(V)$ be a given fully defined operator on $\mathscr{P}(V).$   Let $R$ and $S$ be equivalence relations defined on $V$ as constructed in the previous algorithm.  If $(E_1, E_2)$ is a solution of  $L_2L_1$ then $E_2\leq R$ and $E_1 \leq S.$
		
	\end{lemma}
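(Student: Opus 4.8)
The plan is to prove the two containments separately, since they have quite different flavours, and in each case to reduce the statement to an exactness property. Recall that $R$ is defined so that $a \sim_R b$ iff $a$ and $b$ lie in exactly the same members of $J$ (the set of outputs of $L_2L_1$), and $S$ is defined analogously from $K$ (the collection of minimal pre-images $Y_m$). Since $E_1 \le S$ and $E_2 \le R$ mean $E_1$ is finer than $S$ and $E_2$ finer than $R$, in both cases it suffices to show that the separating sets used to build $R$ and $S$ are unions of equivalence classes of the appropriate relation: a point and its class-mate cannot then be told apart by any of those sets.

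First I would dispatch $E_2 \le R$, which is essentially immediate. Every $X \in J$ has the form $X = L_2(L_1(Z)) = \low_{E_2}(L_1(Z))$ for some $Z \in \mathscr{P}(V)$, so by the granule-based definition (\ref{eq:2}) $X$ is $E_2$-exact, i.e.\ a union of $E_2$-classes. Hence if $a \sim_{E_2} b$ and $a \in X$, then $[a]_{E_2} \subseteq X$ and so $b \in X$; by symmetry $a$ and $b$ belong to exactly the same members of $J$, which gives $a \sim_R b$. Therefore $E_2 \subseteq R$, that is $E_2 \le R$. The only fact used here is that a lower approximation is always a union of classes.

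The substance lies in $E_1 \le S$, and the crucial step is to show that each minimal pre-image $Y_m \in K$ is $E_1$-exact. I would argue by contradiction: suppose some class $[a]_{E_1}$ meets $Y_m$ without being contained in it. Because $L_1 = \low_{E_1}$ depends only on which $E_1$-classes lie wholly inside its argument, deleting $[a]_{E_1} \cap Y_m$ from $Y_m$ cannot change $L_1$: any class $C \neq [a]_{E_1}$ is disjoint from $[a]_{E_1}$, so $C \subseteq Y_m \iff C \subseteq Y_m \setminus [a]_{E_1}$, while $[a]_{E_1}$ is contained in neither set. Thus $L_1(Y_m \setminus [a]_{E_1}) = L_1(Y_m)$, whence $L_2L_1(Y_m \setminus [a]_{E_1}) = L_2L_1(Y_m) = Y$; but $Y_m \setminus [a]_{E_1} \subsetneq Y_m$, contradicting the minimality of $Y_m$ (which is well defined as a genuine minimum by Proposition~\ref{l4}, since a solution exists). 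Hence every $Y_m \in K$ is $E_1$-exact, and then the membership argument used for $R$ applies verbatim: if $a \sim_{E_1} b$ and $a \in Y_m$ then $[a]_{E_1} \subseteq Y_m$ so $b \in Y_m$, and symmetrically, giving $a \sim_S b$ and therefore $E_1 \le S$.

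I expect the main obstacle to be precisely this exactness claim for minimal pre-images: it is where the minimality selected in step~2 of the algorithm is genuinely used, and the one point needing care is verifying that stripping a partially covered $E_1$-class leaves $L_1$, and hence $L_2L_1$, unchanged. Once that is established, the rest of the lemma is routine, and the $E_2 \le R$ direction is a one-line consequence of the exactness of lower approximations.
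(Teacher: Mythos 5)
Your proof is correct and takes essentially the same approach as the paper's: you establish $E_2 \le R$ from the fact that output sets, being lower approximations under $L_2$, are unions of $E_2$-classes, and you establish $E_1 \le S$ by showing minimal pre-images are $E_1$-exact via a deletion argument (removing points of a partially covered $E_1$-class leaves $L_1$, hence $L_2L_1$, unchanged, contradicting minimality). The only cosmetic difference is that the paper deletes all points lying in partially covered classes at once, where you delete one class's worth at a time.
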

	
	\begin{proof}
		We first prove 	$E_2\leq R.$ Now the output set of a non-empty set in $\mathscr{P}(V)$ is obtained by first applying the lower approximation $L_1$ to it and and after applying the lower approximation, $L_2$ to it. Hence by definition of $L_2,$ the non-empty output sets are unions of equivalence classes of the equivalence relation which corresponds to $L_2.$   If $a$ is in an output set but $b$ is not then they cannot belong to the same equivalence class of  $E_2$ i.e. $a \not\sim_R b$ implies that $a \not\sim_{E_2} b.$ Hence $E_2\leq R. $ 
		
		Similarly, the minimal pre-image, $X$ say,  of a non-empty output set which is a  union of equivalence classes in $E_2,$ has to be a union of equivalence classes in $E_1.$ For suppose it was not. Let $Y = \{y \in X\ | \ [y]_{E_1} \not\subseteq X\}.$ By assumption, $Y \neq \emptyset.$ Then $L_1 (X) = L_1 (X - Y).$ Hence $L_2L_1(X) = L_2L_1 (X - Y)$ but $|X - Y| < |X|$ contradicting minimality of $X$. Therefore, if $a$ belongs to the minimal pre-image of a non-empty output set but $b$ does not belong to it, then $a$ and $b$ cannot belong to the same equivalence class in $E_1$ i.e. $a \not\sim_S b$ which implies that $a \not\sim_{E_1} b.$ Hence $E_1\leq S.$  
	\end{proof}

	\noindent \textbf{Remark 4.2} The above lemma implies that for a given $L_2L_1$ operator on $\mathscr{P}(V)$ for a set $V,$ that the pair of solutions given by the algorithm $S$ and $R$   for corresponding to $L_1$ and $L_2$ are the coarsest solutions for $E_1$ and $E_2$ which are compatible with the given, fully defined $L_2L_1$ operator. That is, for any other possible solutions, $E_1$ and $E_2$ to the given $L_2L_1$ operator, $E_1 \leq S$ and $E_2 \leq R.$\\

	\begin{lemma} \label{l2}
		Let $V$ be a finite set and  $L_2L_1 : \mathscr{P}(V) \rightarrow \mathscr{P}(V)$ be a fully defined operator. If there exists  equivalence pair solutions to the operator $(E_1, E_2)$ which is such that there exists  $[x]_{E_2}, [y]_{E_2} \in E_2,$ such that $[x]_{E_2} \neq [y]_{E_2}$ and $\textbf{u}_{E_1}([x]_{E_2}) = \textbf{u}_{E_1}([y]_{E_2}), $ then there exists another solution, $(E_1, H_2)$, where $H_2$ is an equivalence relation formed from $E_2$ by combining $[x]_{E_2}$ and $[y]_{E_2}$ and all other elements are as in $E_2.$ That is, $[x]_{E_2} \cup [y]_{E_2} = [z] \in H_2$ and if $[w] \in E_2$ such that $[w] \neq [x]_{E_2}$ and $[w]_{E_2} \neq [y]_{E_2},$ then $[w] \in H_2.$
		
	\end{lemma}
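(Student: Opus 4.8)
The plan is to prove that the two successive operators $\low_{H_2}\low_{E_1}$ and $\low_{E_2}\low_{E_1}$ agree on every $X \in \mathscr{P}(V)$. Since $(E_1,E_2)$ is a solution of the given operator, i.e. $L_2L_1 = \low_{E_2}\low_{E_1}$, establishing this agreement shows $L_2L_1 = \low_{H_2}\low_{E_1}$ as well, so that $(E_1,H_2)$ is also a solution. The first move is to reduce to $E_1$-exact inputs: for any $X$, the set $A := \low_{E_1}(X)$ is a union of $E_1$-equivalence classes, hence it suffices to prove $\low_{H_2}(A) = \low_{E_2}(A)$ for every $E_1$-exact $A$, since then $\low_{H_2}(\low_{E_1}(X)) = \low_{E_2}(\low_{E_1}(X))$ for all $X$.

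Next I would localise the comparison. The relation $H_2$ agrees with $E_2$ on every class other than $[x]_{E_2}$ and $[y]_{E_2}$, which are merged into $[z] = [x]_{E_2}\cup[y]_{E_2}$, so $\low_{H_2}(A)$ and $\low_{E_2}(A)$ can differ only inside $[z]$. The contribution of $[z]$ to $\low_{H_2}(A)$ is all of $[z]$ when $[z]\subseteq A$ and nothing otherwise, whereas the contribution of $[x]_{E_2}$ (resp. $[y]_{E_2}$) to $\low_{E_2}(A)$ is that class precisely when it lies in $A$. Comparing the two, the sides coincide when both classes lie in $A$ and when neither does; the only way they could differ is if exactly one of $[x]_{E_2},[y]_{E_2}$ is contained in $A$ while the other is not.

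The crux is ruling out this last case using the hypothesis $\upp_{E_1}([x]_{E_2}) = \upp_{E_1}([y]_{E_2})$. The key observation is that $E_1$-exactness upgrades containment of a single $E_2$-class to containment of its $E_1$-upper approximation: if $[x]_{E_2}\subseteq A$, then each $c\in[x]_{E_2}$ lies in $A$, and because $A$ is a union of $E_1$-classes, $[c]_{E_1}\subseteq A$; taking the union over $c\in[x]_{E_2}$ gives $\upp_{E_1}([x]_{E_2}) = \bigcup_{c\in[x]_{E_2}}[c]_{E_1}\subseteq A$. By hypothesis $\upp_{E_1}([y]_{E_2}) = \upp_{E_1}([x]_{E_2})\subseteq A$, and since $[y]_{E_2}\subseteq\upp_{E_1}([y]_{E_2})$ we obtain $[y]_{E_2}\subseteq A$. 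By the symmetric argument, $[x]_{E_2}\subseteq A \iff [y]_{E_2}\subseteq A$ for every $E_1$-exact $A$, so the discrepant case never arises and $\low_{H_2}(A)=\low_{E_2}(A)$.

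I expect the main obstacle to be exactly this third step: recognising that $E_1$-exactness of $A$ converts the weak statement ``$[x]_{E_2}\subseteq A$'' into the stronger ``$\upp_{E_1}([x]_{E_2})\subseteq A$'', which is what allows the upper-approximation hypothesis to transfer containment from one merged class to the other. The reduction to $E_1$-exact inputs and the localisation of the comparison to $[z]$ are routine once one notes that lower approximations are always exact and that $H_2$ differs from $E_2$ only on the merged pair.
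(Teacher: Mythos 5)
Your proof is correct and follows essentially the same route as the paper's: localise the possible discrepancy to the merged classes $[x]_{E_2},[y]_{E_2}$, then use $\textbf{\textit{u}}_{E_1}([x]_{E_2}) = \textbf{\textit{u}}_{E_1}([y]_{E_2})$ to show that an $E_1$-exact set contains one class iff it contains the other, so the asymmetric case never occurs. Your write-up is in fact more rigorous than the paper's, making explicit the reduction to $E_1$-exact inputs and the upgrading of $[x]_{E_2}\subseteq A$ to $\textbf{\textit{u}}_{E_1}([x]_{E_2})\subseteq A$, which the paper only states informally.
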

	
	\begin{proof}
		Suppose that $(E_1, E_2)$ is a solution of a given $L_2L_1$ operator and $H_2$ is as defined above. Now, $L_2L_1(X) = Y $ iff the union of $E_1$-equivalence classes in $X$ contains the union of $E_2$-equivalence classes which is equal to $Y.$ So, in the $(E_1, H_2)$ solution, the only way that $L_{H_2}L_{E_1}(X)$ could be different from $L_{E_2}L_{E_1}(X)$(which is $=L_2L_1(X)$) is if (i) $[x]_{E_2}$ is contained in $L_{E_2}L_{E_1}(X)$ while  $[y]_{E_2}$ is not contained in $L_{E_2}L_{E_1}(X)$ or if (ii) $[y]_{E_2}$ is contained in $L_{E_2}L_{E_1}(X)$ while  $[x]_{E_2}$ is not contained in $L_{E_2}L_{E_1}(X).$ This is because in $H_2,$ $[x]_{E_2}$ and $[y]_{E_2}$ always occur together in an output set if they are in it at all (recall that output sets are unions of equivalence classes) in the equivalence class of $[z]= [x]_{E_2} \cup [y]_{E_2} $ and all the other equivalence classes of $H_2$ are the same as in $E_2.$ However, neither (i) nor (ii) is the case since $\textit{\textbf{u}}_{E_1}([x]_{E_2}) = \textit{\textbf{u}}_{E_1}([y]_{E_2}).$ That is, the equivalence classes of $[x]_{E_2}$ are contained by exactly the same union of equivalences in $E_1$ which contains $[y]_{E_2}.$ Thus, any set $X$ which contains a union of $E_1$-equivalences which contains $[x]_{E_2}$ also must contain $[y]_{E_2}$ and therefore $[z]_H.$ Hence, if $(E_1, E_2)$ is a solution for the given vector, then so is $(E_1, H_2).$
	\end{proof}
	
	\begin{lemma} \label{l3}
		Let $V$ be a finite set and  $L_2L_1 : \mathscr{P}(V) \rightarrow \mathscr{P}(V)$ be a fully defined operator. If there exists  equivalence pair solutions to the operator $(E_1, E_2)$ which is such that there exists  $[x]_{E_1}, [y]_{E_1} \in E_2,$ such that $[x]_{E_1} \neq [y]_{E_1}$ and $\textbf{u}_{E_2}([x]_{E_1}) = \textbf{u}_{E_2}([y]_{E_1}), $ then there exists another solution, $(H_1, E_2)$, where $H_1$ is an equivalence relation formed from $E_1$ by combining $[x]_{E_2}$ and $[y]_{E_2}$ and all other elements are as in $E_1.$ That is, $[x]_{E_1} \cup [y]_{E_1} = [z] \in H_1$ and if $[w] \in E_2$ such that $[w] \neq [x]_{E_1}$ and $[w]_{E_1} \neq [y]_{E_1},$ then $[w] \in H_1.$ 
	\end{lemma}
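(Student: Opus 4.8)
The plan is to show that the pair $(H_1,E_2)$ generates exactly the same operator as $(E_1,E_2)$, that is, $L_{E_2}L_{H_1}(X)=L_{E_2}L_{E_1}(X)=L_2L_1(X)$ for every $X\in\mathscr{P}(V)$. This mirrors the strategy of Lemma \ref{l2}, but with a crucial structural difference that I will flag below. Since $H_1$ is obtained from $E_1$ only by fusing the two distinct $E_1$-classes $[x]_{E_1}$ and $[y]_{E_1}$ into a single class $[z]=[x]_{E_1}\cup[y]_{E_1}$, the inner operators $L_{H_1}$ and $L_{E_1}$ can differ only through the behaviour of this merged class, so the whole argument reduces to tracking $[z]$.

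First I would split into cases according to how $X$ meets the two classes. If $X$ contains both $[x]_{E_1}$ and $[y]_{E_1}$, then $[z]\subseteq X$, so $[z]\subseteq L_{H_1}(X)$ while both classes lie in $L_{E_1}(X)$; hence $L_{H_1}(X)=L_{E_1}(X)$. If $X$ contains neither, then $[z]\not\subseteq X$ and neither class contributes to $L_{E_1}(X)$, so again $L_{H_1}(X)=L_{E_1}(X)$. In both of these cases the inner approximations already coincide, so applying the outer $L_{E_2}$ trivially yields the same output.

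The essential case is when $X$ contains exactly one of the two classes, say $[x]_{E_1}\subseteq X$ but $[y]_{E_1}\not\subseteq X$. Here $L_{H_1}(X)\neq L_{E_1}(X)$: because $[z]\not\subseteq X$, the merged class drops out entirely, giving $L_{H_1}(X)=L_{E_1}(X)\setminus[x]_{E_1}$. I would then argue that applying $L_{E_2}$ erases this difference. An $E_2$-class $C$ can be present in $L_{E_2}(L_{E_1}(X))$ but absent from $L_{E_2}(L_{H_1}(X))$ only if $C\subseteq L_{E_1}(X)$ and $C\cap[x]_{E_1}\neq\emptyset$. But the hypothesis $\textbf{u}_{E_2}([x]_{E_1})=\textbf{u}_{E_2}([y]_{E_1})$ says precisely that an $E_2$-class meets $[x]_{E_1}$ if and only if it meets $[y]_{E_1}$; so such a $C$ also meets $[y]_{E_1}$. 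Since $[y]_{E_1}\not\subseteq X$ forces $[y]_{E_1}\cap L_{E_1}(X)=\emptyset$ (as $L_{E_1}(X)$ is a union of whole $E_1$-classes and these are pairwise disjoint), that point of $C$ lies outside $L_{E_1}(X)$, contradicting $C\subseteq L_{E_1}(X)$. Hence no such $C$ exists and $L_{E_2}(L_{H_1}(X))=L_{E_2}(L_{E_1}(X))$; the symmetric subcase $[y]_{E_1}\subseteq X,\ [x]_{E_1}\not\subseteq X$ is identical.

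The main obstacle, and the point where this lemma is genuinely more delicate than Lemma \ref{l2}, is exactly this essential case: unlike merging $E_2$-classes, merging $E_1$-classes really does alter the inner approximation $L_{H_1}(X)$, so one cannot simply claim that the two classes ``always appear together in the output.'' The work lies in the two-layer argument showing that the discrepancy injected at the $L_1$ level is invisible after $L_{E_2}$, and the hypothesis on $\textbf{u}_{E_2}$ is precisely what guarantees this cancellation. Once all four cases deliver $L_{E_2}L_{H_1}(X)=L_2L_1(X)$, I conclude that $(H_1,E_2)$ is also a solution.
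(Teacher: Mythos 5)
Your proof is correct and follows essentially the same route as the paper's: both reduce the problem to the case where $X$ contains exactly one of $[x]_{E_1}$, $[y]_{E_1}$, and both close that case using the hypothesis $\textbf{u}_{E_2}([x]_{E_1}) = \textbf{u}_{E_2}([y]_{E_1})$, i.e.\ that the two classes meet exactly the same $E_2$-classes, so no $E_2$-class can be captured through one of them without also requiring the other. If anything, your rendering is more explicit than the paper's (which argues informally about which classes are ``needed'' in minimal pre-images): the computation $L_{H_1}(X) = L_{E_1}(X) \setminus [x]_{E_1}$ together with the contradiction via $[y]_{E_1} \cap L_{E_1}(X) = \emptyset$ makes rigorous exactly the cancellation step the paper leaves implicit.
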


	\begin{proof}
		
		Suppose that $(E_1, E_2)$ is a solution of a given $L_2L_1$ operator and $H_1$ is as defined above. Now, $L_2L_1(X) = Y $ iff the union of $E_1$-equivalence classes in $X$ contains the union of $E_2$-equivalence classes which is equal to $Y.$ So, in the $(H_1, E_2)$ solution, the only way that $L_{E_2}L_{H_1}(X)$ could be different from $L_{E_2}L_{E_1}(X)$(which is $=L_2L_1(X)$) is if the union of equivalence classes in $X$ which is needed to contain $Y,$ (i) contains $[x]_{E_2} $ but not $[y]_{E_2}$ or (ii) contains $[y]_{E_2} $ but not $[z]_{E_2}.$ However, this is not the case since  $\textit{\textbf{u}}_{E_2}([x]_{E_1}) = \textit{\textbf{u}}_{E_2}([y]_{E_1}).$ That is, $[x]_{E_1}$ intersects exactly the same equivalence classes in $E_2$ as $[y]_{E_1}.$ So if $[x]_{E_1}$ is needed to contain an equivalence class in $E_2,$ then $[y]_{E_1}$ is also needed. In other words, if $L_2L_1(X) = Y,$ then for any minimal set such $Y_m \subseteq X$ such that $L_2L_1(Y_m) = Y,$ $[x]_{E_1}$ is contained in $Y_m$ iff $[y]_{E_1}$ is contained in $Y_m$ iff $[z] \in H_1$ is contained in $Y_m.$  Hence, if $(E_1, E_2)$ is a solution for the given vector, then so is $(H_1, E_2).$
	\end{proof}
	
	\noindent We now have enough to be able to prove the claim in step 4 of Algorithm 4.1 (actually we prove something stronger because we also show conditions which the solutions of the algorithm must satisfy).

	\begin{theorem} \label{t2}
		Let $V$ be a finite set and  $L_2L_1 : \mathscr{P}(V) \rightarrow \mathscr{P}(V)$ be a fully defined operator. If there exists  equivalence pair solutions to the operator, then there exists solutions $(E_1, E_2)$ which satisfy,  
		
		\vspace{2mm} 
		\noindent (i) for each $[x]_{E_2}, [y]_{E_2} \in E_2,$ if $[x]_{E_2} \neq [y]_{E_2}$ then $\textbf{u}_{E_1}([x]_{E_2}) \neq \textbf{u}_{E_1}([y]_{E_2}), $  
		
		\vspace{2mm}
		\noindent (ii) for each $[x]_{E_1}, [y]_{E_1} \in E_1,$ if $[x]_{E_1} \neq [y]_{E_1}$ then $\textbf{u}_{E_2}([x]_{E_1}) \neq \textbf{u}_{E_2}([y]_{E_1})$.
		
		\vspace{2mm}
		\noindent Furthermore, $E_1 = S$ and $E_2 = R$ where $(S, R)$ are the solutions obtained by applying Algorithm 4.1 to the given $L_2L_1$ operator.
	\end{theorem}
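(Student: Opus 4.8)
The plan is to start from an arbitrary solution (which exists by hypothesis) and \emph{coarsen} it, using Lemmas~\ref{l2} and \ref{l3}, into a canonical solution that satisfies both (i) and (ii); then to identify that canonical solution with the algorithm's output $(S,R)$ using Lemma~\ref{p15} together with a direct analysis of the output sets and of the minimal pre-images.

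For the existence claim, let $(E_1,E_2)$ be any solution. As long as (i) fails there are distinct $E_2$-classes with equal $\textbf{u}_{E_1}$-image, so Lemma~\ref{l2} replaces $(E_1,E_2)$ by a new solution in which those classes are merged; as long as (ii) fails, Lemma~\ref{l3} merges two $E_1$-classes. Each application strictly decreases the nonnegative integer $|E_1|+|E_2|$, so since $V$ is finite the process halts after finitely many steps at a solution $(E_1',E_2')$ to which neither lemma applies, i.e.\ one satisfying both (i) and (ii). A coarsening in one coordinate may reintroduce a violation in the other, but this merely triggers further merges and does not threaten termination, so this is harmless.

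Next I would show $E_2'=R$. By Lemma~\ref{p15} we already have $E_2'\le R$, so suppose toward a contradiction that two distinct $E_2'$-classes $C_x,C_y$ lie in one $R$-class. By the definition of $R$ they belong to exactly the same output sets, and since every nonempty output is a union of $E_2'$-classes, $C_x\subseteq X \iff C_y\subseteq X$ for every output $X$. Testing this against the output $X^{*}=L_2L_1(\textbf{u}_{E_1'}(C_x))=L_{E_2'}(\textbf{u}_{E_1'}(C_x))$ gives $C_y\subseteq \textbf{u}_{E_1'}(C_x)$, whence $\textbf{u}_{E_1'}(C_y)\subseteq \textbf{u}_{E_1'}(C_x)$ because the latter set is $E_1'$-saturated; the symmetric test yields the reverse inclusion. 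Thus $\textbf{u}_{E_1'}(C_x)=\textbf{u}_{E_1'}(C_y)$, contradicting (i). Hence $R\le E_2'$ and $E_2'=R$.

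The harder direction is $E_1'=S$, and I expect the main obstacle to lie in controlling the minimal pre-images precisely. Again $E_1'\le S$ by Lemma~\ref{p15}. First I would verify that, for the solution $(E_1',E_2')$, the minimal pre-image of a nonempty output $Y$ is exactly $Y_m=\bigcup\{\textbf{u}_{E_1'}(C) : C\in E_2',\ C\subseteq Y\}$ (any $E_1'$-saturated pre-image must contain each such $\textbf{u}_{E_1'}(C)$, and one checks this union does not over-produce), so that an $E_1'$-class $D$ satisfies $D\subseteq Y_m$ iff $D$ meets some $E_2'$-class contained in $Y$. Now take distinct $E_1'$-classes $D_a,D_b$; condition (ii) supplies an $E_2'$-class $C$ meeting $D_a$ but not $D_b$, and since $C\cap D_b=\emptyset$ the $E_1'$-saturated set $\textbf{u}_{E_1'}(C)$ is disjoint from $D_b$. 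Setting $Y=L_2L_1(\textbf{u}_{E_1'}(C))$, one has $C\subseteq Y$ while no $E_2'$-class of $Y$ meets $D_b$, so the corresponding $Y_m\in K$ contains $D_a$ but not $D_b$; hence $a\not\sim_S b$, giving $S\le E_1'$ and therefore $E_1'=S$. The two delicate points to verify carefully are the explicit form of $Y_m$ (that no spurious $E_2'$-class of $Y$ meets $D_b$) and the well-definedness and termination of the merging process despite the coupling between the coordinates; everything else is routine. As a byproduct, since the terminal $(E_1',E_2')$ is a solution and equals $(S,R)$, this also confirms the claim of step~4 that $(S,R)$ is genuinely a solution.
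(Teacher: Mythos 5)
Your proposal is correct and follows essentially the same route as the paper's proof: coarsen an arbitrary solution by repeated applications of Lemmas \ref{l2} and \ref{l3} until conditions (i) and (ii) hold, then use Lemma \ref{p15} for the inclusions $E_1 \leq S$ and $E_2 \leq R$, and exploit (i) and (ii) via an analysis of output sets and minimal pre-images (your saturation-based tests and explicit formula for $Y_m$ are sound variants of the paper's Claims 1 and 2) to get the reverse inclusions. If anything, your termination argument via the strictly decreasing measure $|E_1|+|E_2|$ treats the coupling between the two merging lemmas more carefully than the paper's two-phase phrasing, which overlooks that merges in $E_1$ can reintroduce violations of condition (i).
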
 	 
	
	\begin{proof}
		
		Suppose that there exists a solution $(C, D).$ Then, either $(C, D)$ already satisfies condition (i) and condition (ii) or it does not. If it does,  take $ (E_1, E_2) = (C, D).$ If it does not satisfy condition (i) then use repeated applications of Lemma \ref{l2} until we arrive at an $(C, E_2)$ solution which does.  Similarly, if $(C, E_2)$ does not satisfy condition (ii), use repeated applications of Lemma \ref{l3} until it does. Since $\mathscr{P}(V)$ is finite this will take at most finite applications of the lemmas until  we obtain a solution, $(E_1, E_2)$ which satisfies the conditions of the theorem.   Since there is a solution, using Proposition \ref{l4} we will at least be able to reach step 4 of Algorithm 4.1. So let $S$ and $R$ be the relations formed by the algorithm after step 3.  Next, we will show that $E_1= S$ and $E_2= R.$ Now, by Lemma \ref{p15}, we have that $E_1 \leq S$ and $E_2 \leq R.$

		Consider the output sets  of the given $L_2L_1$ operator. It is clear that these sets are unions of one or more equivalence classes of $E_2.$ Let $[y]_{E_2} \in E_2$ then $L_2L_1(\textbf{\textit{u}}_{E_1}([y]_{E_2})) \supseteq [y]_{E_2}.$ 
		
		\vspace{2mm}
		\noindent \textbf{Claim 1:} $L_2L_1(\textbf{\textit{u}}_{E_1}([y]_{E_2})) $ is the minimum output set of $L_2L_1$ such that it contains $[y]_{E_2}$ and $\textbf{\textit{u}}_{E_1}([y]_{E_2})$ is the minimum set $X$ such that $L_2L_1(X) \supseteq [y]_{E_2}.$  
		
		\vspace{2mm}
		To see this we first note that $L_2L_1$ is a monotone function on $\mathscr{P}(V)$ since $L_1$ and $L_2$ are monotone operators and $L_2L_1$ is the composition of them. Then, if we can show that $\textbf{\textit{u}}_{E_1}([y]_{E_2})$  is the minimum set $X \in \mathscr{P}(V),$ such that   $L_2L_1(X) \supseteq [y]_{E_2},$ then    $L_2L_1(\textbf{\textit{u}}_{E_1}([y]_{E_2}))$ will be the minimum set output set which contains $[y]_{E_2}.$   This is true because for   $L_2L_1(X) \supseteq [y]_{E_2},$ then $L_1(X)$ must contain each member of $[y]_{E_2}.$  We note that the range of $L_1$ contains only unions of equivalence classes of $E_1$ (counting the emptyset as a union of zero sets). Hence for $L_1(X)$ to contain each element of $[y]_{E_2},$ it must contain each equivalence class in $E_1$ which contains any of these elements. In other words, it must contain $\textbf{\textit{u}}_{E_1}([y]_{E_2}).$ Suppose that $X$ is such that $X \not\supseteq \textbf{\textit{u}}_{E_1}([y]_{E_2})$ and $ L_2L_1(X) \supseteq [y]_{E_2}.$ Then for some $v \in [y]_{E_2},$ $v$ is not in $X$ and so  $\textbf{\textit{u}}_{E_1}([v]_{E_2})\not\in L_1(X).$  Hence $L_2L_1(X) \not\supseteq v $ and so does not contain $[y]_{E_2}$ which is a contradiction. 
		
		\vspace{2mm}
		\noindent \textbf{Claim 2:} $L_2L_1(\textbf{\textit{u}}_{E_1}([y]_{E_2})) $ is not the minimum output set with respect to containing  any other $[z]_{E_2} \neq [y]_{E_2}.$ 
		
		\vspace{2mm}
		Suppose that for some $[z]_{E_2} \neq [y]_{E_2} \in E_2,$ that $L_2L_1(\textbf{\textit{u}}_{E_1}([y]_{E_2})) $  is the minimum  output set containing $[z]_{E_2}.$ Then by the previous Claim, we get that $L_2L_1(\textbf{\textit{u}}_{E_1}([y]_{E_2}))  = L_2L_1(\textbf{\textit{u}}_{E_1}([z]_{E_2}))$ and  that $\textbf{\textit{u}}_{E_1}([y]_{E_2}) \supseteq \textbf{\textit{u}}_{E_1}([z]_{E_2})$. But since $\textbf{\textit{u}}_{E_1}([y]_{E_2})$ is the minimum set such that $L_2L_1(X) \supseteq [y]_{E_2}$, then the stated equality also gives us that $\textbf{\textit{u}}_{E_1}([y]_{E_2}) \subseteq \textbf{\textit{u}}_{E_1}([z]_{E_2}).$ Hence we have $\textbf{\textit{u}}_{E_1}([y]_{E_2}) = \textbf{\textit{u}}_{E_1}([z]_{E_2})$ which is a contradiction to the assumption of condition (i) of the theorem.
		
		Now we can reconstruct $E_2$ by relating elements which always occur together in the output sets. That is,  $a \sim_R b \iff \   (a \in X \iff b \in X)$ for each $X$ in the range of $L_2L_1$. From the previous proposition we have that $E_2 \leq R.$ We claim that $R \leq E_2,$ hence $R = E_2.$ To show this, suppose that it is not the case. Then there exists $a, b \in V$ such that $ a \sim_R b$ but $a \not \sim_{E_2} b.$ By Claim 1, $L_2L_1(\textbf{\textit{u}}_{E_1}([a]_{E_2}))$ is the minimum set which contains $[a]_{E_2}$ and since $a \sim_R b$ then it must contain $b$, and consequently $[b]_{E_2}$ as well. Similarly by Claim 1, $L_2L_1(\textbf{\textit{u}}_{E_1}([b]_{E_2})) $ is the minimum set which contains $[b]_{E_2}$ and since $a \sim_R b$ then it must contain $a,$ and consequently $[a]_{E_2}$ as well. By minimality we therefore have both $ L_2L_1(\textbf{\textit{u}}_{E_1}([a]_{E_2})) \subseteq L_2L_1(\textbf{\textit{u}}_{E_1}([b]_{E_2}))$ and $L_2L_1(\textbf{\textit{u}}_{E_1}([a]_{E_2})) \supseteq L_2L_1(\textbf{\textit{u}}_{E_1}([b]_{E_2}))$ which implies that $L_2L_1(\textbf{\textit{u}}_{E_1}([a]_{E_2})) = L_2L_1(\textbf{\textit{u}}_{E_1}([b]_{E_2})).$ This contradicts Claim 2 since $[a]_{E_2} \neq [b]_{E_2} \in E_2.$ Hence, $E = R$ and we can reconstruct $E_2$ by forming the equivalence relation $R$ which was defined by using the output sets.
		
		It remains to reconstruct $E_1.$ Next, we list the pre-images of the minimal output sets which contain $[y]_{E_2}$ for each $[y]_{E_2}$ in $E_2$ and by Claim 1 this exists and is equal to  $\textbf{\textit{u}}_{E_1}([y]_{E_2}).$ This implies that each such set  is the union of some of the equivalence classes of $E_1.$ Now using this pre-image list we relate elements of $V$ in the following way: $a \sim_S b \iff \ (a\in X \iff b \in X)$ for each $X$ in the pre-image list. From the previous proposition we have that $E_1 \leq S.$  We claim that $S \leq E_1$ and hence $S = E_1.$ Suppose that it was not the case. That is, there exists $a, b \in V$ such that $a \sim_S b$ but $a \not\sim_{E_1} b.$ Hence $[a]_{E_1} \neq [b]_{E_1}.$ By condition (ii) of the theorem, we know that   $\textit{\textbf{u}}_{E_2}([a]_{E_1}) \neq \textit{\textbf{u}}_{E_2}([b]_{E_1}).$ So WLOG suppose that $d \in \textit{\textbf{u}}_{E_2}([a]_{E_1})$ but $d \not\in \textit{\textbf{u}}_{E_2}([b]_{E_1}).$ Since these sets are unions of equivalence classes in $E_2$ this implies that 1), $[d]_{E_2} \subseteq \textit{\textbf{u}}_{E_2}([a]_{E_1})$ and  2) $[d]_{E_2} \cap \textit{\textbf{u}}_{E_2}([b]_{E_1}) = \emptyset.$ Now by Claim 1, $\textit{\textbf{u}}_{E_1}([d]_{E_2})$ is the minimum set, $X$ such that  $L_2L_1(X)$  contains $[d]_{E_2}$ and so is on the output list from which the Relation $S$ was formed. However, 1) implies that this set contains $a$ while 2) implies that this set does not contain $b.$ This contradicts $a \sim_S b.$ Hence $S= E_1$ and we can construct $E_1$ by constructing $S.$ The result is shown.
	\end{proof}

	\noindent Next we give, a graph-theoretic equivalence of the theorem but we first define a graph showing the relationship between two equivalence relations on a set. 
	
	\begin{definition}
		Let $C$ and $D$ be two equivalence relations on a set $V.$ 	Form a bipartite graph $B(C, D) = (G,E),$ where the nodes $G$ is such that $G = \{ [u]_C \ | \  [u]_C \in C  \} \cup \{ [u]_D \ | \ [u]_D \in D \}$ and the edges $E$ are such that $E = \{  ([u]_C, [v]_D)\ | \ \exists \ x \in V : \ x \in [u]_C \ \text{and} \ x \in [v]_D      \}.$ We call this the \textbf{incidence graph} of the pair $(C, D).$	
	\end{definition} 
	
	\begin{theorem}
		Let $V$ be a finite set and let $L_2L_1: \mathscr{P}(V) \rightarrow \mathscr{P}(V)$ be a given fully defined operator on $\mathscr{P}(V).$ If there exists solutions $(E_1, E_2)$ then the incidence graph of $E_1$ and $E_2,$ $B(E_1, E_2), $ is such that  there are no compete bipartite subgraphs as components other than edges (or $K_2$).
	\end{theorem}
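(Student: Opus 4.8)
The plan is to reduce this statement to conditions (i) and (ii) of Theorem \ref{t2} by translating them into the language of the incidence graph. First I would invoke Theorem \ref{t2}: since a solution is assumed to exist, there is a solution $(E_1, E_2)$ satisfying conditions (i) and (ii), and this is the pair whose incidence graph $B(E_1, E_2)$ I would analyse. The central observation is that the upper approximation $\textbf{\textit{u}}_{E_1}([x]_{E_2})$ is exactly the union of those $E_1$-classes that intersect $[x]_{E_2}$, which by the definition of $B(E_1, E_2)$ is precisely the set of neighbours of the node $[x]_{E_2}$ lying on the $E_1$-side. Hence $\textbf{\textit{u}}_{E_1}([x]_{E_2}) = \textbf{\textit{u}}_{E_1}([y]_{E_2})$ holds if and only if the nodes $[x]_{E_2}$ and $[y]_{E_2}$ have identical neighbourhoods in $B(E_1, E_2)$, and symmetrically for $\textbf{\textit{u}}_{E_2}$ and the $E_1$-side nodes. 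Under this dictionary, condition (i) says that distinct $E_2$-nodes have distinct neighbourhoods, and condition (ii) says that distinct $E_1$-nodes have distinct neighbourhoods.

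With this dictionary in place I would argue by contraposition. Suppose some connected component of $B(E_1, E_2)$ is a complete bipartite graph $K_{m,n}$ that is not a single edge, so that $m + n > 2$ and therefore $m \geq 2$ or $n \geq 2$. In a complete bipartite component every node on one part is adjacent to every node on the other part, so all nodes on a given side share the very same neighbourhood, namely the entire opposite part of that component. If $n \geq 2$ there are two distinct $E_2$-nodes with equal neighbourhoods, which by the dictionary yields two distinct $E_2$-classes with $\textbf{\textit{u}}_{E_1}([x]_{E_2}) = \textbf{\textit{u}}_{E_1}([y]_{E_2})$, contradicting condition (i); if $m \geq 2$ the same reasoning on the $E_1$-side contradicts condition (ii). Either way Theorem \ref{t2} is contradicted, so no such component can exist.

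I expect the only real content to lie in the identification made in the first paragraph, namely recognising the upper approximation of an equivalence class as the neighbourhood of a node in $B(E_1, E_2)$; once this is secured the combinatorial step is immediate, since a complete bipartite shape collapses all same-side neighbourhoods onto the opposite part. One point to handle carefully is that $B(E_1, E_2)$ has no isolated nodes, because every equivalence class is non-empty and hence meets some class of the other relation; consequently each component contains an edge, both parts satisfy $m, n \geq 1$, and $K_2$ is genuinely the minimal admissible component. I would also remark that the reverse implication is what upgrades this to a true graph-theoretic equivalent of Theorem \ref{t2}: if every complete bipartite component is a single edge, then within each component the same-side neighbourhoods must differ, recovering (i) and (ii). Establishing that converse, however, is slightly more delicate than the direction needed here, as it relies on the sharper fact that a connected bipartite component is complete exactly when all of its same-side neighbourhoods coincide and fill the whole opposite part.
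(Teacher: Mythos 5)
Your proposal is correct and follows essentially the same route as the paper: the paper's own (very terse) proof likewise reduces the statement to conditions (i) and (ii) of Theorem \ref{t2}, reading a complete bipartite component as forcing equal upper approximations of same-side equivalence classes. If anything, your case split ($n \geq 2$ violates (i), $m \geq 2$ violates (ii)) is more precise than the paper's assertion that both conditions are violated simultaneously, which is not literally true for a $K_{1,n}$ component.
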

	
	\begin{proof}
		This is a direct translation of the previous theorem graph-theoretically. Suppose that the incidence graph of $E_1$ and $E_2$, $B(E_1, E_2),$ contains a complete bipartite subgraph as a component. Then the partition corresponding to $E_2$ violates Condition (i) of the theorem and the partition corresponding to $E_1$ violates condition (ii) of the theorem.  	 
	\end{proof}  
	
	\begin{corollary}
		Let $V$ be a finite set and $L_2L_1 : \mathscr{P}(V) \rightarrow \mathscr{P}(V)$ be a given defined operator. If $(E_1, E_2)$ is a unique solution for the operator then $|E_1| < 2^{|E_2|} $ and $|E_2| < 2^{|E_1|}. $  
	\end{corollary}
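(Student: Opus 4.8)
The plan is to derive the two inequalities from the structural conditions (i) and (ii) of Theorem \ref{t2}, after first arguing that a \emph{unique} solution must satisfy them. Suppose $(E_1, E_2)$ is the unique solution. I would first show it necessarily satisfies condition (i). Indeed, if (i) failed, there would be distinct classes $[x]_{E_2} \neq [y]_{E_2}$ with $\textbf{u}_{E_1}([x]_{E_2}) = \textbf{u}_{E_1}([y]_{E_2})$, and then Lemma \ref{l2} produces a solution $(E_1, H_2)$ in which these two classes are merged. Since $H_2$ has strictly fewer classes than $E_2$ we have $H_2 \neq E_2$, so $(E_1, H_2)$ is a second solution, contradicting uniqueness. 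The identical argument using Lemma \ref{l3} forces condition (ii) as well. Hence the unique solution satisfies both (i) and (ii).

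Next I would convert condition (i) into a counting bound. Consider the map $\phi$ from the set of $E_2$-classes to the powerset $\mathscr{P}(E_1)$ of the set of $E_1$-classes, sending $[x]_{E_2}$ to the collection of $E_1$-classes meeting $[x]_{E_2}$; equivalently, $\phi([x]_{E_2})$ records which $E_1$-classes make up $\textbf{u}_{E_1}([x]_{E_2})$. Because $[x]_{E_2}$ is a non-empty equivalence class, it is met by at least one $E_1$-class, so $\phi([x]_{E_2})$ is a \emph{non-empty} subset of $E_1$. Condition (i) says precisely that $\phi$ is injective, so its image lies in $\mathscr{P}(E_1) \setminus \{\emptyset\}$ and $|E_2| \leq 2^{|E_1|} - 1 < 2^{|E_1|}$.

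Finally, the completely symmetric argument using condition (ii) and the map recording which $E_2$-classes meet each $E_1$-class yields $|E_1| < 2^{|E_2|}$, completing the proof. I expect no serious obstacle here: the only two points requiring care are the non-emptiness of the image sets of $\phi$, which is exactly what upgrades the weak bound $|E_2| \leq 2^{|E_1|}$ to the strict inequality claimed, and the reduction from ``unique solution'' to ``conditions (i) and (ii) hold,'' which rests directly on Lemmas \ref{l2} and \ref{l3}.
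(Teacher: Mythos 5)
Your proof is correct and matches the paper's argument: your map $\phi$ recording which $E_1$-classes meet each $E_2$-class is exactly the paper's observation (phrased via the incidence graph) that each class of one relation is sent to a unique non-empty subset of classes of the other, with non-emptiness giving the strict inequality. Your preliminary step deducing conditions (i) and (ii) from uniqueness via Lemmas \ref{l2} and \ref{l3} simply inlines what the paper obtains by citing Theorem \ref{t2}, so the two proofs are essentially the same.
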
	
	
	\begin{proof}
		This follows directly from the conditions since in the incidence graph of a unique solution $(E_1, E_2),$  each equivalence class in $E_1$ is mapped to a unique non-empty subset of equivalence classes in $E_2$ and vice versa.
	\end{proof}

	\noindent The next natural question is, without assuming conditions on the equivalence relations, are there instances when the algorithm produces a unique solution?  Example 4.1 is an example of a unique decomposition of a given $L_2L_1$ operator. So this leads naturally to the next question. What conditions result in a unique solution to a given $L_2L_1?$ Can we find  characterising features of  the  pairs of equivalences relations which give a unique $L_2L_1$ operator?    
	
	We note that the algorithm always produces a solution for a fully defined $L_2L_1$ operator which has at least one solution. Hence, if there is a unique solution then these pairs of equivalence relations satisfy the conditions of Theorem \ref{t2}. Recall that in Example 4.2, we were given an $L_2L_1$ operator defined on $\mathscr{P}(V)$ for $ V = \{a, b, c, d\}$ such that $L_2L_1(X) = \emptyset$ for all $X \neq V$ and $L_2L_1(V) = V.$ This example shows us that in addition to a solution which would satisfy the conditions of the theorem, which applying the algorithm gives us;  $E_1 = \{\{a, b, c, d\}  \} $ and $E_2 = \{\{a, b, c, d\}  \}\} ,$ we also have solutions of the form $E_1 = \{ \{a, b\}, \{c, d\}  \} $ and $E_2 = \{ \{a, c\}, \{b, d \}\}$ or $E_1 = \{ \{a, b\}, \{c, d\}  \} $ and $E_2 = \{ \{a, d\}, \{b, c \}\}$ amongst others. In Lemma \ref{p15}, we showed that the solution given by the algorithm is the coarsest pair compatible with a given defined $L_2L_1$ operator. We now try to find a condition such that after applying the algorithm, we may deduce whether or not the $(S, R)$ solution is unique.   This leads us to the next section. 
	 
	\subsection {Characterising Unique Solutions}
	
	\begin{theorem} \label{t3} 
		Let $V$ be a finite set and let $L_2L_1: \mathscr{P}(V) \rightarrow \mathscr{P}(V)$ be a fully defined operator on $\mathscr{P}(V).$ If $(S, R)$ is returned by Algorithm 4.1,  then $(S, R)$ is the unique solution of the operator iff the following holds:
		
		\vspace{2mm}
		\noindent (i) For any $[x]_R \in R,$ 
		there exists $[z]_S \in S$ such that, $ |[x]_R \cap [z]_S| = 1.$ \\
		(ii) For any $[x]_S \in S,$ 
		there exists $[z]_R \in R$ such that, $ |[x]_S \cap [z]_R| = 1.$ 
		
		\vspace{2mm}
	\end{theorem}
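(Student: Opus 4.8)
The plan is to reduce everything to one explicit description of the operator. For any solution $(E_1,E_2)$ one has $L_2L_1(X)=\bigcup\{[y]_{E_2} : \upp_{E_1}([y]_{E_2})\subseteq X\}$, because an $E_2$-class lies in $L_2(L_1(X))$ exactly when every $E_1$-class meeting it is contained in $X$; this is just the content of Claim 1 in Theorem \ref{t2}, and it shows that the whole operator is encoded by the assignment $[y]_{E_2}\mapsto\upp_{E_1}([y]_{E_2})$. By Lemma \ref{p15} and Remark 4.2, $(S,R)$ is the coarsest solution, so every solution satisfies $E_1\le S$ and $E_2\le R$; hence uniqueness is equivalent to the non-existence of a \emph{strictly finer} solution, and I would argue the two directions separately, using that every output set is a union of $R$-classes by the very definition of $R$.

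For the forward direction I would use the contrapositive. If (i) fails at some $[x]_R$, then every $S$-class meets $[x]_R$ in $0$ or $\ge 2$ points, so I can split $[x]_R=A\sqcup B$ into two nonempty parts with each $S$-class meeting $[x]_R$ hitting both, giving $\upp_S(A)=\upp_S(B)=\upp_S([x]_R)$. Replacing $[x]_R$ by $A,B$ produces a strictly finer $E_2'$, and the operator formula shows $(S,E_2')$ generates the same operator (the classes $A,B$ enter any output together, exactly when $[x]_R$ did); so the solution is not unique. A dual split of an $S$-class, using $\upp_R$ in place of $\upp_S$, handles the failure of (ii).

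For the reverse direction assume (i) and (ii) and let $(E_1,E_2)$ be any solution. First I would prove $E_2=R$. Suppose some $[x]_R$ were cut into $E_2$-classes $C_1,\dots,C_k$ with $k\ge 2$. Each $C_j$ lies in the output of $X=\upp_{E_1}(C_j)$, and since outputs are unions of $R$-classes and $C_j\subseteq[x]_R$ is nonempty, all of $[x]_R$ lies in that output; reading the operator formula back forces $\upp_{E_1}(C_i)\subseteq\upp_{E_1}(C_j)$ for all $i,j$, hence $\upp_{E_1}(C_1)=\cdots=\upp_{E_1}(C_k)=\upp_{E_1}([x]_R)=:W$. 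Then every $E_1$-class inside $W$ meets every $C_j$, so it meets the pairwise-disjoint $C_1,\dots,C_k$ and hence meets $[x]_R$ in at least $k\ge 2$ points; consequently every $S$-class meeting $[x]_R$ does so in at least two points, contradicting (i). With $E_2=R$ in hand, the minimal preimages are operator-invariants, so $\upp_{E_1}([y]_R)=\upp_S([y]_R)$ for every $R$-class, and $S$ groups points lying in the same members of this list. If an $S$-class $[x]_S$ split into $E_1$-classes $D_1,\dots,D_m$ with $m\ge 2$, then all $D_i$ must meet exactly the same $R$-classes, so any $R$-class meeting $[x]_S$ meets each $D_i$ and therefore meets $[x]_S$ in at least two points, contradicting (ii). Hence $E_1=S$ and $E_2=R$.

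The hard part will be the reverse direction, and within it the asymmetry created by $L_1$ acting first together with the possibility that a competing solution refines both $R$ and $S$ at once. The crux is the step showing that two $E_2$-classes splitting a single $R$-class are forced to share the identical $E_1$-upper-approximation $W$: this is exactly what converts a hypothetical refinement of $R$ into the arithmetic statement that every incident $S$-class meets $[x]_R$ in $\ge 2$ points, i.e.\ into the negation of (i). Pinning down $R$ first is essential, since the argument that $E_1=S$ leans on the operator-invariance $\upp_{E_1}([y]_R)=\upp_S([y]_R)$, which presupposes $E_2=R$.
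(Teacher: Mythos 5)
Your proof is correct, and your forward direction (splitting a class of $R$ or of $S$ all of whose incident classes meet it in at least two points, then checking via the operator formula that the refined pair generates the same operator) is exactly the paper's argument. Where you genuinely diverge is the reverse direction. The paper, assuming (i) and (ii) and a hypothetical solution $(E_1,E_2)$ with $E_1 < S$ or $E_2 < R$, uses the guaranteed singleton intersection point $v$ to exhibit a concrete operator-invariant --- the minimal pre-image of the minimal output set containing $[v]_R$ (resp.\ containing $[a]_R - [v]_{E_2}$) --- which takes different values when computed in $(S,R)$ and in $(E_1,E_2)$, contradicting that both generate the same operator. You instead show that the mere existence of a strict refinement forces the negation of the conditions: a split of $[x]_R$ into $E_2$-classes forces all pieces to share the same $E_1$-upper approximation $W$, hence every incident $S$-class meets $[x]_R$ in at least two points (negating (i)), and symmetrically for a split $S$-class using the invariance $\upp_{E_1}([y]_R) = \upp_S([y]_R)$. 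Your route is arguably cleaner --- it avoids the paper's delicate bookkeeping with minimal pre-images, where the published write-up has some notational slips --- but it buys this at the cost of the sequencing constraint you correctly flagged: you must establish $E_2 = R$ before $E_1 = S$, whereas the paper's two cases are symmetric and independent. Two small expository debts remain in your version: the step ``consequently every $S$-class meeting $[x]_R$ does so in at least two points'' needs the one-line bridge that any $p \in [z]_S \cap [x]_R$ lies in some $C_j$, so $[p]_{E_1} \subseteq W$ and $[p]_{E_1} \subseteq [z]_S$ by $E_1 \le S$; and ``all $D_i$ meet exactly the same $R$-classes'' needs the remark that $\upp_{E_1}([y]_R)$ is a disjoint union of $E_1$-classes each meeting $[y]_R$, so an $E_1$-class contained in it must itself meet $[y]_R$. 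Both are immediate, so these are presentation gaps, not logical ones.
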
	 
	
	\begin{proof}
		We prove $\Leftarrow$  direction first. So assume the conditions. We note that by Lemma \ref{p15}, any other solutions, $(E_1,  E_2)$ to the given $L_2L_1$ operator must be coarser than $(S, R).$ Thus, if there is another solution to the given $L_2L_1$ operator, $(E_1, E_2)$ then at least one of $E_1 < S,$ $E_2 < R$ must hold.
		
		First we assume to get a contradiction that there exists a solution $(E_1, E_2)$ which is such  that $E_1 < S.$ That is,  $E_1$ contains a splitting of at least one of the equivalences classes of $S,$ say $[a]_S.$ Hence $|[a]_S| \geq 2.$ By assumption there exists a $[z]_R \in R$  such that $|[a]_S \cap [z]_R| = 1.$ Hence there is a $[z]_{E_2} \in E_2$ such that $|[a]_S \cap [z]_{E_2}| = 1$ since $E_2 \leq R.$ Call the element in this intersection $v$ say. We note that $[v]_{E_2} = [z]_{E_2}.$  Now as $[a]_S$ is spilt into smaller classes in $E_1,$ $v$ must be in one of these classes, $[v]_{E_1}.$   Consider the minimal pre-image of the minimal output set of $L_2L_1$ which contains $[v]_R.$ Call this set $Y_{(S, R)}.$ For the solution $(S, R),$ $Y_{(S, R)}$ contains all of $[a]_S$ since $v \in [a]_S.$ But for the solution  $(E_1, E_2),$ the minimal pre-image of the minimal output set of $L_2L_1$ which contains $[v]_R,$ $Y_{(E_1, E_2)}, $ is such that $Y_{(E_1, E_2)} = (Y_S - [a]_s) \cup [v]_{E_1} \neq Y_S.$  Hence the output list for $(E_1, S)$ is different from the given one which is a contradiction.               
		
		Next, suppose to get a contradiction there exists a solution $(E_1, E_2)$ which is such  that $E_2 < R.$ That is, $E_2$ contains a splitting of at least one of the equivalences classes of $R,$ say $[a]_R.$ Hence $|[a]_R| \geq 2.$ By assumption there exists a $[z]_S \in S$  such that $|[a]_R \cap [z]_S| = 1.$ Hence there is a $[z]_{E_1} \in E_1$ such that $|[a]_R \cap [z]_{E_1}| = 1$ since $E_1 \leq S.$ Call the element in this intersection $v$ say. We note that $[v]_{E_1} = [z]_{E_1}.$ Now as $[a]_R$ is spilt into smaller classes in $E_2,$ $v$ must be in one of these classes, $[v]_{E_2}.$ Consider the set $[a]_R - [v]_{E_2}.$  The minimal pre-image of the minimal output set which contains this set in the $(S,R)$ solution,  $Y_{(S, R)}$  contains $[v]_S$ since here the minimal output set which contains  $([a]_R - [v]_{E_2}),$ must contain all of $[a]_R$ which contains $v.$ If $(E_1, E_2)$ were the solution then the minimal pre-image of the minimal output set which contains  $([a]_R - [v]_{E_2}),$ $Y_{(E_1, E_2)},$ would not contain $[v_s]$ since $([a]_R - [v]_{E_2}) \cap [v]_S = \emptyset.$ That is, $Y_{(E_1, E_2)}   \neq Y_S.$  Hence the output list for $(E_1, E_2)$ is different from the given one which is a contradiction.

		Now we prove $\Rightarrow $ direction. Suppose that $(E_1, E_2)$ is the unique solution, and assume that the condition does not hold.  By Theorem \ref{t2},  $(E_1, E_2) = (S, R).$ Then either there exists an $[x]_R \in R$ such that for all $[y]_S \in S$ such that $[x]_R \cap [y]_S \neq \emptyset$  we have that $|[x]_R \cap [y] _S| \geq 2$ or  there exists an $[x]_S \in S$ such that for all $[y]_R \in R$ such that $[x]_S \cap [y]_R \neq \emptyset $ we have that $|[x]_S \cap [y] _R| \geq 2.$   
		
		We consider the first case. Suppose that $[x]_R$ has non-empty intersection with with $n$ sets in $S.$ We note that $n \geq 1.$ Form a sequence of these sets; $S_1,...S_n. $ Since $|[x]_R \cap S_i| \geq 2$ for each $i$ such that $i= 1,...n,$ let $\{a_{i1}, a_{i2} \}$ be in $[x]_R \cap S_i$ for each $i$ such that $i= 1,...n.$ We split $[x]_R$ to form a finer $E_2$ as follows: Let $P = \{a_{i1}\ | \ i= 1,...n\}$  and $Q = [x]_R - P$ be equivalence classes in $E_2$  and for the remaining equivalence classes in $E_2,$ let $[y] \in E_2$ iff $[y] \in R$ and  $[y]_R \neq [x]_R.$  Now, $L_RL_S(X) = Y $ iff the union of $S$-equivalence classes in $X$ contains the union of $R$-equivalence classes which is equal to $Y.$ So, for the $(S, E_2)$ solution, the only way that $L_{E_2}L_S(X)$ could be different from $L_RL_S(X)$  is if there is a union of $S$-equivalence classes in $X$ which contain $P$ but not $Q$ or which contain $Q$ but not $P$ (since $P$ and $Q$ always occur together as $[x]_R$ for the $(S, R)$ solution). However, this is not the case as follows. Since $P$ and $Q$ exactly spilt all of the equivalence classes of $S$ which have non-empty intersection with $[x]_R,$ we have that $\textit{\textbf{u}}_S(P) = \textit{\textbf{u}}_S(Q).$ That is, $P$ intersects exactly the same equivalence classes of $S$ as $Q.$ Therefore,  $P$ is contained by exactly the same union of equivalence classes in $S$ as $Q.$ Therefore, a union of $S$-equivalence classes in $X$ contains $P$ iff it contains $Q$ iff its contains $[x]_R.$ Hence, $L_RL_S(X) = L_{E_2}L_S(X)$ for all $X \in \mathscr{P}(V)$ and  if $(S, R)$ is a solution for the given vector, then so is $(S, E_2)$ which is a contradiction of assumed uniqueness of $(S, R).$

		We consider the second case. Suppose that $[x]_S$ has non-empty intersection with with $n$ sets in $R.$ We note that $n \geq 1.$ Form a sequence of these sets; $R_1, \dots R_n. $ Since $|[x]_S \cap R_i| \geq 2$ for each $i$ such that $i= 1, \dots n,$ let $\{a_{i1}, a_{i2} \}$ be in $[x]_S \cap R_i$ for each $i$ such that $i= 1, \dots n.$ We split $[x]_S$ to form a finer $E_1$ as follows: Let $P = \{a_{i1}\ | \ i= 1, \dots n\}$ be one equivalence class and let $Q = [x]_R - P$ be another and for any $[y]_S \in S$ such that $[y]_S \neq [x]_S, $ let $[y] \in E_1$ iff $[y] \in S.$ Again, $L_RL_S(X) = Y $ iff the union of $S$-equivalence classes in $X$ contains the union of $R$-equivalence classes which is equal to $Y.$ So, for the $(E_1, R)$ solution, the only way that $L_RL_{E_1}(X)$ could be different from $L_RL_S(X)$  is if (i) $P$ is contained in $L_RL_S(X)$ while  $Q$ is not contained in $L_RL_S(X)$ or  (ii) $Q$ is contained in $L_RL_S(X)$ while  $P$ is not contained in $L_RL_S(X).$ Since $P$ and $Q$ spilt all of the equivalence classes of $R$ which have non-empty intersection with $[x]_S,$ this implies that $\textit{\textbf{u}}_R(P) = \textit{\textbf{u}}_R(Q).$ That is, $P$ and $Q$ intersect exactly the same equivalence classes of $R.$ So if $P$ is needed to contain an equivalence class in $R$ for the $(S, R)$ solution, then $Q$ is also needed. In other words, if $L_2L_1(X) = Y,$ then for any minimal set such $Y_m \subseteq X$ such that $L_2L_1(Y_m) = Y,$ $P$ is contained in $Y_m$ iff $Q$ is contained in $Y_m$ iff $[x]_S$ is contained in $Y_m.$ Hence, $L_RL_S(X) = L_RL_{E_1}(X)$ for all $X \in \mathscr{P}(V)$ and  if $(S, R)$ is a solution for the given vector, then so is $(E_1, R)$ which is a contradiction of assumed uniqueness of $(S, R).$
	\end{proof} 
	
	\noindent The following theorem sums up the results of Theorem \ref{t2} and Theorem \ref{t3}. 
	
	\vspace{2mm}
	\begin{theorem} \label{t4}
		Let $V$ be a finite set and let $L_2L_1: \mathscr{P}(V) \rightarrow \mathscr{P}(V)$ be a fully defined successive approximation operator on $\mathscr{P}(V).$   If $(E_1, E_2)$ is a solution of the operator then it is the unique solution iff the following holds:

		\vspace{2mm}
		\noindent (i) For each $[x]_{E_2}, [y]_{E_2} \in E_2,$ if $[x]_{E_2} \neq [y]_{E_2}$ then $\textbf{u}_{E_1}([x]_{E_2}) \neq \textbf{u}_{E_1}([y]_{E_2}), $  
		
		\vspace{2mm}
		\noindent (ii) For each $[x]_{E_1}, [y]_{E_1} \in E_1,$ if $[x]_{E_1} \neq [y]_{E_1}$ then $\textbf{u}_{E_2}([x]_{E_1}) \neq \textbf{u}_{E_2}([y]_{E_1})$.
		
		\vspace{2mm}
		\noindent (iii) For any $[x]_{E_2} \in E_2,$ 
		there exists $[z]_{E_1} \in E_1$ such that, $ |[x]_{E_2} \cap [z]_{E_1}| = 1.$ 
		
		\vspace{2mm}
		\noindent(iv) For any $[x]_{E_1} \in E_1,$ 
		there exists $[z]_{E_2} \in E_2$ such that, $ |[x]_{E_1} \cap [z]_{E_2}| = 1.$

	\end{theorem}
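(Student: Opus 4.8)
The plan is to derive Theorem \ref{t4} as a synthesis of Theorem \ref{t2} and Theorem \ref{t3}, the bridge between them being the identification $(E_1, E_2) = (S, R)$, where $(S, R)$ is the pair returned by Algorithm 4.1. First I would observe that conditions (i) and (ii) of the present statement are verbatim conditions (i) and (ii) of Theorem \ref{t2}, while conditions (iii) and (iv) are exactly conditions (i) and (ii) of Theorem \ref{t3} once we substitute $R = E_2$ and $S = E_1$. So the whole argument reduces to transporting the two earlier theorems onto the single pair $(E_1, E_2)$ and checking that the labels line up.

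For the $\Rightarrow$ direction, I would assume $(E_1, E_2)$ is the unique solution. Since a solution exists, Theorem \ref{t2} produces a solution satisfying (i) and (ii) which equals $(S, R)$; by uniqueness this must be $(E_1, E_2)$ itself, so $(E_1, E_2) = (S, R)$ and conditions (i) and (ii) hold. Because $(S, R)$ is then the pair returned by Algorithm 4.1 and is the unique solution, the $\Rightarrow$ direction of Theorem \ref{t3} applied to $(S, R)$ yields its conditions (i) and (ii), which are precisely conditions (iii) and (iv) here.

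For the $\Leftarrow$ direction, I would assume $(E_1, E_2)$ is a solution satisfying (i)--(iv). From conditions (i) and (ii) the reconstruction argument inside the proof of Theorem \ref{t2} (the Claim 1 / Claim 2 analysis, together with Lemma \ref{p15}) forces $E_2 = R$ and $E_1 = S$, so again $(E_1, E_2) = (S, R)$. Conditions (iii) and (iv) now read as conditions (i) and (ii) of Theorem \ref{t3} for the pair $(S, R)$ returned by the algorithm, so the $\Leftarrow$ direction of Theorem \ref{t3} gives that $(S, R) = (E_1, E_2)$ is the unique solution.

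I expect the only genuine subtlety, used in both directions, to be the step that conditions (i) and (ii) alone pin down $(E_1, E_2) = (S, R)$: Theorem \ref{t2} as literally stated asserts the existence of \emph{some} solution satisfying (i), (ii) equal to $(S, R)$, so I must make sure its proof actually shows that \emph{every} solution satisfying (i) and (ii) coincides with $(S, R)$. This holds because the Claim 1 / Claim 2 reconstruction there invokes only conditions (i), (ii) and Lemma \ref{p15}, and never the particular origin of the solution. Beyond this the argument is pure bookkeeping: matching the four conditions of Theorem \ref{t4} to the two-plus-two conditions of Theorems \ref{t2} and \ref{t3} under the relabelling $S = E_1$, $R = E_2$.
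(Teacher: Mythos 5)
Your proof is correct and takes essentially the same route as the paper: the paper offers no separate formal proof of Theorem \ref{t4}, presenting it as a direct synthesis of Theorem \ref{t2} and Theorem \ref{t3} (with Remark 4.3 sketching exactly the identification $(E_1, E_2) = (S, R)$ that you make precise). Your explicit handling of the one subtlety — that the Claim 1/Claim 2 argument inside the proof of Theorem \ref{t2} forces \emph{every} solution satisfying (i) and (ii) to coincide with $(S, R)$, not merely the particular solution constructed there — is the correct reading and fills in what the paper leaves implicit.
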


	\noindent \textbf{Remark 4.3:} If an equivalence relation pair satisfies the conditions of  Theorem \ref{t2}, then the $L_2L_1$ operator based on those relations would be such that if there exists other solutions then they would be finer pairs of equivalence relations.   On the other hand, if an equivalence relation pair satisfies the conditions of  Theorem \ref{t3}, then the $L_2L_1$ operator based on those relations would be such that if there exists other solutions then they would be coarser pairs of equivalence relations.  Hence, if an equivalence relation pair satisfies the conditions of both Theorem \ref{t2} and Theorem \ref{t3}, then the $L_2L_1$ operator produced by it is unique.
	
	\begin{corollary} \label{c2}
		Let $V$ be a finite set and let $L_2L_1: \mathscr{P}(V) \rightarrow \mathscr{P}(V)$ be a fully defined successive approximation operator on $\mathscr{P}(V).$ If $(S, R)$ is the solution returned by Algorithm 4.1,   is such that it is the unique solution then following holds:
		
		\vspace{2mm}
		\noindent  For any $x \in V$ we have that;\\
		(i)  $[x]_S \not\supseteq [x]_R$ unless $|[x]_R| =1$ \\
		(ii) $[x]_R \not\supseteq [x]_S$ unless $|[x]_S| =1,$  
	\end{corollary}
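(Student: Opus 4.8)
The plan is to prove the contrapositive of each claim, using the uniqueness characterization just established in Theorem \ref{t3}. Since $(S,R)$ is stated to be the unique solution returned by Algorithm 4.1, conditions (i) and (ii) of Theorem \ref{t3} are available to me: for every $[x]_R \in R$ there is some $[z]_S \in S$ with $|[x]_R \cap [z]_S| = 1$, and symmetrically for every $[x]_S \in S$ there is some $[z]_R \in R$ with $|[x]_S \cap [z]_R| = 1$. The two parts of the corollary are dual (swapping the roles of $S$ and $R$), so I would prove (i) in full and then indicate that (ii) follows by the identical argument with $S$ and $R$ interchanged.

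For part (i), I would fix $x \in V$ and suppose, aiming at a contradiction, that $[x]_S \supseteq [x]_R$ while $|[x]_R| \geq 2$. The idea is that $[x]_R \subseteq [x]_S$ forces $[x]_R$ to sit entirely inside a single $S$-class, so I would examine how $[x]_R$ meets the $S$-classes. By Theorem \ref{t3}(i) applied to $[x]_R$, there is a $[z]_S \in S$ with $|[x]_R \cap [z]_S| = 1$. Since $x \in [x]_R$ and $[x]_R \subseteq [x]_S$, every element of $[x]_R$ lies in $[x]_S$, so the only $S$-class meeting $[x]_R$ is $[x]_S$ itself; that forces $[z]_S = [x]_S$ and hence $|[x]_R \cap [x]_S| = |[x]_R| = 1$, contradicting $|[x]_R| \geq 2$. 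This gives exactly the statement that $[x]_S \not\supseteq [x]_R$ unless $|[x]_R| = 1$.

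The main subtlety, and the step I would be most careful about, is the observation that $[x]_R \subseteq [x]_S$ makes $[x]_S$ the unique $S$-class intersecting $[x]_R$: I need that no other $S$-class can meet $[x]_R$, which is immediate from $[x]_R \subseteq [x]_S$ together with the fact that distinct $S$-classes are disjoint. Once that is nailed down, Theorem \ref{t3}(i) does all the remaining work, since its guaranteed singleton-intersection class must then coincide with $[x]_S$. For part (ii) I would run the symmetric argument: assume $[x]_R \supseteq [x]_S$ with $|[x]_S| \geq 2$, note that $[x]_R$ is then the unique $R$-class meeting $[x]_S$, and invoke Theorem \ref{t3}(ii) to obtain a $[z]_R$ with $|[x]_S \cap [z]_R| = 1$, which forces $[z]_R = [x]_R$ and hence $|[x]_S| = 1$, a contradiction. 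Thus both containments fail outside the stated singleton exceptions, completing the proof.
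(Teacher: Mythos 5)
Your proposal is correct and takes essentially the same route as the paper: the paper's proof is simply the one-line remark that the corollary follows directly from the conditions of Theorem \ref{t3}, and your argument is precisely that derivation spelled out (disjointness of equivalence classes forces the guaranteed singleton-intersection class to coincide with $[x]_S$, respectively $[x]_R$, yielding the contradiction).
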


	\begin{proof}
		This follows directly from the conditions in Theorem \ref{t3}.
	\end{proof}
	
	\noindent \textbf{Example 4.1} (\emph{revisited}): Consider again, the given output vector of Example 4.1. First we form the $(S, R)$ pair using Algorithm 4.1. We get that $R = \{  \{a, b\}, \{c, d\}, \{e\}   \}$ and $S = \{  \{a, c\}, \{b\}, \{d, e\}  \}.$ Since this is the pair produced from Algorithm 4.1, we know that it satisfies the conditions of Theorem \ref{t2}. Now we need only to check if this pair satisfies the conditions of Theorem \ref{t3} to see if it is the only solution to do so. To keep track of which  equivalence class a set belongs to, we will index a set belonging to either $S$ or $R$ by $S$ or $R$ respectively. Then we see that $|\{a, b\}_R \cap \{b\}_S| = 1,$ $|\{c, d\}_R \cap \{a, c\}_S| = 1$ and $| \{e\}_R \cap \{d,e\}_S| = 1.$ This verifies both conditions of Theorem \ref{t3} and therefore this is the unique solution of the given operator.

	\begin{proposition} \label{p17}
		Let $V$ be a finite set and $L_2L_1 : \mathscr{P}(V) \rightarrow \mathscr{P}(V)$ be  a given defined operator. If $(E_1, E_2)$ is a unique solution such that either $E_1 \neq Id$ or $E_2 \neq Id$ where  $Id$ is the identity equivalence relation on $V$  then, 
		
		\vspace{2mm}
		\noindent (i) $  E_1 \not\leq E_2, $\\
		(ii) $ E_2 \not\leq E_1.$

	\end{proposition}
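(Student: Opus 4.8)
The plan is to derive the conclusion directly from the characterisation of unique solutions in Theorem \ref{t4}. Since $(E_1, E_2)$ is assumed to be \emph{the} unique solution, it must satisfy conditions (i)--(iv) of that theorem. I would prove part (i) of the proposition by contradiction, assuming $E_1 \leq E_2$, and then obtain part (ii) by exploiting the symmetry of the four conditions under the exchange $E_1 \leftrightarrow E_2$, which swaps (i) with (ii) and (iii) with (iv).

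First I would assume $E_1 \leq E_2$, so that every $E_1$-class is contained in a single $E_2$-class. The key observation is that under this nesting the ``intersection-exactly-one'' requirement collapses: for a fixed $[x]_{E_1}$ and any $[z]_{E_2} \in E_2$, the set $[x]_{E_1} \cap [z]_{E_2}$ equals $[x]_{E_1}$ when $[z]_{E_2}$ is the $E_2$-class containing $[x]_{E_1}$, and is empty otherwise. Hence the only way condition (iv) can be satisfied for $[x]_{E_1}$ is if $|[x]_{E_1}| = 1$. As this must hold for every $E_1$-class, we conclude $E_1 = Id$.

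Next I would substitute $E_1 = Id$ into condition (ii). For singleton classes $[x]_{E_1} = \{x\}$ we have $\textbf{u}_{E_2}(\{x\}) = [x]_{E_2}$, so condition (ii) asserts precisely that distinct elements lie in distinct $E_2$-classes, i.e. $E_2 = Id$. Thus $E_1 \leq E_2$ together with uniqueness forces both $E_1 = Id$ and $E_2 = Id$, contradicting the hypothesis that not both relations equal $Id$. This establishes (i). For (ii), I would run the mirror-image argument using conditions (iii) and (i): assuming $E_2 \leq E_1$, condition (iii) forces $E_2 = Id$ (each $E_2$-class lies inside one $E_1$-class, so the only nonempty intersection is the whole $E_2$-class, which must then be a singleton), and condition (i) with $\textbf{u}_{E_1}(\{x\}) = [x]_{E_1}$ then forces $E_1 = Id$, again a contradiction.

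I do not anticipate a serious obstacle; the argument is short once Theorem \ref{t4} is invoked. The only point demanding care is the bookkeeping about the direction of refinement (here $E_1 \leq E_2$ means $E_1$ is the \emph{finer} relation, so each $E_1$-class sits inside a single $E_2$-class) together with the correct evaluation of $\textbf{u}$ on singletons and on equivalence classes; reversing either of these would break the collapse of the intersection conditions (iii)--(iv) that drives the whole proof.
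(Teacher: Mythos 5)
Your proof is correct, but it follows a genuinely different route from the paper's. The paper first shows that under the stated hypothesis neither relation can equal $Id$: if, say, $(E_1, Id)$ were a solution with $E_1 \neq Id$, then $(Id, E_1)$ would be a distinct solution generating the same operator (both compose to $L_{E_1}$), contradicting uniqueness. Having secured a class of size at least two in each relation, the paper then assumes $E_1 \leq E_2$, takes $e \in E_2$ with $|e| \geq 2$, and argues by cases: $e$ contains either an $E_1$-class of size $\geq 2$ (violating Corollary \ref{c2}) or two $E_1$-singletons (violating condition (ii) of Theorem \ref{t2}). You instead run a single chain of implications off Theorem \ref{t4}: nesting collapses every intersection $[x]_{E_1} \cap [z]_{E_2}$ to either $\emptyset$ or $[x]_{E_1}$, so condition (iv) forces $E_1 = Id$; then condition (ii), evaluated on singletons, forces $E_2 = Id$, contradicting the hypothesis outright. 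Your version is cleaner: it needs no case analysis and no separate swap lemma, and it makes transparent exactly which uniqueness conditions do the work. What the paper's route buys in exchange is the swap observation itself, which explains \emph{why} the hypothesis ``$E_1 \neq Id$ or $E_2 \neq Id$'' is indispensable (the pair $(Id, Id)$ really is a unique solution of the identity operator, and is nested), and which the paper reuses later, e.g.\ in proving irreflexivity of the preclusive relation $\not\Rightarrow_{\mathfrak{E}_V^0}$. Both arguments ultimately rest on the same characterisation of uniqueness, so neither is more general than the other.
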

	
	\begin{proof}
		We first observe that if $E_1$ and $E_2$ are unique solutions and both of them are not $Id$ then one of them cannot be equal $Id.$ This is because if $(E_1, Id)$ were solutions to a given $L_2L_1$ operator corresponding to $L_1$ and $L_2$ respectively then $(Id, E_1)$ would also be solutions corresponding to $L_1$ and $L_2$ respectively and the solutions would not be unique. Hence, each of $E_1$ and $E_2$ contains at least one equivalence class of size greater than or equal to two.
		
		Suppose that $E_1 \leq E_2.$ Consider an $e \in E_2$ such that $|e| \geq 2.$ Then $e$ either contains a $f \in E_1$ such that $|f| \geq 2$ or two or more singletons in $E_1.$ Then first violates the condition of Corollary \ref{c2} and the second violates the second condition of Theorem \ref{t2}. Hence the solutions cannot be unique. Similarly, if we suppose that $E_2 \leq E_1.$   
	\end{proof} 
	
	\begin{corollary}
		Let $V$ be a finite set and $L_2L_1 : \mathscr{P}(V) \rightarrow \mathscr{P}(V)$ be a given defined operator. If there exists a unique solution $(E_1, E_2)$  such that either $E_1 \neq Id$ or $E_2 \neq Id$ where $Id$ is the identity equivalence relation on $V$  then,
		
		\vspace{2mm}
		\noindent (i) $ k = \gamma(E_1, E_2) = \frac {|POS_{E_1} (E_2)|}{|V|} < 1 $ or $E_1 \not\Rightarrow E_2$\\
		(ii) $ k = \gamma(E_2, E_1) = \frac {|POS_{E_2} (E_1)|}{|V|}  < 1$ or $E_2 \not\Rightarrow E_1.$
		
	\end{corollary}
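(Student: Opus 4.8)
The plan is to reduce the corollary to Proposition \ref{p17}, which already asserts that under exactly these hypotheses $E_1 \not\leq E_2$ and $E_2 \not\leq E_1$. All that remains is to translate these two ``is not finer than'' statements into the language of positive regions and the dependency degree $\gamma$, so the whole proof is a dictionary translation rather than a fresh argument.

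First I would record the key equivalence linking the refinement order with the positive region: for equivalence relations $C$ and $D$ on $V$, we have $POS_C(D) = V$ if and only if $C \leq D$. This is immediate from the granule description of the positive region: since $POS_C(D) = \bigcup_{X \in D} \low_C(X)$ and $\low_C(X) = \{ v \in V : [v]_C \subseteq X \}$, an element $v$ lies in $POS_C(D)$ exactly when its $C$-class $[v]_C$ is contained in the unique $D$-class $[v]_D$ that contains $v$. Hence $POS_C(D) = V$ says that every $C$-class sits inside a $D$-class, i.e. $C \subseteq D$, which is $C \leq D$. This is precisely the specialisation to singleton families $\mathscr{P} = \{C\}$ and $\mathscr{Q} = \{D\}$ of the characterisation $POS_{IND(\mathscr{P})}(IND(\mathscr{Q})) = V \iff \mathscr{P} \Rightarrow \mathscr{Q}$ recorded in Section 2.3, using $IND(\{C\}) = C$.

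From this equivalence the two quantities in the corollary follow at once. Because $\gamma(C,D) = |POS_C(D)|/|V|$, we have $\gamma(C,D) = 1 \iff POS_C(D) = V \iff C \leq D$, and by the definition of total dependency $C \Rightarrow D$ is exactly $\gamma(C,D) = 1$; hence $\gamma(C,D) < 1$, $C \not\Rightarrow D$, and $C \not\leq D$ are all equivalent. Applying this with $(C,D) = (E_1, E_2)$ converts Proposition \ref{p17}(i), namely $E_1 \not\leq E_2$, into statement (i) of the corollary, and applying it with $(C,D) = (E_2, E_1)$ converts Proposition \ref{p17}(ii), namely $E_2 \not\leq E_1$, into statement (ii). The two clauses joined by ``or'' in each part are in fact the same condition written two ways, so verifying either verifies both.

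There is no genuine obstacle here, since the substantive content lives entirely in Proposition \ref{p17}. The only point that needs care is bookkeeping about the direction of the dependency arrow: one must confirm that $C \Rightarrow D$ corresponds to $C$ being \emph{finer} than $D$ rather than coarser, so that $\gamma < 1$ matches $C \not\leq D$ and not the reverse inclusion. This is pinned down by the defining equivalence $\mathscr{P} \Rightarrow \mathscr{Q} \iff IND(\mathscr{P}) \subseteq IND(\mathscr{Q})$ together with its positive-region reformulation in Section 2.3, so once that convention is fixed the corollary is immediate.
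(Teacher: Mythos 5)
Your proposal is correct and matches the paper's approach: the paper proves this corollary by citing Proposition \ref{p17} (the statement is placed as its corollary) and noting that the conclusion then ``follows immediately from definitions,'' which is exactly the translation $POS_C(D)=V \iff C \leq D \iff \gamma(C,D)=1 \iff C \Rightarrow D$ that you spell out. You have simply made explicit the dictionary step the paper leaves implicit.
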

	
	\begin{proof}
		This follows immediately from definitions.	
	\end{proof}
	
	\begin{proposition} \label{p16}
		Let $V$ be a finite set and $L_2L_1 : \mathscr{P}(V) \rightarrow \mathscr{P}(V)$ be a given defined operator. If there exists exists a unique solution $(E_1, E_2)$   then,
		
		\vspace{2mm}
		\noindent (i) for any $[x]_{E_1} \in E_1,$  $|POS_{E_2}([x]_{E_1})| \leq  1$	
		
		\noindent (ii) for any $[x]_{E_2} \in E_2,$  $|POS_{E_1}([x]_{E_2})| \leq  1.$ 
		
	\end{proposition}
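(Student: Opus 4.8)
The plan is to first pin down the meaning of $POS_{E_2}([x]_{E_1})$: since $[x]_{E_1}$ is a single subset of $V$, I read this as the lower approximation $\low_{E_2}([x]_{E_1})$, i.e. the union of those $E_2$-classes lying entirely inside $[x]_{E_1}$. Because $(E_1,E_2)$ is the unique solution, Theorem \ref{t4} guarantees that it satisfies conditions (i)--(iv) there (equivalently I may invoke Theorem \ref{t2}, Theorem \ref{t3} and Corollary \ref{c2}). I would prove (i) in full and then obtain (ii) by the symmetry that interchanges the roles of $E_1$ and $E_2$ together with conditions (ii) and (iv), so the real work is a single statement.

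To prove (i) I would argue by contradiction and suppose $|POS_{E_2}([x]_{E_1})| \geq 2$. Pick two distinct points $u,v \in \low_{E_2}([x]_{E_1})$; by definition of the lower approximation their whole $E_2$-classes satisfy $[u]_{E_2}\subseteq[x]_{E_1}$ and $[v]_{E_2}\subseteq[x]_{E_1}$. The key observation I would use is that any nonempty subset contained in the single $E_1$-class $[x]_{E_1}$ meets only that one $E_1$-class, so its $E_1$-upper approximation is exactly $[x]_{E_1}$. Now I split into two cases. If $[u]_{E_2}=[v]_{E_2}$, then this one $E_2$-class has size $\geq 2$ and sits inside the single $E_1$-class $[x]_{E_1}$; hence every $E_1$-class meets it in either $0$ or $\geq 2$ points, so none meets it in exactly one point, contradicting condition (iii) of Theorem \ref{t4} applied to $[u]_{E_2}$. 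If $[u]_{E_2}\neq[v]_{E_2}$, then both classes lie in $[x]_{E_1}$, so by the key observation $\upp_{E_1}([u]_{E_2}) = [x]_{E_1} = \upp_{E_1}([v]_{E_2})$, contradicting condition (i) of Theorem \ref{t4}. Either way we reach a contradiction, so $|POS_{E_2}([x]_{E_1})| \leq 1$.

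For (ii) I would run the identical argument with $E_1$ and $E_2$ swapped, replacing condition (iii) by (iv) and condition (i) by (ii) of Theorem \ref{t4}. The main obstacle I anticipate is not the logic but the bookkeeping: first, fixing that $POS$ is read as a lower approximation, so that the members of $POS_{E_2}([x]_{E_1})$ really do generate full $E_2$-classes inside $[x]_{E_1}$; and second, checking that the same-class case genuinely conflicts with the ``intersection of size exactly one'' conditions rather than with condition (i)/(ii). One could alternatively streamline the same-class case through Corollary \ref{c2}, since $[u]_{E_2}\subseteq[x]_{E_1}=[u]_{E_1}$ gives $[u]_{E_1}\supseteq[u]_{E_2}$, forcing $|[u]_{E_2}|=1$; I would likely record this route as well, since it leans directly on the corollary already established.
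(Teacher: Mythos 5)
Your proposal is correct and takes essentially the same approach as the paper: the paper's one-sentence proof likewise reads $POS_{E_2}([x]_{E_1})$ as the union of $E_2$-classes contained in $[x]_{E_1}$ and appeals to the conditions of Theorem \ref{t4} together with Corollary \ref{c2} to conclude that any such contained class must be a singleton and that at most one singleton can occur. Your two-case argument (size $\geq 2$ via condition (iii) or Corollary \ref{c2}; two distinct contained classes via condition (i)) simply fills in the details the paper leaves implicit.
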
 
	
	\begin{proof} 
		This follows from the conditions in Theorem \ref{t4} and Corollary \ref{c2} which imply that for a unique pair solution $(E_1, E_2)$, an equivalence class of one of the equivalence relations  cannot contain any elements of size greater than one  of the other relation and can contain at most one element  of size exactly one of the other relation. 
	\end{proof}   
	
	\begin{corollary}
		Let $V$ be a finite set where $|V| = l$ and $L_2L_1 : \mathscr{P}(V) \rightarrow \mathscr{P}(V)$ be a given defined operator. If there exists exists a unique solution $(E_1, E_2)$   such that $ |E_1| = n$ and $|E_2| = m$ then,
		
		\vspace{2mm}
		\noindent (i) $ k = \gamma(E_1, E_2) = \frac {|POS_{E_1} (E_2)|}{|V|} \leq \frac{m}{l} $ \\
		(ii) $ k = \gamma(E_2, E_1) = \frac {|POS_{E_2} (E_1)|}{|V|} \leq \frac{n}{l} .$
		
		\begin{proof}
			Let $(E_1, E_2) $ be the unique solution of the given $L_2L_1$ operator. This result follows directly from the previous proposition by summing over all the elements in one member  of this pair for  taking its positive region	with respect to the other member of the pair.
		\end{proof}	
		
	\end{corollary}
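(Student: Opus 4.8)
The plan is to unfold each positive region as a union indexed by equivalence classes and then bound it termwise using the previous proposition. For part~(i), I would begin from the definition $POS_{E_1}(E_2) = \bigcup_{X \in E_2} \low_{E_1}(X)$, where the union ranges over the $m = |E_2|$ equivalence classes of $E_2$. The key input is Proposition~\ref{p16}(ii), which (reading $POS_{E_1}([x]_{E_2})$ as the $E_1$-lower approximation of the single class $[x]_{E_2}$) asserts $|\low_{E_1}([x]_{E_2})| \le 1$ for every $[x]_{E_2} \in E_2$. Since the cardinality of a union never exceeds the sum of the cardinalities, this gives $|POS_{E_1}(E_2)| \le \sum_{X \in E_2} |\low_{E_1}(X)| \le m$, and dividing through by $|V| = l$ yields $\gamma(E_1, E_2) \le m/l$.

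Part~(ii) is completely symmetric: I would write $POS_{E_2}(E_1) = \bigcup_{X \in E_1} \low_{E_2}(X)$ as a union over the $n = |E_1|$ classes of $E_1$, invoke Proposition~\ref{p16}(i) to bound each summand $|\low_{E_2}([x]_{E_1})|$ by $1$, and sum to obtain $|POS_{E_2}(E_1)| \le n$, hence $\gamma(E_2, E_1) \le n/l$. One clean way to make the counting airtight is to observe that the approximations being unioned are in fact pairwise disjoint, since distinct equivalence classes $X, X'$ of $E_2$ satisfy $\low_{E_1}(X) \cap \low_{E_1}(X') = \low_{E_1}(X \cap X') = \low_{E_1}(\emptyset) = \emptyset$ by property~4 of Section~2.1; this upgrades the inequality to an exact count $|POS_{E_1}(E_2)| = \sum_{X \in E_2} |\low_{E_1}(X)|$, though only the upper bound is needed for the statement.

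There is no genuine obstacle in this argument: the entire combinatorial substance --- that a unique solution forces each class of one relation to lower-approximate at most a single point of the other --- is already packaged in Proposition~\ref{p16}, which in turn rests on Theorem~\ref{t4} and Corollary~\ref{c2}. The only points requiring care are bookkeeping ones: matching part~(i) of the corollary with part~(ii) of the proposition (and vice versa), and correctly reading the notation $POS_{E_2}([x]_{E_1})$ as the lower approximation of a single equivalence class rather than of a partition. Once these are pinned down, the result is an immediate consequence of subadditivity of cardinality over the $m$- and $n$-fold unions.
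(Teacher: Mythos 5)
Your proof is correct and is essentially the paper's own argument: the paper likewise obtains the bound by summing, over the $m$ (resp.\ $n$) equivalence classes, the termwise bound $|POS_{E_1}([x]_{E_2})| \le 1$ (resp.\ $|POS_{E_2}([x]_{E_1})| \le 1$) from Proposition~\ref{p16} and dividing by $|V| = l$. Your added disjointness observation via property~4 of Section~2.1 is a nice refinement but not needed, exactly as you note.
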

	
	\begin{corollary}
		Let $V$ be a finite set such that $|V| = n$ and $L_2L_1 : \mathscr{P}(V) \rightarrow \mathscr{P}(V)$ be a given defined operator. If there exists exists a unique solution $(E_1, E_2)$  then,
		
		\vspace{2mm}
		\noindent (i) if the minimum size of an equivalence class in $E_1,$  $k_1$ where $	k_1 \geq 2$ then   \\ $ k = \gamma(E_1, E_2) = \frac {|POS_{E_1} (E_2)|}{|V|} = 0.$
		
		\vspace{2mm}
		\noindent (ii)  if the minimum size of an equivalence class in $E_2,$  $k_2$ where $	k_2 \geq 2$ then \\  $ k = \gamma(E_2, E_1) = \frac {|POS_{E_2} (E_1)|}{|V|} = 0.$
		
		\vspace{2mm}
	\end{corollary}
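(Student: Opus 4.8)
The plan is to reduce both statements to Proposition~\ref{p16} together with the elementary fact that a lower approximation is always a union of equivalence classes. I will carry out part (i); part (ii) is identical after interchanging the roles of $E_1$ and $E_2$ and appealing to the other half of Proposition~\ref{p16}.

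First I would unfold the positive region using its definition, writing $POS_{E_1}(E_2) = \bigcup_{[x]_{E_2} \in E_2} \low_{E_1}([x]_{E_2})$, so that it suffices to prove each term $\low_{E_1}([x]_{E_2})$ is empty. Since $(E_1,E_2)$ is assumed to be a unique solution, Proposition~\ref{p16}(ii) applies and gives $|POS_{E_1}([x]_{E_2})| = |\low_{E_1}([x]_{E_2})| \leq 1$ for every class $[x]_{E_2} \in E_2$.

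The decisive step is to play this cardinality cap against the hypothesis $k_1 \geq 2$. By the granule-based definition in~\eqref{eq:2}, $\low_{E_1}([x]_{E_2})$ is a union of equivalence classes of $E_1$; hence if it were nonempty it would contain at least one full $E_1$-class and so would have cardinality at least $k_1 \geq 2$, contradicting the bound $\leq 1$. Therefore $\low_{E_1}([x]_{E_2}) = \emptyset$ for every $[x]_{E_2} \in E_2$, so $POS_{E_1}(E_2) = \emptyset$ and $\gamma(E_1,E_2) = |POS_{E_1}(E_2)|/|V| = 0$, as required.

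I do not expect a genuine obstacle here: the whole argument is the incompatibility of a size-at-most-one ceiling (from uniqueness, via Proposition~\ref{p16}) with a size-at-least-two floor (from every $E_1$-class having size $\geq k_1 \geq 2$) for any nonempty union of $E_1$-classes. The only points needing mild care are invoking the correct half of Proposition~\ref{p16} in each part and observing that ``minimum class size $k_1 \geq 2$'' rules out singleton $E_1$-classes, which is exactly what would otherwise permit a lower approximation of size one.
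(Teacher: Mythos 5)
Your proof is correct and takes essentially the same approach as the paper: the paper's one-line proof observes that, by the uniqueness conditions, no member of $E_2$ can contain any member of $E_1$ once $E_1$ has no singletons, whence $POS_{E_1}(E_2) = \emptyset$; you reach the same emptiness by playing Proposition~\ref{p16}'s bound $|POS_{E_1}([x]_{E_2})| \leq 1$ against the fact that a nonempty lower approximation is a union of whole $E_1$-classes, each of size at least $k_1 \geq 2$. Since Proposition~\ref{p16} is itself just the cardinality form of that same containment restriction, the difference is only in which previously established consequence of uniqueness is cited.
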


	\begin{proof}  
		Since no member of $E_2$ can contain any member of $E_1$ because $E_1$ has no singletons, we get that $\frac {|POS_{E_1} (E_2)|}{|V|} = 0.$  Similarly for Part (ii). 
	\end{proof}
	
	\begin{proposition}
		Let $V$ be a finite set  and $L_2L_1 : \mathscr{P}(V) \rightarrow \mathscr{P}(V)$ be a given defined operator. If there exists a unique solution $(E_1, E_2)$  such that $|E_1| = m$ and $|E_2| = n$ and $S_1$ is the number of singletons in $E_1$ and  $S_2$ is the number of singletons in $E_2,$ then,
		
		\vspace{2mm}
		\noindent (i) $ S_1 \leq n$
		
		\noindent (ii) $ S_2 \leq m.$
		
	\end{proposition}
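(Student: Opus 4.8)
The plan is to prove part (i) by exhibiting an injection from the collection of singleton classes of $E_1$ into $E_2$; since $|E_2| = n$, this forces $S_1 \leq n$. Part (ii) then follows verbatim by swapping the roles of $E_1$ and $E_2$. The whole argument rests on the structural fact already extracted in the proof of Proposition \ref{p16}, namely that for a unique solution $(E_1, E_2)$ each equivalence class of one relation contains no class of the other of size $\geq 2$ and contains \emph{at most one} singleton class of the other. Specialising the companion statement of Proposition \ref{p16} to singletons, every class $[z]_{E_2} \in E_2$ contains at most one singleton class of $E_1$ (equivalently, $|POS_{E_1}([z]_{E_2})| \leq 1$ means at most one $E_1$-class sits inside $[z]_{E_2}$, and it must be a singleton).

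First I would fix the list of singleton classes of $E_1$, say $\{x_1\}, \dots, \{x_{S_1}\}$, and define the map $\varphi$ sending each singleton $\{x_i\} \in E_1$ to the unique class $[x_i]_{E_2} \in E_2$ containing $x_i$. This is well defined since every element of $V$ lies in exactly one $E_2$-class, and note $\{x_i\} \subseteq [x_i]_{E_2}$ holds trivially because $x_i \in [x_i]_{E_2}$. To establish injectivity, I would suppose two distinct singletons $\{x_i\} \neq \{x_j\}$ satisfy $\varphi(\{x_i\}) = \varphi(\{x_j\}) = [z]_{E_2}$. Then $x_i, x_j \in [z]_{E_2}$, so $[z]_{E_2}$ contains the two distinct singleton $E_1$-classes $\{x_i\}$ and $\{x_j\}$, contradicting the fact (from Proposition \ref{p16}, itself a consequence of the uniqueness conditions of Theorem \ref{t4} and Corollary \ref{c2}) that an $E_2$-class contains at most one singleton class of $E_1$. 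Hence $\varphi$ is injective and $S_1 \leq |E_2| = n$, giving (i).

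For part (ii) I would interchange $E_1$ and $E_2$ throughout, using part (i) of Proposition \ref{p16} in place of part (ii): each $E_1$-class contains at most one singleton of $E_2$, so the analogous map from singleton classes of $E_2$ into $E_1$ is injective and $S_2 \leq |E_1| = m$. I do not expect a genuine obstacle here: the argument is a counting step once the ``at most one singleton'' property is granted. The only point requiring care is making sure that the correct direction of Proposition \ref{p16} is invoked for each part (part (ii) of that proposition controls singletons of $E_1$ inside $E_2$-classes, and part (i) controls singletons of $E_2$ inside $E_1$-classes), and that $\varphi$ is verified to be a genuine function before injectivity is argued.
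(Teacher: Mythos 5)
Your proof is correct and takes essentially the same route as the paper: the paper's own proof observes that the uniqueness conditions of Theorem \ref{t4} forbid two distinct singletons of $E_1$ from lying inside a single $E_2$-class and then applies the pigeonhole principle, which is precisely the injection $\varphi$ you construct (and symmetrically for part (ii)). The only cosmetic difference is that you justify the key ``at most one singleton per class'' fact by citing Proposition \ref{p16} (itself a consequence of Theorem \ref{t4} and Corollary \ref{c2}) rather than invoking the conditions of Theorem \ref{t4} directly.
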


	\begin{proof}
		We note that the conditions in Theorem \ref{t4}  imply that no two singletons in $E_1$ can be contained by any equivalence class in $E_2$ and vice versa. 	The result thus follows on application of the pigeonhole principle between the singletons in one equivalence relation and the number of elements in the other relation. 
	\end{proof}

	\subsection{A Derived Preclusive Relation and a Notion of  \\ Independence} 
	
	In \cite{PR}, Cattaneo and Ciucci found that preclusive relations are quite useful for using rough approximations in information systems.  In this direction,  we will  define a related notion of independence of equivalence relations from it. 
	
	Let $V$ be a finite set and let $\mathfrak{E}_V$ be the set of all equivalence relations on $V.$ Also, let $\mathfrak{E}_V^0 = \mathfrak{E}_V - Id_V,$ where $Id_V$ is the identity relation on $V.$  From now on, where the context is clear, we will omit the subscript. We now define a relation on $\mathfrak{E^0}, \ $ $\not\Rightarrow_{\mathfrak{E^0} },  $ as follows:   
	
	\vspace{2mm}  
	Let $E_1$ and $E_2$ be in $\mathfrak{E^0}.$ Let $L_2L_1: \mathscr{P}(V) \rightarrow \mathscr{P}(V)$ where $L_1$ and $L_2$ are lower approximation operators based on $E_1$ and $E_2$ respectively. Then,
	
	\vspace{2mm}
	\begin{center}
		$E_1 \not\Rightarrow_{\mathfrak{E^0}}  E_2$ iff $L_2L_1$ is a unique approximation operator. 
	\end{center}
	\vspace{2mm}
	
	\noindent That is, if for no other $E_3$ and $E_4$ in $\mathfrak{E^0}$ where at least one of $E_1 \neq E_3$ or $E_2 \neq E_4$ holds,  is it the case that the operator $L_2L_1 = L_3L_4,$ where $L_3$ and $L_4$ are lower approximation operators based on $E_3$ and $E_4$ respectively.
	
	\begin{definition}
		Let $V$ be a set and $E_1, E_2 \in 	\mathfrak{E}_V^0$. We say that $E_1$ is $ \mathfrak{E}_V^0$--\textbf{independent} of $E_2$ iff $E_1 \not\Rightarrow_{\mathfrak{E}_V^0}  E_2.$ Also, if $\lnot (E_1 \not\Rightarrow_{\mathfrak{E}_V^0}  E_2), $ we simply write $E_1 \Rightarrow_{\mathfrak{E}_V^0} E_2.$ Here, we say the  $E_1$ is $ \mathfrak{E}_V^0$--\textbf{dependent} of $E_2$ iff $E_1 \Rightarrow_{\mathfrak{E}_V^0} E_2.$ 
		
	\end{definition}
	
	\begin{proposition}
		$\not\Rightarrow_{\mathfrak{E}_V^0}$ is a preclusive relation.	 
	\end{proposition}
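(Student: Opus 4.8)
The plan is to verify the two defining properties of a preclusive relation, symmetry and anti-reflexivity (irreflexivity), for $\not\Rightarrow_{\mathfrak{E}_V^0}$ on $\mathfrak{E}_V^0$. Throughout I would lean on the characterisation of unique solutions in Theorem \ref{t4}: for equivalence relations $E_1,E_2$, the pair $(E_1,E_2)$ is the unique solution of $L_2L_1 = L_{E_2}L_{E_1}$ precisely when conditions (i)--(iv) of that theorem hold. Since $(E_1,E_2)$ is by construction a solution of $L_{E_2}L_{E_1}$, Theorem \ref{t4} applies directly and converts each instance of ``$L_2L_1$ is a unique operator'' into the conjunction of (i)--(iv).

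For symmetry I would observe that the four conditions of Theorem \ref{t4} are invariant under interchanging the two relations together with reversing the order of composition. Writing out (i)--(iv) for the pair $(E_2,E_1)$ against the operator $L_{E_1}L_{E_2}$ and comparing term by term, condition (i) for $(E_2,E_1)$ is exactly condition (ii) for $(E_1,E_2)$, condition (ii) becomes (i), condition (iii) becomes (iv), and condition (iv) becomes (iii); so the two four-element condition sets coincide. Hence $(E_1,E_2)$ is the unique solution of $L_{E_2}L_{E_1}$ iff $(E_2,E_1)$ is the unique solution of $L_{E_1}L_{E_2}$, which is precisely $E_1\not\Rightarrow_{\mathfrak{E}_V^0}E_2$ iff $E_2\not\Rightarrow_{\mathfrak{E}_V^0}E_1$.

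For anti-reflexivity I must show that no $E\in\mathfrak{E}_V^0$ satisfies $E\not\Rightarrow_{\mathfrak{E}_V^0}E$; equivalently, that $L_EL_E$ is never a unique operator. By idempotence of the lower approximation (Property 11 of Section 2.1) we have $L_EL_E=L_E$, and $(E,E)$ is a solution. Since $E\neq Id$ for $E\in\mathfrak{E}_V^0$, there is a class $[x]_E$ with $|[x]_E|\geq 2$, and because $[x]_E\cap[z]_E\in\{[x]_E,\emptyset\}$ can never have size $1$, the pair $(E,E)$ fails condition (iii) of Theorem \ref{t4} and is therefore not the unique solution. (Concretely, since $L_{Id}$ is the identity map on $\mathscr{P}(V)$, the same operator is also produced by $(Id,E)$ and $(E,Id)$, exactly the mechanism used in the proof of Proposition \ref{p17}.) Either way $\lnot(E\not\Rightarrow_{\mathfrak{E}_V^0}E)$, giving anti-reflexivity and completing the proof.

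The substantive step is the symmetry argument, and the point to be careful about is the bookkeeping: one must track the orientation of the composition correctly (the definition attaches $L_1$ to the first relation and $L_2$ to the second, so reversing the pair reverses the order in which the two lower approximations are applied), and then check that the swap $E_1\leftrightarrow E_2$ carries the condition set of Theorem \ref{t4} bijectively onto itself. Anti-reflexivity is routine once idempotence is invoked; the only thing worth flagging is that the alternate solutions witnessing non-uniqueness are precisely the degenerate $Id$-pairs, consistent with how uniqueness is treated elsewhere in the paper.
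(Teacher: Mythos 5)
Your proof is correct and takes essentially the same route as the paper: symmetry is established exactly as there, by observing that the four conditions of Theorem \ref{t4} are mapped onto themselves when the pair is swapped (with the order of composition reversed), and irreflexivity by showing that $(E,E)$ with $E \neq Id$ cannot be a unique solution. The only cosmetic difference is that the paper obtains irreflexivity by citing Proposition \ref{p17}, whereas you verify the failure of condition (iii) of Theorem \ref{t4} directly; since Proposition \ref{p17} is itself proved from those same uniqueness conditions (and from the $(Id,E)$/$(E,Id)$ mechanism you note parenthetically), the two arguments coincide in substance.
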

	
	\begin{proof} 
		We recall that a preclusive relation is one which is irreflexive and symmetric.	Let $E \in \mathfrak{E^0}_V.$ Since $E \neq Id,$ then by application of Proposition 4.2.3 $(E, E)$ does not generate a unique $L_2L_1$ operator and therefore $E \Rightarrow_{\mathfrak{E}_V^0} E.$ Hence $ \not\Rightarrow_{\mathfrak{E}_V^0}$ is irreflexive.
		
		Now, suppose that $E_1, E_2 \in \mathfrak{E^0}_V$ are such that $ E_1 \not\Rightarrow_{\mathfrak{E}_V^0} E_2.$ Then $(E_1, E_2)$ satisfies the conditions of Theorem \ref{t4}. Since together, the four conditions of the theorem are symmetric (with conditions (i) and (ii) and conditions (iii) and (iv) being symmetric pairs), then $(E_2, E_1)$ also satisfies the conditions of the theorem. Then by this theorem, we will have that $ E_2 \not\Rightarrow_{\mathfrak{E}_V^0} E_1.$ Hence, $\not\Rightarrow_{\mathfrak{E}_V^0}$ is symmetric.
	\end{proof} 
	
	\noindent \textbf{Remark 4.4} From the previous proposition we can see that dependency relation $\Rightarrow_{\mathfrak{E}_V^0}$ is a similarity relation.
	
	\begin{proposition}\label{p18}
		If $E_1 \Rightarrow E_2$ then $E_1 \Rightarrow_{\mathfrak{E}_V^0} E_2.$ 	
		
	\end{proposition}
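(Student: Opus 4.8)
The plan is to reduce the statement to a contradiction against Proposition \ref{p17}, which already isolates the geometric content. First I would unwind both hypotheses into their concrete meaning. On the left, $E_1 \Rightarrow E_2$ is the ordinary total-dependency relation, so $\gamma(E_1,E_2)=1$, i.e. $POS_{E_1}(E_2)=V$. Unpacking the positive region pointwise, $x \in POS_{E_1}(E_2)$ holds exactly when $[x]_{E_1} \subseteq [x]_{E_2}$, so $POS_{E_1}(E_2)=V$ is equivalent to $E_1 \leq E_2$, that is, $E_1$ is finer than $E_2$. On the right, the relation $\not\Rightarrow_{\mathfrak{E}_V^0}$ was defined so that $E_1 \not\Rightarrow_{\mathfrak{E}_V^0} E_2$ holds precisely when the operator $L_2L_1$ generated by $(E_1,E_2)$ is unique; hence its negation $E_1 \Rightarrow_{\mathfrak{E}_V^0} E_2$ is exactly the assertion that $(E_1,E_2)$ is \emph{not} the unique solution of the $L_2L_1$ operator it generates. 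Thus the goal becomes: assuming $E_1 \leq E_2$ with $E_1,E_2 \in \mathfrak{E}_V^0$, show that $(E_1,E_2)$ is not a unique solution.

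Next I would argue by contradiction. Suppose $(E_1,E_2)$ were a unique solution. Because we work inside $\mathfrak{E}_V^0 = \mathfrak{E}_V - Id_V$, both $E_1$ and $E_2$ are automatically non-identity, so the side hypothesis of Proposition \ref{p17} (that at least one of $E_1, E_2$ differs from $Id$) is satisfied. Proposition \ref{p17} then forces $E_1 \not\leq E_2$ and $E_2 \not\leq E_1$. But the hypothesis $E_1 \Rightarrow E_2$ has just been shown to give $E_1 \leq E_2$, directly contradicting $E_1 \not\leq E_2$. Hence no unique solution can exist, which is precisely $E_1 \Rightarrow_{\mathfrak{E}_V^0} E_2$.

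The only point requiring care is the logical bookkeeping around Proposition \ref{p17}: its conclusion is premised on a unique solution already existing, so I must make sure that negating that conclusion, together with the standing non-identity condition, pins the failure on uniqueness rather than on the non-identity clause — and membership in $\mathfrak{E}_V^0$ removes exactly this ambiguity. I do not expect a genuine obstacle here. For a self-contained alternative that avoids citing Proposition \ref{p17}, one could instead verify directly that $E_1 \leq E_2$ violates the uniqueness criteria of Theorem \ref{t4}: whenever some $E_2$-class splits into two or more $E_1$-classes $[x]_{E_1},[y]_{E_1}$ one gets $\textbf{u}_{E_2}([x]_{E_1}) = \textbf{u}_{E_2}([y]_{E_1})$, breaking condition (ii), while if an $E_2$-class of size at least two coincides with a single $E_1$-class one contradicts Corollary \ref{c2}. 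Routing through Proposition \ref{p17}, however, keeps the argument shortest and purely formal.
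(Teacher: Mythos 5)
Your proposal is correct and follows essentially the same route as the paper, whose entire proof is the single line ``This follows from Corollary~\ref{p17}'': you simply fill in the details the paper leaves implicit, namely that $E_1 \Rightarrow E_2$ unpacks to $E_1 \leq E_2$ via the positive region, and that membership in $\mathfrak{E}_V^0$ supplies the non-identity hypothesis of Proposition~\ref{p17}, so assuming uniqueness contradicts its conclusion $E_1 \not\leq E_2$. Any residual imprecision (e.g.\ that Proposition~\ref{p17} speaks of uniqueness among all pairs while $\not\Rightarrow_{\mathfrak{E}_V^0}$ is defined by uniqueness within $\mathfrak{E}_V^0$) is inherited from the paper itself, not introduced by your argument.
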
 
	
	\begin{proof} 
		This follows from Corollary \ref{p17}.   
	\end{proof}
	
	\begin{proposition} \label{p19}
		It is not the case that  $E_1 \Rightarrow_{\mathfrak{E}_V^0} E_2$ implies that $E_1 \Rightarrow E_2.$
	\end{proposition}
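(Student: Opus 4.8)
Since the statement asserts that an implication does not hold in general, the natural route is to exhibit one explicit counterexample: a pair $(E_1,E_2)$ of non-identity equivalence relations for which $E_1 \Rightarrow_{\mathfrak{E}_V^0} E_2$ holds while $E_1 \Rightarrow E_2$ fails. Recalling the definitions, $E_1 \Rightarrow_{\mathfrak{E}_V^0} E_2$ means precisely that the operator $L_2L_1$ is \emph{not} uniquely generated, whereas $E_1 \Rightarrow E_2$ means $E_1 \subseteq E_2$ (equivalently $E_1 \leq E_2$). So the task reduces to finding a non-unique operator whose generating pair has a first relation that is not finer than the second.

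The plan is to reuse the configuration of Example 4.2. Take $V = \{a,b,c,d\}$ and set $E_1 = \{\{a,b\},\{c,d\}\}$, $E_2 = \{\{a,c\},\{b,d\}\}$, and $E_3 = \{\{a,d\},\{b,c\}\}$; all three lie in $\mathfrak{E}_V^0$ since none is the identity relation. First I would verify the non-uniqueness, i.e. $E_1 \Rightarrow_{\mathfrak{E}_V^0} E_2$. A short direct computation shows that for any proper subset $X \subsetneq V$ one has $L_{E_1}(X) \in \{\emptyset, \{a,b\}, \{c,d\}\}$, and that $L_{E_2}$ sends each of these to $\emptyset$; hence $L_{E_2}L_{E_1}(X) = \emptyset$ for all $X \neq V$ and $L_{E_2}L_{E_1}(V) = V$. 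The identical computation with $E_3$ in place of $E_2$ yields the same operator. Since $E_2 \neq E_3$, the pairs $(E_1,E_2)$ and $(E_1,E_3)$ generate the same $L_2L_1$ operator, so it is non-unique and therefore $E_1 \Rightarrow_{\mathfrak{E}_V^0} E_2$ by definition.

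Second, I would observe that $E_1 \Rightarrow E_2$ fails trivially: the class $\{a,b\} \in E_1$ is not contained in any class of $E_2$, so $E_1 \not\subseteq E_2$ and hence $\lnot(E_1 \Rightarrow E_2)$. Combining the two observations yields a pair witnessing $E_1 \Rightarrow_{\mathfrak{E}_V^0} E_2$ without $E_1 \Rightarrow E_2$, which is exactly the claim of the proposition; together with Proposition \ref{p18} this shows that $\Rightarrow_{\mathfrak{E}_V^0}$ is strictly weaker than $\Rightarrow$.

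As for difficulties, there is essentially no conceptual obstacle here. The only point demanding care is the bookkeeping of the order of composition, since $L_2L_1$ applies $L_1$ first, together with confirming that the computed operator really is trivial on \emph{every} intermediate two- and three-element subset rather than merely on the singletons. Once that routine verification is carried out, the counterexample is complete.
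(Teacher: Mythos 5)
Your proposal is correct and follows the paper's own route: the paper proves this proposition by citing Example 4.2 (the pair $E_1 = \{\{a,b\},\{c,d\}\}$, $E_2 = \{\{a,c\},\{b,d\}\}$ with the alternative $E_3 = \{\{a,d\},\{b,c\}\}$ generating the same trivial operator), exactly the counterexample you reconstruct and verify. Your writeup merely makes explicit the computation that the paper leaves to the earlier example.
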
	
	
	\begin{proof}
		In  Example 4.2 we see  $(E_1, E_2)$  does not give a corresponding unique $L_2L_1$ operator, hence  $E_1 \Rightarrow_{\mathfrak{E}_V^0} E_2$ but $E_1 \not\Rightarrow E_2.$ 
	\end{proof}  
	
	\noindent \textbf{Remark 4.5}  From Proposition \ref{p18} and Proposition \ref{p19},  we see that $  \mathfrak{E}_V^0$--\textit{\textbf{dependency}} is a more general notion of equivalence relation dependency that $\Rightarrow$ (or equivalently $\leq$ ). Similarly $ \mathfrak{E}_V^0$--\textit{\textbf{independence}}  is a stricter notion of independence than $\not\Rightarrow.$
	
	\begin{theorem} 
		Let $V$ be a finite set and $E_1$ and $E_2$ equivalence relations on $V.$ Then \\ $E_1 \not\Rightarrow_{\mathfrak{E}_V^0} E_2$ iff the following holds:
		
		\vspace{2mm}
		\noindent (i) For each $[x]_{E_2}, [y]_{E_2} \in E_2,$ if $[x]_{E_2} \neq [y]_{E_2}$ then $\textbf{u}_{E_1}([x]_{E_2}) \neq \textbf{u}_{E_1}([y]_{E_2}), $  
		
		\vspace{2mm}
		\noindent (ii) For each $[x]_{E_1}, [y]_{E_1} \in E_1,$ if $[x]_{E_1} \neq [y]_{E_1}$ then $\textbf{u}_{E_2}([x]_{E_1}) \neq \textbf{u}_{E_2}([y]_{E_1})$.
		
		\vspace{2mm}
		\noindent (iii) For any $[x]_{E_2} \in E_2,$ 
		there exists $[z]_{E_1} \in E_1$ such that, $ |[x]_{E_2} \cap [z]_{E_1}| = 1.$ 
		
		\vspace{2mm}
		\noindent(iv) For any $[x]_{E_1} \in E_1,$ 
		there exists $[z]_{E_2} \in E_2$ such that, $ |[x]_{E_1} \cap [z]_{E_2}| = 1.$

	\end{theorem}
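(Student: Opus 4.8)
The plan is to read this statement as a restatement of Theorem \ref{t4} in the vocabulary of the independence relation, so that almost all of the work is already carried out there and only the definitional translation needs checking. By definition, $E_1 \not\Rightarrow_{\mathfrak{E}_V^0} E_2$ asserts that the operator $L_2L_1$ generated by $(E_1,E_2)$ has no \emph{other} generating pair, while Theorem \ref{t4} characterises when $(E_1,E_2)$ is the \emph{unique solution} of that operator by exactly conditions (i)--(iv). Since $(E_1,E_2)$ is automatically a solution of the operator it generates, Theorem \ref{t4} applies to it directly, so the essential task is to match ``$(E_1,E_2)$ is the only generating pair'' with ``$(E_1,E_2)$ is the unique solution.''

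For the $(\Leftarrow)$ direction I would argue immediately: assuming (i)--(iv), the corresponding implication of Theorem \ref{t4} makes $(E_1,E_2)$ the unique solution of $L_2L_1$, so no second pair can generate the same operator and hence $E_1 \not\Rightarrow_{\mathfrak{E}_V^0} E_2$. This direction requires nothing beyond quoting Theorem \ref{t4}.

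For the $(\Rightarrow)$ direction I would prove the contrapositive: if at least one of (i)--(iv) fails, I must exhibit a \emph{second} generating pair, which then witnesses dependence. If (i) or (ii) fails, the merging constructions of Lemma \ref{l2} and Lemma \ref{l3} yield a strictly coarser solution; because coarsening only enlarges equivalence classes, the resulting relations each retain a class of size at least two and so remain admissible generating pairs distinct from $(E_1,E_2)$. If instead (iii) or (iv) fails, then by Theorem \ref{t2} the pair already coincides with the algorithmic pair $(S,R)$, and the splitting construction in the proof of Theorem \ref{t3} produces a strictly finer solution, again distinct from $(E_1,E_2)$.

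The main obstacle I anticipate is reconciling the two notions of uniqueness precisely at the identity relation $Id_V$: the refinement of the previous paragraph could in principle split a class into singletons and collapse one relation to $Id_V$, and one must check that the resulting witness of non-uniqueness is genuinely a second admissible pair rather than a degenerate one. I expect to dispose of this exactly as in the proof of Proposition \ref{p17}: whenever a reduction to $Id_V$ occurs, the operator $L_2L_1$ degenerates to a single lower approximation, and then the identity-symmetry of such operators together with idempotency (property 11 of Section 2.1) supplies the needed second generating pair. Once this point is settled, the symmetric four-condition characterisation of Theorem \ref{t4} transfers verbatim, and since conditions (i)--(ii) and (iii)--(iv) form symmetric pairs the resulting criterion is manifestly compatible with the symmetry of $\not\Rightarrow_{\mathfrak{E}_V^0}$ already established.
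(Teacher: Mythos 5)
Your overall route --- reading the statement as a definitional restatement of Theorem \ref{t4} --- is exactly the paper's: its entire proof is the single sentence ``This follows directly from Theorem \ref{t4}.'' Your $(\Leftarrow)$ direction is fine, as are the $(\Rightarrow)$ cases where (i) or (ii) fails: merging via Lemma \ref{l2} or Lemma \ref{l3} always produces a class of size at least two, so those witnesses never degenerate to $Id_V$ and stay inside $\mathfrak{E}_V^0$. The genuine gap is precisely the edge case you flagged and then claimed to dispose of. When (iii) or (iv) fails and the splitting construction from Theorem \ref{t3} collapses one relation to $Id_V$, your proposed repair does not close the hole: the operator then degenerates to a single lower approximation $L_{E_1}$, and the ``identity-symmetry plus idempotency'' trick only produces the pairs $(E_1, Id_V)$, $(Id_V, E_1)$ and $(E_1, E_1)$. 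The first two are excluded from $\mathfrak{E}_V^0$ by definition, and the third is a \emph{second} pair only when $E_1 \neq E_2$. When $E_1 = E_2$, there is no admissible witness at all.

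Concretely, take $V = \{a,b,c\}$ and $E_1 = E_2 = E = \{\{a,b\},\{c\}\}$, so $E \in \mathfrak{E}_V^0$. Conditions (i) and (ii) hold, but (iii) and (iv) fail, since the class $\{a,b\}$ meets no class of the other relation in exactly one point. The generated operator is $L_E L_E = L_E$ by idempotency (property 11), and by Lemma \ref{p15} every solution is componentwise finer than $(E,E)$; checking the four candidates shows the only other solutions are $(E, Id_V)$ and $(Id_V, E)$, both outside $\mathfrak{E}_V^0 \times \mathfrak{E}_V^0$. Hence, under the paper's literal definition of $\not\Rightarrow_{\mathfrak{E}_V^0}$ (which quantifies alternative pairs over $\mathfrak{E}_V^0$ only), $E \not\Rightarrow_{\mathfrak{E}_V^0} E$ holds vacuously while (iii) and (iv) fail --- so the forward implication you are trying to prove is false as literally stated, and no completion of your contrapositive argument can succeed. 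What the paper actually does, here and in its proof that $\not\Rightarrow_{\mathfrak{E}_V^0}$ is preclusive, is silently read ``unique'' in the all-pairs sense of Theorem \ref{t4}; under that reading the theorem is a verbatim translation of Theorem \ref{t4}, both directions follow by quotation, and none of your witness constructions is needed. You were right to probe the mismatch between the two notions of uniqueness, but the honest conclusion is that the mismatch is real and makes the literal statement fail, not that it can be argued away by the method you sketch.
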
 
	
	\begin{proof}
		This follows directly from Theorem \ref{t4}.	
		
	\end{proof}

	\subsection{Seeing One Equivalence Relation through Another}
	
	We will first give a proposition which will show a more explicit symmetry between conditions (i) and (ii) and conditions (iii) and (iv) in Theorem \ref{t4} for unique solutions.
	
	\begin{proposition}
		Let $V$ be a finite set and let $E_1$ and $E_2$ be two equivalence relations on $V.$ Then;
		
		\vspace{2mm}
		\noindent  For any $[x]_{E_1} \in E_1,$ 
		$\exists[z]_{E_2} \in E_2$ such that, $ |[x]_{E_1} \cap [z]_{E_2}| = 1$ iff it is not the case that $ \exists Y, Z \in \mathscr{P}(V)$ such that $[x]_{E_1} = Y \cup Z,$ $Y \cap Z = \emptyset$ and $\textbf{u}_{E_2}(Y) = \textbf{u}_{E_2}(Z) = \textbf{u}_{E_2}([x]_{E_1}) .$ 
		
	\end{proposition}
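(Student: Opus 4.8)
The plan is to reduce both sides of the biconditional to a single combinatorial condition on the intersection sizes $|[x]_{E_1} \cap [z]_{E_2}|$ and then establish the equivalence through its contrapositive. First I would fix notation: write $C = [x]_{E_1}$ and let $B_1, \dots, B_k$ be exactly those $E_2$-equivalence classes that meet $C$, so that by the granule-based definition \eqref{eq:2} we have $\textbf{u}_{E_2}(C) = B_1 \cup \dots \cup B_k$. The key observation I would record first is that for any subset $W \subseteq C$, we have $\textbf{u}_{E_2}(W) = \textbf{u}_{E_2}(C)$ if and only if $W \cap B_i \neq \emptyset$ for every $i = 1, \dots, k$; this is immediate from \eqref{eq:2}, since the $E_2$-classes meeting $W \subseteq C$ form a subset of those meeting $C$, and equality of upper approximations forces all of $B_1, \dots, B_k$ to be hit.

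Next I would pass to the contrapositive form of the stated equivalence, namely that the negation of the left-hand condition is equivalent to the existence of the split $Y, Z$ on the right. The negation of the left-hand side reads: no $E_2$-class meets $C$ in exactly one point. Since classes disjoint from $C$ contribute an empty intersection, this is precisely the statement that $|C \cap B_i| \geq 2$ for every $i$. Thus the whole proposition amounts to showing that $|C \cap B_i| \geq 2$ for all $i$ holds if and only if $C$ admits a disjoint decomposition $C = Y \cup Z$ with $\textbf{u}_{E_2}(Y) = \textbf{u}_{E_2}(Z) = \textbf{u}_{E_2}(C)$.

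For the forward direction I would construct the split explicitly: assuming $|C \cap B_i| \geq 2$ for each $i$, pick two distinct points $y_i, z_i \in C \cap B_i$, set $Y = \{y_i : 1 \leq i \leq k\}$ and $Z = C \setminus Y$. Then $Y$ meets every $B_i$ by construction, while $Z$ meets every $B_i$ because $z_i \in Z$; so by the key observation both have upper approximation equal to $\textbf{u}_{E_2}(C)$, and $Y, Z$ are disjoint with union $C$. For the converse, given such a split, the key observation forces each $B_i$ to meet both $Y$ and $Z$, and since $Y, Z$ are disjoint this yields $|C \cap B_i| \geq |Y \cap B_i| + |Z \cap B_i| \geq 2$, which is the negated left-hand side.

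The argument is essentially bookkeeping, so I do not anticipate a genuine obstacle; the only point requiring care is the possibility that $Y$ or $Z$ is empty in the right-hand condition. But since each part must meet every $B_i$ and $\textbf{u}_{E_2}(C) \neq \emptyset$ (as $C$ is a nonempty equivalence class), both $Y$ and $Z$ are automatically nonempty, so no degenerate case intrudes and the two directions close the equivalence.
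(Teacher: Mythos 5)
Your proposal is correct and follows essentially the same route as the paper: the transversal construction (picking two points in each intersecting $E_2$-class, letting $Y$ collect one point from each and $Z$ be the rest) is exactly the paper's construction for the contrapositive direction, and your converse argument is just the contrapositive restatement of the paper's direct argument that a singleton intersection lands in exactly one part of any split. The only cosmetic difference is that you phrase both directions through the negated equivalence and isolate the hitting-set characterization of $\textbf{u}_{E_2}(W) = \textbf{u}_{E_2}([x]_{E_1})$ as an explicit lemma, which tidies the bookkeeping but does not change the argument.
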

	
	\begin{proof}
		We prove $\Rightarrow$ first.  Let $[x]_{E_1} \in E_1$ and suppose that $\exists[z]_{E_2} \in E_2$ such that, $ |[x]_{E_1} \cap [z]_{E_2}| = 1.$	Then let $[x]_{E_1} \cap [z]_{E_2} = t.$ Now for any spilt of $[x]_{E_1}$, that is for any $Y, Z \in \mathscr{P}(V)$ such that $[x]_{E_2} = Y \cup Z$ and $Y \cap Z = \emptyset,$ $t$ is in exactly one of these sets. Thus exactly one of $ \textit{\textbf{u}}_{E_2}(Y), \ \textit{\textbf{u}}_{E_2}(Z)$ contains $[t]_{E_2} = [z]_{E_2}.$ Hence $ \textit{\textbf{u}}_{E_2}(Y) \neq \textit{\textbf{u}}_{E_2}(Z).$

		We prove the converse by the contrapositive. Let $[x]_{E_1} \in E_1$ be such that for all $[z]_{E_2} \in E_2$ whenever $[x]_{E_1} \cap [z]_{E_2} \neq \emptyset$ (and clearly some such $[z]_{E_2}$ must exist), we have that $|[x]_{E_1} \cap [z]_{E_2}| \geq 2.$  Suppose that $[x]_{E_1}$ has non-empty intersection with with $n$ sets in $E_2.$ We note that $n \geq 1.$ Form a sequence of these sets; $R_1, \dots R_n. $ Since $|[x]_{E_1} \cap R_i| \geq 2$ for each $i$ such that $i= 1, \dots n,$ let $\{a_{i1}, a_{i2} \}$ be in $[x]_{E_1}\cap R_i$ for each $i$ such that $i= 1, \dots n.$ Let $Y = \{a_{i1}\ | \ i= 1, \dots n\}$ and let $Z = [x]_{E_1} - Y.$ Then, $[x]_{E_1} = Y \cup Z,$ $Y \cap Z = \emptyset$ and $\textit{\textbf{u}}_{E_2}(Y) = \textit{\textbf{u}}_{E_2}(Z) = \textit{\textbf{u}}_{E_2}([x]_{E_1}) .$ 
	\end{proof}
	
	\noindent Using the preceding proposition we obtain an equivalent form of Theorem \ref{t4}.
	
	\begin{theorem}
		Let $V$ be a finite set and $E_1$ and $E_2$ equivalence relations on $V.$ Then $(E_1, E_2)$ produces a unique $L_2L_1: \mathscr{P}(V) \rightarrow \mathscr{P}(V)$ operator  iff the following holds:

		\vspace{2mm}
		\noindent (i) For each $[x]_{E_2}, [y]_{E_2} \in E_2,$ if $[x]_{E_2} \neq [y]_{E_2}$ then $\textbf{u}_{E_1}([x]_{E_2}) \neq \textbf{u}_{E_1}([y]_{E_2})$  
		
		\vspace{2mm}
		\noindent (ii) For each $[x]_{E_1}, [y]_{E_1} \in E_1,$ if $[x]_{E_1} \neq [y]_{E_1}$ then $\textbf{u}_{E_2}([x]_{E_1}) \neq \textbf{u}_{E_2}([y]_{E_1})$
		
		\vspace{2mm}
		\noindent (iii) For any  $[x]_{E_2} \in E_2,$ if $ \exists Y, Z \in \mathscr{P}(V)$ such that $[x]_{E_2} = Y \cup Z$ and $Y \cap Z = \emptyset$ \\ \textcolor{white}{aaa} then $\textbf{u}_{E_1}(Y) \neq \textbf{u}_{E_1}(Z)$ 
		
		\vspace{2mm}
		\noindent(iv) For any  $[x]_{E_1} \in E_1,$ if  $ \exists Y, Z \in \mathscr{P}(V)$ if $[x]_{E_1} = Y \cup Z$ and $Y \cap Z = \emptyset$ then \\ \textcolor{white}{aaa}  $\textbf{u}_{E_2}(Y) \neq \textbf{u}_{E_2}(Z)$

	\end{theorem}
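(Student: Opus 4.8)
The plan is to recognize this theorem as a restatement of Theorem \ref{t4}, in which conditions (i) and (ii) are carried over verbatim while conditions (iii) and (iv) are rewritten by means of the preceding proposition. Since a unique $L_2L_1$ operator corresponds (by Theorem \ref{t4}) exactly to a solution pair $(E_1,E_2)$ satisfying conditions (i)--(iv) of that theorem, it suffices to show that the four conditions listed here are logically equivalent, clause by clause, to the four conditions of Theorem \ref{t4}, and then to invoke that theorem.

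First I would dispatch conditions (i) and (ii): they are syntactically identical to conditions (i) and (ii) of Theorem \ref{t4}, so no work is needed there. The substance lies in matching condition (iv) here with condition (iv) of Theorem \ref{t4} (and, by the symmetry of the roles of $E_1$ and $E_2$, condition (iii) here with condition (iii) there). Condition (iv) of Theorem \ref{t4} asserts, for each $[x]_{E_1}\in E_1$, the existence of a class $[z]_{E_2}$ meeting $[x]_{E_1}$ in exactly one point. By the preceding proposition this is equivalent to the non-existence of a disjoint decomposition $[x]_{E_1}=Y\cup Z$ with $\textbf{u}_{E_2}(Y)=\textbf{u}_{E_2}(Z)=\textbf{u}_{E_2}([x]_{E_1})$. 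On the other hand, condition (iv) as stated here says precisely that no disjoint decomposition $[x]_{E_1}=Y\cup Z$ has $\textbf{u}_{E_2}(Y)=\textbf{u}_{E_2}(Z)$. So the equivalence reduces to showing that these two non-existence statements coincide.

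The one genuinely non-trivial step is reconciling the extra clause ``$=\textbf{u}_{E_2}([x]_{E_1})$'' that appears in the proposition but is dropped in condition (iv) here. The forward implication is immediate, so I would argue the converse: if a disjoint split $[x]_{E_1}=Y\cup Z$ satisfies $\textbf{u}_{E_2}(Y)=\textbf{u}_{E_2}(Z)$, then by additivity of the upper approximation (property 3 of Section 2.1) we have $\textbf{u}_{E_2}([x]_{E_1})=\textbf{u}_{E_2}(Y\cup Z)=\textbf{u}_{E_2}(Y)\cup\textbf{u}_{E_2}(Z)=\textbf{u}_{E_2}(Y)$, so the same split automatically realizes the stronger triple equality. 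I would also observe that the degenerate split $Y=\emptyset$, $Z=[x]_{E_1}$ never produces a spurious violation, since $\textbf{u}_{E_2}(\emptyset)=\emptyset\neq\textbf{u}_{E_2}([x]_{E_1})$ for the non-empty class $[x]_{E_1}$; hence only proper two-piece splits are relevant and the two formulations agree. Applying the identical argument with $E_1$ and $E_2$ interchanged settles the equivalence of condition (iii) here with condition (iii) of Theorem \ref{t4}, and invoking Theorem \ref{t4} then yields the stated characterisation. I expect this reconciliation of the two forms of the upper-approximation equality — a short application of additivity together with a check of the empty case — to be the only real obstacle; everything else is bookkeeping.
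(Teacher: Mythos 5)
Your proposal is correct and takes essentially the same route as the paper: the paper obtains this theorem by simply combining Theorem \ref{t4} with the preceding proposition, giving no further argument. Your explicit reconciliation of the two forms of conditions (iii) and (iv) — using additivity of the upper approximation to upgrade $\textbf{u}_{E_2}(Y)=\textbf{u}_{E_2}(Z)$ to the triple equality $\textbf{u}_{E_2}(Y)=\textbf{u}_{E_2}(Z)=\textbf{u}_{E_2}([x]_{E_1})$, together with the check that the degenerate split $Y=\emptyset$, $Z=[x]_{E_1}$ causes no spurious violation — is exactly the detail the paper leaves implicit, and it is correct.
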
 
	
	\subsubsection{Conceptual Translation of the Uniqueness Theorem}
	
	The  conditions of the above theorem can be viewed conceptually as follows:  (i) Through the eyes of $E_1,$ no two equivalence classes of $E_2$ are the same; (ii) Through the eyes of $E_2,$ no two equivalence classes of $E_1$ are the same;  (iii) No equivalence class in $E_2$ can be broken down into two smaller equivalence classes which are equal to it through the eyes of $E_1;$ (iv) No equivalence class in $E_1$ can be broken down into two smaller equivalence classes which are equal to it through the eyes of $E_2.$
	In other words we view set $V$\textbf{mod} $E_1.$ That is, let $V$\textbf{mod}$E_1$ be the set obtained from $V$ after  renaming the elements of $V$ with  fixed  representatives of their respective equivalence classes in $E_1.$  Similarly let $V$\textbf{mod}$E_2$ be the set obtained from $V$ after  renaming the elements of $V$ with  fixed  representatives of their respective equivalence classes in $E_2.$ We then have the following equivalent conceptual version of Theorem \ref{t4}

	\begin{theorem} 
		
		Let $V$ be a finite set and $E_1$ and $E_2$ equivalence relations on $V.$ Then \\ $(E_1, E_2)$  generate a unique $L_2L_1$ operator iff the following holds:

		\vspace{2mm}
		\noindent (i) No two distinct members of $E_2$ are equivalent in $V$\textbf{mod}$E_1.$
		
		\vspace{2mm}
		\noindent (ii) No two distinct members of $E_1$ are equivalent in $V$\textbf{mod}$E_2.$
		
		\vspace{2mm}
		\noindent (iii) No member $E_2$ can be broken down into two smaller sets which are equivalent to it in $V$\textbf{mod}$E_1.$

		\vspace{2mm}
		\noindent(iv) No member $E_1$ can be broken down into two smaller sets which are equivalent to it in $V$\textbf{mod}$E_2.$

	\end{theorem}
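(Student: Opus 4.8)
\noindent The plan is to show that the four conceptual conditions here are nothing more than a verbatim rephrasing of the four conditions of the immediately preceding theorem, so that the equivalence with uniqueness of $L_2L_1$ is inherited directly. The one fact I would establish first is the translation dictionary between $V$\textbf{mod}$E$ and the upper approximation. For an equivalence relation $E$ on $V$ and a set $A \subseteq V,$ the image of $A$ in $V$\textbf{mod}$E$ is exactly the set of chosen representatives of those $E$-classes that meet $A.$ Consequently, two sets $A, B \subseteq V$ become equal (``equivalent'') in $V$\textbf{mod}$E$ if and only if they meet precisely the same $E$-classes, which is exactly the statement $\textbf{u}_{E}(A) = \textbf{u}_{E}(B).$ This single observation is the whole engine of the argument.

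\noindent With this dictionary in hand, conditions (i) and (ii) transcribe immediately. Saying ``no two distinct members of $E_2$ are equivalent in $V$\textbf{mod}$E_1$'' is word for word the assertion that $[x]_{E_2} \neq [y]_{E_2}$ implies $\textbf{u}_{E_1}([x]_{E_2}) \neq \textbf{u}_{E_1}([y]_{E_2}),$ which is condition (i) of the preceding theorem; condition (ii) is handled symmetrically. For (iii) and (iv) I would invoke Property 3) of Section 2.1, namely $\textbf{u}_{E}(Y \cup Z) = \textbf{u}_{E}(Y) \cup \textbf{u}_{E}(Z).$ If $[x]_{E_2} = Y \cup Z$ with $Y \cap Z = \emptyset$ and both $Y, Z$ nonempty (a genuine breakdown into ``smaller'' sets), then $\textbf{u}_{E_1}([x]_{E_2}) = \textbf{u}_{E_1}(Y) \cup \textbf{u}_{E_1}(Z),$ so each of $Y$ and $Z$ is equivalent to $[x]_{E_2}$ in $V$\textbf{mod}$E_1$ precisely when $\textbf{u}_{E_1}(Y) = \textbf{u}_{E_1}(Z) = \textbf{u}_{E_1}([x]_{E_2}),$ and by the union identity this holds exactly when $\textbf{u}_{E_1}(Y) = \textbf{u}_{E_1}(Z).$ Thus forbidding such a breakdown is exactly condition (iii) of the preceding theorem, and condition (iv) follows by the same reasoning with the roles of $E_1$ and $E_2$ exchanged.

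\noindent Having matched the four conditions one for one, I would finish by quoting the preceding theorem, which asserts that $(E_1, E_2)$ produces a unique $L_2L_1$ operator iff its four $\textbf{u}$-conditions hold; since those are logically identical to conditions (i)--(iv) stated here, the claimed equivalence follows. I expect the only delicate point to be bookkeeping rather than substance, namely pinning down the informal phrase ``equivalent to it'' so that ``both summands are equivalent to the whole class'' collapses cleanly to ``the two summands are equivalent to each other,'' and checking that degenerate splits with $Y$ or $Z$ empty create no mismatch with the preceding theorem. The latter is harmless: when one part is empty the two upper approximations differ automatically, one side being $\emptyset$ and the other the nonempty $\textbf{u}_{E_1}([x]_{E_2}),$ so such splits are never genuine ``breakings down'' and the conditions align exactly.
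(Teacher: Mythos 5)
Your proposal is correct and follows essentially the same route as the paper: the paper presents this theorem as an immediate conceptual rephrasing of the preceding theorem (the version whose conditions (iii) and (iv) forbid disjoint splits $[x]_{E_2} = Y \cup Z$ with $\textbf{u}_{E_1}(Y) = \textbf{u}_{E_1}(Z)$), using exactly your dictionary that two subsets are identified in $V\textbf{mod}E$ iff they meet the same $E$-classes, i.e.\ have equal upper approximations. Your write-up is in fact more explicit than the paper's (which offers no proof environment here), and your handling of the degenerate empty-part splits and of the collapse ``both parts equivalent to the whole iff equivalent to each other'' supplies precisely the bookkeeping the paper leaves implicit.
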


	\section{Decomposing $U_2U_1$ Approximations}
	
	We now investigate the case of double upper approximations. This is dually  related to the case of double lower approximations because of the relationship between upper and lower approximations by the equation, $U(X) = -L(-X)$ (see property 10 in Section 2.1.1). The following proposition shows that the problem of finding solutions for this case reduces to the case in the previous section:	\\   
	
	\begin{proposition} \label{p20}
		Let $V$ be a finite set and let $U_2U_1: \mathscr{P}(V) \rightarrow \mathscr{P}(V)$ be a given fully defined operator on $\mathscr{P}(V).$   Then  any solution $(E_1, E_2),$ is also a solution of $L_2L_1: \mathscr{P}(V) \rightarrow P\mathscr{P}(V)$ operator where  $L_2L_1(X) = -U_2U_1(-X)$ for any $X \in \mathscr{P}(V).$  Therefore, the solution $(E_1, E_2)$ for the defined $U_2U_1$ operator is a unique iff the solution for the corresponding $L_2L_1$ operator is unique.	 
		
	\end{proposition}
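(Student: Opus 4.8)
The plan is to exploit the complementation duality between lower and upper approximations recorded as Properties 9 and 10 in Section 2.1, namely $\low_E(-X) = -\upp_E(X)$ and $\upp_E(-X) = -\low_E(X)$ for every equivalence relation $E$ and every $X \subseteq V$. I would first fix a solution $(E_1, E_2)$ of the given $U_2U_1$ operator, so that $U_1 = \upp_{E_1}$ and $U_2 = \upp_{E_2}$, and then simply compute $-U_2U_1(-X)$ by peeling off one approximation at a time.

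The key computation proceeds as follows. Applying Property 10 to the inner approximation gives $U_1(-X) = \upp_{E_1}(-X) = -\low_{E_1}(X)$. Substituting this and applying Property 10 once more to the outer approximation yields $U_2U_1(-X) = \upp_{E_2}(-\low_{E_1}(X)) = -\low_{E_2}(\low_{E_1}(X))$. Taking complements of both sides gives $-U_2U_1(-X) = \low_{E_2}(\low_{E_1}(X))$, which is precisely $L_2L_1(X)$ for the lower-approximation operators based on $(E_1, E_2)$. Hence $(E_1, E_2)$ is a solution of the operator $X \mapsto -U_2U_1(-X)$, proving the first assertion. The converse inclusion of solution sets runs symmetrically using Property 9: from $-U_2U_1(-X) = L_2L_1(X)$ one recovers $U_2U_1(X) = -\low_{E_2}(\low_{E_1}(-X)) = \upp_{E_2}(\upp_{E_1}(X))$, so any solution of the dual $L_2L_1$ operator is also a solution of the original $U_2U_1$.

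For the uniqueness claim I would package these two directions through the observation that the transformation $T$ sending an operator $O$ to the operator $(TO)(X) = -O(-X)$ is an involution on the set of fully defined operators $\mathscr{P}(V) \to \mathscr{P}(V)$, since $(T(TO))(X) = -\bigl(-O(-(-X))\bigr) = O(X)$, and complementation $X \mapsto -X$ is a bijection of $\mathscr{P}(V)$. Because $L_2L_1 = T(U_2U_1)$ and the computation above holds verbatim in both directions, the two operators have identical solution sets. Consequently one of them admits a unique solution if and only if the other does, which is exactly the stated equivalence.

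The main step requiring care is the bookkeeping of complements across the two successive applications of the duality identity, in particular tracking which equivalence relation governs the inner versus the outer approximation so that the correct relation is used at each stage. Beyond this the result is essentially formal: it rests entirely on the involutive nature of $T$ together with the already-established $L_2L_1$ theory, so no fresh combinatorial analysis of the equivalence relation pairs is needed, and the full characterisation of uniqueness for $L_2L_1$ transfers intact to $U_2U_1$.
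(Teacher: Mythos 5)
Your proposal is correct and follows essentially the same route as the paper: both rest on the complementation duality $L_2L_1(X) = -U_2U_1(-X)$ to transfer solutions between the two operators in both directions, so that uniqueness for one is equivalent to uniqueness for the other. The only difference is one of detail — you spell out the two applications of Properties 9 and 10 and package the equivalence via the involution $O \mapsto -O(-\cdot)$, whereas the paper states the identity and asserts the correspondence directly.
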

	
	\begin{proof}
		Recall that	$ L_2L_1(X) = -U_2U_1(-X)$. Hence, if there exists a solution $(E_1, E_2)$ which corresponds to the given $U_2U_1$ operator, this solution corresponds to a solution for the $L_2L_1$ operator  which is based on the same $(E_1, E_2)$ by the equation  $ L_2L_1(X) = -U_2U_1(-X).$  Similarly for the converse.
	\end{proof}
	
	\noindent \textbf{Algorithm:} 	Let $V$ be a finite set and let $U_2U_1: \mathscr{P}(V) \rightarrow \mathscr{P}(V)$ be a given fully defined operator on $\mathscr{P}(V).$  To solve for a solution, change it to solving for a solution for the corresponding $L_2L_1$ operator by the equation $ L_2L_1(X) = -U_2U_1(-X).$ Then, when we want to know the $L_2L_1$ output of a set we look at the $U_2U_1$ output of its complement set and take the complement of that. Next, use Algorithm 4.2 and the solution found will also be a solution for the initial $U_2U_1$ operator.

	\subsection{Characterising Unique Solutions}

	\begin{theorem}
		Let $V$ be a finite set and let $U_2U_1: \mathscr{P}(V) \rightarrow \mathscr{P}(V)$ be a given fully defined operator on $\mathscr{P}(V).$ If $(E_1, E_2)$ is a solution then, it is unique iff the following holds:

		\vspace{2mm}
		\noindent (i) for each $[x]_{E_2}, [y]_{E_2} \in E_2,$ if $[x]_{E_2} \neq [y]_{E_2}$ then $\textbf{u}_{E_1}([x]_{E_2}) \neq \textbf{u}_{E_1}([y]_{E_2}), $  
		
		\vspace{2mm}
		\noindent (ii) for each $[x]_{E_1}, [y]_{E_1} \in E_1,$ if $[x]_{E_1} \neq [y]]_{E_1}$ then $\textbf{u}_{E_2}([x]_{E_1}) \neq \textbf{u}_{E_2}([y]_{E_1})$.
		
		\vspace{2mm}
		\noindent (iii) For any $[x]_{E_2} \in E_2,$ 
		there exists $[z]_{E_1} \in E_1$ such that, $ |[x]_{E_2} \cap [z]_{E_1}| = 1.$ 
		
		\vspace{2mm}
		\noindent (iv) For any $[x]_{E_1} \in E_1,$ 
		there exists $[z]_{E_2} \in E_2$ such that, $ |[x]_{E_1} \cap [z]_{E_2}| = 1.$  
		
	\end{theorem}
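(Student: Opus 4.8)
The plan is to deduce this theorem directly from the $L_2L_1$ characterisation already obtained, exploiting the complement duality recorded in Proposition \ref{p20}. First I would recall that for a single equivalence relation $E$ the lower and upper operators satisfy $\low_E(X) = -\upp_E(-X)$ (properties 9 and 10 of Section 2.1), and that $L_i$ and $U_i$ are built from the same $E_i$. Composing these, $L_2L_1(X) = L_2(L_1(X)) = -U_2(U_1(-X)) = -U_2U_1(-X)$, so the operator $L_2L_1$ defined by $L_2L_1(X) = -U_2U_1(-X)$ is generated by exactly the same pair $(E_1, E_2)$, with $L_1$ and $L_2$ based on $E_1$ and $E_2$ respectively in the same order. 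Thus no relabelling or swapping of the two relations occurs under the duality.

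Next I would invoke Proposition \ref{p20}: a pair $(E_1, E_2)$ is a solution of the given $U_2U_1$ operator iff it is a solution of the associated $L_2L_1$ operator, and the $U_2U_1$ solution is unique iff the corresponding $L_2L_1$ solution is unique. This converts the uniqueness question for $U_2U_1$ into the uniqueness question for an $L_2L_1$ operator without altering the pair whose conditions we must test.

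Finally I would apply Theorem \ref{t4} to this associated $L_2L_1$ operator. That theorem states that a solution $(E_1, E_2)$ of an $L_2L_1$ operator is the unique solution iff conditions (i)--(iv) hold, and those conditions are phrased purely in terms of the pair $(E_1, E_2)$; they are verbatim the four conditions in the present statement. Since the conditions depend only on the relations and not on which operator ($L$-based or $U$-based) is under consideration, and since uniqueness transfers across the duality by Proposition \ref{p20}, the four conditions characterise uniqueness for $U_2U_1$ exactly as for $L_2L_1$.

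The step I expect to carry the real content is the transfer of uniqueness furnished by Proposition \ref{p20} rather than anything new in the present argument: once the correspondence between solution sets (and its preservation of uniqueness) is in hand, the remainder is a citation of Theorem \ref{t4}. The only point requiring genuine care is to confirm that the order of the relations is preserved under $X \mapsto -X$, so that condition (i) here matches condition (i) there (and likewise for (ii)--(iv)); since the complement is applied uniformly and $U_1, L_1$ share $E_1$ while $U_2, L_2$ share $E_2$, the correspondence is order-preserving and the conditions map across unchanged.
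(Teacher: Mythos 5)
Your proposal is correct and takes essentially the same route as the paper: the paper's own proof is exactly the one-line citation of Proposition \ref{p20} combined with Theorem \ref{t4}. Your additional check that the duality $L_2L_1(X) = -U_2U_1(-X)$ preserves the order of the pair $(E_1, E_2)$, so the four conditions transfer verbatim, merely makes explicit what the paper leaves implicit.
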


	\begin{proof}
		This follows from Proposition \ref{p20} using Theorem \ref{t4}.
	\end{proof}

	\section{Decomposing $U_2L_1$ Approximations} 
	For this case, we observe that $U_2L_1 (X) = -L_2(-L_1(X)) = U_2(-U_1(-X)).$ Since we cannot get rid of the minus sign between the $L$s (or $U$s), duality will not save us the work of further proof here like it did in the previous section.  In this section, we will see that $U_2L_1$ approximations are tighter than $L_2L_1$ (or $U_2U_1$) approximations. For this decomposition  we will use an algorithm that is very similar to Algorithm 4.1, however notice the difference in step 2 where it only requires the use of minimal sets with respect to $\subseteq$ instead of minimum sets (which may not necessarily exist).\\ 
	
	\noindent \textbf{Algorithm 4.2: For Partial Decomposition of Double Successive Lower Approximations}
	
	\vspace{4mm}
	\noindent Let $V$ be a finite set. Given an input of a  fully defined operator $U_2L_1 : \mathscr{P}(V) \rightarrow \mathscr{P}(V),$  if a solution exists, we can produce a solution $(S, R)$, i.e. where $L_1$ and $U_2$ are the lower and upper approximation operators of equivalence relations $S$ and $R$ respectively,  by performing the following steps:
	
	\vspace{4mm}

	\vspace{3mm}
	\noindent \textbf{1}. Let $J$ be the set of output sets of the given $U_2L_1$ operator. We form the relation $R$ to be such that for $a, b \in V,$ $a \sim_R b \iff (a \in X \iff b\in X)$ for any $X \in J.$ It is clear that $R$ is an equivalence relation.

	\vspace{3mm}
	\noindent \textbf{2}. For each $Y \neq \emptyset$ output set, find the minimal pre-image sets with respect to $\subseteq,$ $Y_m,$ such that $U_2L_1(Y_m) = Y$. Collect all these minimal sets in a set $K.$  Note that we can always find these minimal sets since $\mathscr{P}(V)$ is finite.
	
	\vspace{3mm}
	\noindent \textbf{3}. Using $K,$ we form the relation $S$ to be such that for $a, b \in V,$ $a \sim_S b \iff (a \in X \iff b\in X)$ for any $X \in K.$ It is clear that $S$ is an equivalence relation.
	
	\vspace{3mm}
	\noindent \textbf{4}. Form the operator $U_RL_S : \mathscr{P}(V) \rightarrow \mathscr{P}(V)$ generated by $(S, R).$ If for all $X \in \mathscr{P}(V)$, the given $U_2L_1$ operator is such that $U_2L_1(X) = U_RL_S(X),$ then $(S, R)$ is a solution proving that a solution exists (note that it is not necessarily unique). Return $(S, R).$ Otherwise, discard $S$ and $R$ and return 0 signifying that no solution exists.\\
	
	\noindent We will  prove the claim in step 4 in this section. \\

	\begin{lemma} \label{p21}
		Let $V$ be a set and $U_2L_1 : \mathscr{P}(V) \rightarrow \mathscr{P}(V)$ be a given fully defined operator on $\mathscr{P}(V)$ with $L_1$ and $E_2$ based on unknown $E_1$ and $E_2$ respectively.	 Let $R$ and $S$ be equivalence relations defined on $V$ as constructed in Algorithm 4.3. Then  $E_2\leq R$ and $E_1 = S.$
		
	\end{lemma}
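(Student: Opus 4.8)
The plan is to establish the two claims $E_2 \leq R$ and $E_1 = S$ separately, splitting the second into the containments $E_1 \leq S$ and $S \leq E_1$. Throughout I would use that $U_2L_1(X)$ is always a union of $E_2$-classes (its outermost operator is $U_2$) and that $L_1$ only ever ``sees'' the complete $E_1$-classes contained in its argument. The containment $E_2 \leq R$ is then immediate and parallels Lemma~\ref{p15}: every output set lies in the range of $U_2$, hence is a union of $E_2$-classes, so if $a \sim_{E_2} b$ then $a$ and $b$ lie in exactly the same output sets, giving $a \sim_R b$.

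For $E_1 \leq S$ I would first show that every minimal pre-image is a union of $E_1$-classes. If a minimal pre-image $Y_m$ contained a point $y$ with $[y]_{E_1} \not\subseteq Y_m$, then deleting $[y]_{E_1} \cap Y_m$ from $Y_m$ removes no complete $E_1$-class, so $L_1$, and therefore $U_2L_1$, is unchanged, contradicting minimality. Hence each $X \in K$ is $E_1$-closed, and $a \sim_{E_1} b$ forces $a \in X \iff b \in X$ for every $X \in K$, i.e. $a \sim_S b$.

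The substantive step, and the one that distinguishes this case from $L_2L_1$, is $S \leq E_1$. Here I would show that each single class $[a]_{E_1}$ is itself a minimal pre-image and therefore a member of $K$. Indeed $U_2L_1([a]_{E_1}) = \upp_{E_2}([a]_{E_1}) \supseteq [a]_{E_1}$, which is non-empty; and any proper subset $X' \subsetneq [a]_{E_1}$ contains no complete $E_1$-class, so $L_1(X') = \emptyset$ and $U_2L_1(X') = \emptyset \neq \upp_{E_2}([a]_{E_1})$. Thus no pre-image sits strictly below $[a]_{E_1}$, so $[a]_{E_1}$ is minimal and lies in $K$. Given distinct classes $[a]_{E_1} \neq [b]_{E_1}$, the set $[a]_{E_1} \in K$ contains $a$ but not $b$, so $a \not\sim_S b$; contrapositively $S \leq E_1$. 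Together with $E_1 \leq S$ this yields $E_1 = S$, and with $E_2 \leq R$ the lemma follows.

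I expect the only real obstacle to be the $S \leq E_1$ direction, since it is exactly where the final upper approximation $U_2$ is essential. Because $U_2$ inflates every non-empty set, each $E_1$-class maps to a non-empty output and hence survives as a minimal pre-image; in the $L_2L_1$ setting the final $L_2$ may annihilate a class (sending $[a]_{E_1}$ to $\emptyset$), which is precisely why there $E_1 = S$ can fail and extra hypotheses are needed. Care is needed only to confirm that ``minimal pre-image'' ranges over all of $\mathscr{P}(V)$, not merely over unions of $E_1$-classes, so that the comparison with an arbitrary $X' \subsetneq [a]_{E_1}$ is legitimate; the computation $L_1(X') = \emptyset$ then disposes of exactly those partial subsets.
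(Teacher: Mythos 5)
Your proof is correct and follows essentially the same route as the paper: both establish $E_2 \leq R$ from the fact that output sets are unions of $E_2$-classes, both establish $E_1 \leq S$ by pruning incomplete $E_1$-classes from a minimal pre-image, and both obtain $S \leq E_1$ from the key fact that a full $E_1$-class is a minimal pre-image of a non-empty output set, since $U_2$ preserves non-emptiness while $L_1$ annihilates every proper subset of an $E_1$-class. The only difference is organizational: you show directly that every $E_1$-class lies in $K$ and hence separates points of distinct $E_1$-classes under $\sim_S$, whereas the paper packages the same mechanism as a proof by contradiction on an $S$-class that splits in $E_1$.
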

	
	\begin{proof}
		We first prove 	$E_2\leq R.$ Now the output set of a non-empty set in $\mathscr{P}(V)$ is obtained by first applying the lower approximation $L_1$ to it and and after applying the upper approximation, $U_2$ to it. Hence by definition of $U_2,$ the non-empty output sets are unions of equivalence classes of the equivalence relation which corresponds to $U_2.$   If $a$ is in an output set but $b$ is not in it then they cannot belong to the same equivalence class of  $E_2$ i.e. $a \not\sim_R b$ implies that $a \not\sim_{E_2} b.$ Hence $E_2\leq R. $   
		
		Now, the minimal pre-image, X say,  of a non-empty output set which is a  union of equivalence classes in $E_2,$ has to be a union of equivalence classes in $E_1.$ For suppose it was not. Let $Y = \{y \in X\ | \ [y]_{E_1} \not\subseteq X\}.$ By assumption, $Y \neq \emptyset.$ Then $L_1 (X) = L_1 (X - Y).$ Hence $U_2L_1(X) = U_2L_1 (X - Y)$ but $|X - Y| < |X|$ contradicting minimality of $X$. Therefore, if $a$ belongs to the minimal pre-image of a non-empty output set but $b$ does not belong to it, then $a$ and $b$ cannot belong to the same equivalence class in $E_1$ i.e. $a \not\sim_S b$ which implies that $a \not\sim_{E_1} b.$ Hence $E_1\leq S.$  
		
		We now prove the converse, that $ S \leq E_1.$ For suppose it was not. That is, $E_1 < S.$ Then there exists at least one equivalence class in $S$ which is split into smaller equivalence classes in $E_1.$ Call this equivalence class $[a]_S.$ Then there exists $w, t \in V$  such that $[w]_{E_1} \subset [a]_S$ and $[t]_{E_1} \subset[a]_S.$ Now consider the pre-images of a  minimal output sets of $U_2L_1,$ containing $t.$ That is, $X$ such that $U_2L_1(X) = Y$ where $Y$ is the minimal output set such that $t \in Y$ and for any $X_1 \subset X,$  $U_2L_1(X_1) \neq Y.$  The following is a very useful observation.
		
		\vspace{2mm}
		\noindent \textbf{Claim:} For any $v \in \textbf{\textit{u}}_{E_1}([y]_{E_2}),$ $[v]_S$ is a minimal set such that  $U_2L_1([v]_S) \supseteq [y]_{E_2}.$  
		\vspace{2mm}
		\noindent The above follows because 1) $U_2L_1([v]_S) \supseteq [y]_{E_2}$  since $v \in \textbf{\textit{u}}_{E_1}([y]_{E_2})$ and 2) For any $Z \subset [v]_S, \ U_2L_1(Z) = \emptyset$ since $L_1(Z) = \emptyset.$ 
		
		Now for $U_2L_1(X)$ to contain $t,$ then it must contain $[t]_{E_2}.$  Hence by the previous claim, $X = [t]_S$ is such a  minimal  pre-image of a set containing $t$. If $L_1$ is based on $S,$ then $X = [t]_S = [a]_S.$  However, if $L_1$ is based on $E_1,$ then $ X = [a]_S$ is not such a minimal set because $X = [t]_{E_1}$ is such that $U_2L_1(X) = Y$ but $[t]_{E_1} \subset [a]_S.$ Hence,  $U_RL_S(X)\neq U_{E_2}L_{E_1}(X)$ for all $X \in \mathscr{P}(V)$ which is a contradiction to $(E_1, E_2)$ also being a solution for the given $U_2U_1$ operator. Thus we have that $E_1 = S.$     
	\end{proof}

	\begin{lemma} \label{l6}
		Let $V$ be a finite set and  $U_2L_1 : \mathscr{P}(V) \rightarrow \mathscr{P}(V)$ be a fully defined operator. If there exists  equivalence pair solutions to the operator $(E_1, E_2)$ which is such that there exists  $[x]_{E_2}, [y]_{E_2} \in E_2,$ such that $[x]_{E_2} \neq [y]_{E_2}$ and $\textbf{u}_{E_1}([x]_{E_2}) = \textbf{u}_{E_1}([y]_{E_2}), $ then there exists another solution, $(E_1, H_2)$, where $H_2$ is an equivalence relation formed from $E_2$ by combining $[x]_{E_2}$ and $[y]_{E_2}$  and all other elements are as in $E_2.$ That is, $[x]_{E_2} \cup [y]_{E_2} = [z] \in H_2$ and if $[w] \in E_2$ such that $[w] \neq [x]_{E_2}$ and $[w]_{E_2} \neq [y]_{E_2},$ then $[w] \in H_2.$
		
	\end{lemma}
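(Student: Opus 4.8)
The plan is to mirror the argument of Lemma \ref{l2}, replacing the outer lower approximation by the upper approximation $U_2$ and tracking how merging two $E_2$-classes affects the output. First I would record the basic description of the operator: for any $X \in \mathscr{P}(V)$, the set $L_1(X)$ is a union of $E_1$-equivalence classes, and $U_2 L_1(X)$ is exactly the union of those $E_2$-classes that meet $L_1(X)$. Since $H_2$ is obtained from $E_2$ by fusing $[x]_{E_2}$ and $[y]_{E_2}$ into a single class $[z] = [x]_{E_2} \cup [y]_{E_2}$ and leaving every other class untouched, the only possible discrepancy between $U_{H_2} L_1(X)$ and $U_{E_2} L_1(X)$ occurs when $L_1(X)$ meets exactly one of $[x]_{E_2}, [y]_{E_2}$: if $L_1(X)$ meets both or neither, the two outputs include (respectively exclude) the same points, whereas if it meets exactly one, the $H_2$-output picks up the whole of $[z]$ while the $E_2$-output picks up only one of the two classes.

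The heart of the proof is therefore to show that, under the hypothesis $\textbf{u}_{E_1}([x]_{E_2}) = \textbf{u}_{E_1}([y]_{E_2})$, the set $L_1(X)$ can never meet exactly one of the two classes. Suppose $L_1(X) \cap [x]_{E_2} \neq \emptyset$. Since $L_1(X)$ is a union of $E_1$-classes, some entire $E_1$-class $C \subseteq L_1(X)$ meets $[x]_{E_2}$, whence $C \subseteq \textbf{u}_{E_1}([x]_{E_2})$. By hypothesis $C \subseteq \textbf{u}_{E_1}([y]_{E_2})$; but $\textbf{u}_{E_1}([y]_{E_2})$ is itself a union of $E_1$-classes each of which meets $[y]_{E_2}$, so the $E_1$-class $C$ must be one of them and hence $C \cap [y]_{E_2} \neq \emptyset$. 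Therefore $L_1(X) \cap [y]_{E_2} \neq \emptyset$. The reverse implication is symmetric, so $L_1(X)$ meets $[x]_{E_2}$ if and only if it meets $[y]_{E_2}$.

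Combining the two observations, the excluded case never arises, so $U_{H_2} L_1(X) = U_{E_2} L_1(X) = U_2 L_1(X)$ for every $X \in \mathscr{P}(V)$. Since $(E_1, E_2)$ was assumed to generate the given $U_2 L_1$ operator, $(E_1, H_2)$ generates the same operator and is therefore also a solution, which is exactly the claim. I expect the only delicate point to be the sub-step identifying an $E_1$-class contained in $\textbf{u}_{E_1}([y]_{E_2})$ as one that actually intersects $[y]_{E_2}$; this uses the granule-based description of the upper approximation in \eqref{eq:2} and is precisely what makes the equal-upper-approximation hypothesis do its work.
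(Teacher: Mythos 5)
Your proposal is correct and follows essentially the same route as the paper: both identify that $U_{H_2}L_1(X)$ and $U_{E_2}L_1(X)$ can differ only when $L_1(X)$ (a union of $E_1$-classes) meets exactly one of $[x]_{E_2}, [y]_{E_2}$, and both use the hypothesis $\textbf{u}_{E_1}([x]_{E_2}) = \textbf{u}_{E_1}([y]_{E_2})$ to rule this out. Your write-up is in fact more careful than the paper's, since you make explicit the granule-based step showing that an $E_1$-class contained in $\textbf{u}_{E_1}([y]_{E_2})$ must actually intersect $[y]_{E_2}$, which the paper leaves implicit.
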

	
	\begin{proof}
		Suppose that $(E_1, E_2)$ is a solution of a given $U_2L_1$ operator and $H_2$ is as defined above. Now, $U_2L_1(X) = Y $ iff the union of $E_1$-equivalence classes in $X$ intersects the equivalence classes of $E_2$ whose union is equal to $Y.$ So, in the $(E_1, H_2)$ solution, the only way that $U_{H_2}L_{E_1}(X)$ could be different from $U_{E_2}L_{E_1}(X)$(which is $=U_2L_1(X)$) is if there some equivalence class of $E_1$ which either intersects $[x]_{E_2}$ but not $[y]_{E_2}$ or intersects $[y]_{E_2}$ but not $[x]_{E_2}.$ However, this is not the case since we have that $\textit{\textbf{u}}_{E_1}([x]_{E_2}) = \textit{\textbf{u}}_{E_1}([y]_{E_2}).$ Hence, $U_{E_2}L_{E_1} (X) = U_{H_2}L_{E_1}(X)$ for all $X \in \mathscr{P}(V)$ and therefore if $(E_1, E_2)$ is a solution to the given operator then so is $(E_1, H_2).$
	\end{proof}

	\noindent Next, we prove the claim in step 4 of Algorithm 4.2.

	\begin{theorem} \label{t5}
		Let $V$ be a finite set and $U_2L_1 : \mathscr{P}(V) \rightarrow \mathscr{P}(V)$ a fully defined operator. If there exists an equivalence relation pair solution, then there exists a solution $(E_1, E_2),$  which satisfies, 
		
		\vspace{2mm}
		\noindent (i) for each $[x]_{E_2}, [y]_{E_2} \in E_2,$ if $[x]_{E_2} \neq [y]_{E_2}$ then $\textbf{u}_{E_1}([x]_{E_2}) \neq \textbf{u}_{E_1}([y]_{E_2}), $  
		
		\vspace{2mm}
		\noindent Furthermore $E_1 =S$ and $E_2 = R,$ where $(S, R)$ are the relations obtained by applying Algorithm 4.2 to the given $U_2L_1$ operator.
	\end{theorem}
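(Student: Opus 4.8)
The plan is to mirror the structure of the proof of Theorem \ref{t2}, adapting it to the upper–lower composition $U_2L_1$. The statement asserts two things: first, that whenever a solution exists one can be massaged into a solution satisfying condition (i); and second, that this canonical solution coincides exactly with the pair $(S,R)$ produced by Algorithm 4.2. I would treat these as two separate tasks. For the first, I would invoke Lemma \ref{l6} repeatedly: if a given solution $(C,D)$ fails condition (i), then there exist distinct classes $[x]_{E_2},[y]_{E_2}$ with equal $\textbf{u}_{E_1}$-images, and Lemma \ref{l6} merges them into a coarser $H_2$ that is still a solution. Since $\mathscr{P}(V)$ is finite and each application strictly reduces the number of $E_2$-classes, this terminates in finitely many steps at a solution $(E_1,E_2)$ satisfying (i). Note that, unlike Theorem \ref{t2}, there is no analogue of condition (ii) here, because $E_1$ is already pinned down exactly — this is the key structural difference coming from the upper approximation on the outside.

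For the second task, I would show $E_1 = S$ and $E_2 = R$. The equality $E_1 = S$ is already established in Lemma \ref{p21}, which proves $E_1 = S$ directly (both $E_1 \le S$ and $S \le E_1$) for any solution, so I would simply cite it. The remaining work is $E_2 = R$. By Lemma \ref{p21} we have $E_2 \le R$, so I need only the reverse inclusion $R \le E_2$ under the standing assumption that $(E_1,E_2)$ satisfies condition (i). I would argue by contradiction: suppose $a \sim_R b$ but $a \not\sim_{E_2} b$, so $[a]_{E_2} \neq [b]_{E_2}$. The relation $R$ glues $a$ and $b$ precisely because they appear together in every output set of $U_2L_1$. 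I would then want to exhibit an output set separating them, deriving it from the fact that $[a]_{E_2} \neq [b]_{E_2}$ together with condition (i), which guarantees $\textbf{u}_{E_1}([a]_{E_2}) \neq \textbf{u}_{E_1}([b]_{E_2})$. The natural candidate output set to examine is $U_2L_1\bigl(\textbf{\textit{u}}_{E_1}([a]_{E_2})\bigr)$, or more carefully a minimal such set, built so that it contains $[a]_{E_2}$ but not all of $[b]_{E_2}$.

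The main obstacle, and where I would spend the most care, is establishing the right "minimal output set" characterization for the $U_2L_1$ case, playing the role that Claim 1 and Claim 2 played in Theorem \ref{t2}. For $U_2L_1$, a set $X$ produces an output containing $[a]_{E_2}$ iff $L_1(X)$ intersects $[a]_{E_2}$, i.e. iff $X$ contains a full $E_1$-class that meets $[a]_{E_2}$; this is a weaker, intersection-type condition than the containment condition governing $L_2L_1$, so the minimal generating sets are single $E_1$-classes rather than unions $\textbf{u}_{E_1}(\cdot)$. I would therefore prove a claim of the form: for each $[a]_{E_2}$, the minimal output sets containing $a$ arise from the classes $[v]_{E_1}$ with $v \in \textbf{u}_{E_1}([a]_{E_2})$, and condition (i) ensures these minimal outputs distinguish $[a]_{E_2}$ from every other $[b]_{E_2}$. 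Once that separation claim is pinned down, the contradiction with $a \sim_R b$ is immediate, giving $R \le E_2$ and hence $E_2 = R$, completing the proof.
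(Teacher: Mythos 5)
Your proposal mirrors the paper's own proof essentially step for step: repeated application of Lemma \ref{l6} to reach a solution satisfying condition (i), citation of Lemma \ref{p21} for $E_1 = S$ and $E_2 \leq R$, and then a separation argument showing that any two distinct $E_2$-classes are distinguished by some output set generated by a single $E_1$-class $[v]_{E_1}$ with $v$ lying in one of $\textbf{u}_{E_1}([a]_{E_2}), \textbf{u}_{E_1}([b]_{E_2})$ but not the other — which is exactly the Claim in the paper's proof. The refinement you flag as the main obstacle (that minimal generating sets are single $E_1$-classes, via the intersection-type condition, rather than the unions $\textbf{u}_{E_1}(\cdot)$ used in the $L_2L_1$ case) is precisely how the paper resolves it, so there is no gap.
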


	\begin{proof} 
		
		Suppose that there exists a solution $(C, D).$ Then by Lemma \ref{p21}, $C = S,$ where $S$ is produced by Algorithm 4.2. If $(S, D)$ satisfies condition (i) of the theorem then take $(E_1, E_2) = (C, D).$ Otherwise, use repeated applications of Lemma \ref{l6} until we obtain a solution, $(S, E_2)$ which satisfies the condition of the theorem. Since $\mathscr{P}(V)$ is finite this occurs after a finite number of applications of the lemma. Moreover, by Lemma \ref{p21}, $E_2 \leq R.$

		Consider the minimal sets in the output list of the given $U_2L_1$ operator. It is clear that these sets are union of one or more equivalence classes of $E_2.$ Let $[y]_{E_2} \in E_2$ then for any $v \in \textbf{\textit{u}}_{E_1}([y]_{E_2})),$  $U_2L_1([v]_S) \supseteq [y]_{E_2}$ (by the claim in Lemma \ref{p21}). 
		
		\vspace{2mm}	
		\noindent \textbf{Claim:} (i) For any $[y]_{E_2} \neq [z]_{E_2} \in E_2,$ there exists an output set, $U_2L_1(X)$ such that it contains at least of $[y]_{E_2}$ or $[z]_{E_2}$ both it does not contain both sets.
		
		\vspace{2mm}
		Suppose that $[y]_{E_2} \neq [z]_{E_2} \in E_2.$ By the assumed condition of the theorem, then $\textbf{\textit{u}}_{E_1}([y]_{E_2}) \neq \textbf{\textit{u}}_{E_1}([z]_{E_2}).$ Hence either (i) there exists $a \in V$ such that $a \in \textbf{\textit{u}}_{E_1}([y]_{E_2})$ and $a \not\in \textbf{\textit{u}}_{E_1}([z]_{E_2})$ or (ii) there exists $a \in V$ such that $a \not\in \textbf{\textit{u}}_{E_1}([y]_{E_2})$ and
		$a \in \textbf{\textit{u}}_{E_1}([z]_{E_2}).$ Consider the first case.  This implies that $[a]_S \cap [y]_{E_2} \neq \emptyset$  while $[a]_S \cap [z]_{E_2} = \emptyset.$ 
		Therefore, $U_2L_1([a]_S) \supseteq [y]_{E_2}$ but $U_2L_1([a]_S) \not\supseteq [z]_{E_2}.$ Similarly, for the second case we will get that  $U_2L_1([a]_S) \supseteq [z]_{E_2}$ but $U_2L_1([a]_S) \not\supseteq [y]_{E_2}$ and the claim is shown.
		
		We recall that  $a \sim_R b \iff \   (a \in X \iff b \in X)$ for each $X$ in the range of the given $U_2L_1$. From the previous proposition we have that $E_2 \leq R.$ From the above  claim we see that if $[y]_{E_2} \neq [z]_{E_2}$ in $E_2$ then there is an output set that contains one of $[y]_{E_2}$ or $[z]_{E_2},$ but not the other. Hence, if $x \not\sim_{E_2} y$ then $x \not\sim_R y.$ That is, $R \leq E_2.$ Therefore we have that $R = E_2.$
	\end{proof}

	\subsection{Characterising Unique Solutions}
	
	\begin{theorem} \label{t6}
		Let $V$ be a finite set and let $U_2L_1: \mathscr{P}(V) \rightarrow \mathscr{P}(V)$ be a fully defined successive approximation operator on $\mathscr{P}(V).$ If $(S, R)$ is returned by Algorithm 4.1,  then $(S, R)$ is the unique solution of the operator iff the following holds:
		
		\vspace{2mm}
		\noindent (i) For any $[x]_R \in R,$ 
		there exists $[z]_S \in S$ such that, $ |[x]_R \cap [z]_S| = 1.$ 
		
		\vspace{2mm}

	\end{theorem}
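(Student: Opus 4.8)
\noindent The plan is to exploit the rigidity of the first relation, which is the feature that distinguishes this case from the $L_2L_1$ case (Theorem \ref{t3}) and explains why only a single condition survives here. By Lemma \ref{p21}, every equivalence--pair solution $(E_1,E_2)$ of the given $U_2L_1$ operator satisfies $E_1=S$ \emph{exactly} and $E_2\leq R$, and by hypothesis (together with Theorem \ref{t5}) the pair $(S,R)$ is itself a solution. Hence the lower--level relation is pinned down completely, and any solution other than $(S,R)$ must have the form $(S,E_2)$ with $E_2<R$. The problem therefore reduces to a single question: when can an $R$--class be split into a strictly finer second--level relation while leaving the operator unchanged? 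This is exactly the ``coarsest versus finer'' direction handled in the first half of the proof of Theorem \ref{t3}, now with no dual obligation for the $S$--side.

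\noindent The key computation is to characterise operator--preserving splits. Since $L_S(X)$ ranges over all unions of $S$--classes as $X$ ranges over $\mathscr{P}(V)$, and since the output of $(S,E_2)$ on $X$ is $\textit{\textbf{u}}_{E_2}(L_S(X))$ while that of $(S,R)$ is $\textit{\textbf{u}}_{R}(L_S(X))$, equality of the two operators forces, for each $R$--class $[a]_R$ and each $S$--class $[z]_S$, that $[z]_S$ meets $[a]_R$ iff it meets every $E_2$--piece of $[a]_R$. Specialising to a two--block split $[a]_R=P\cup Q$ with $P\cap Q=\emptyset$ and $P,Q\neq\emptyset$, I would record the clean criterion: $(S,E_2)$ remains a solution iff $\textit{\textbf{u}}_S(P)=\textit{\textbf{u}}_S(Q)$, i.e. $P$ and $Q$ intersect exactly the same $S$--classes (the same condition already used in the first case of the proof of Theorem \ref{t3}).

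\noindent With this criterion both directions follow. For ($\Rightarrow$) I argue by contraposition: if condition (i) fails, then some $[a]_R$ meets every $S$--class either not at all or in at least two elements; building $P$ by choosing a single representative of $[a]_R$ from each such $S$--class and setting $Q=[a]_R-P$ gives a nontrivial split with $\textit{\textbf{u}}_S(P)=\textit{\textbf{u}}_S(Q)$, hence a strictly finer solution $(S,E_2)$, contradicting uniqueness. For ($\Leftarrow$), assume condition (i) and suppose some $(S,E_2)$ with $E_2<R$ were a solution; choosing an $R$--class $[a]_R$ genuinely split by $E_2$, the preservation analysis forces every $S$--class meeting $[a]_R$ to meet two distinct $E_2$--blocks of $[a]_R$, hence to meet $[a]_R$ in at least two points, contradicting the existence of a $[z]_S$ with $|[a]_R\cap[z]_S|=1$.

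\noindent The main obstacle will be the key computation of the middle paragraph: getting the quantifiers right when passing from ``equality of operators for all $X$'' to the purely combinatorial statement about which $S$--classes meet which blocks, and in particular verifying that the two--block reduction captures a general multi--block split (it does, since a failure on any pair of blocks already breaks operator equality). Care is also needed to confirm that the constructed $P$ and $Q$ are both nonempty, which holds because $[a]_R\neq\emptyset$ is met by at least one $S$--class, and that the resulting $E_2$ is genuinely strictly finer than $R$. Everything else is a translation of the already--established machinery (Lemma \ref{p21}, Theorem \ref{t5}, and the split argument inside Theorem \ref{t3}) into the $U_2L_1$ setting.
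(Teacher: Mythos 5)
Your proposal is correct and follows essentially the same route as the paper: Lemma \ref{p21} pins down $E_1 = S$ and reduces everything to splits of $R$-classes, your $(\Leftarrow)$ direction amounts to evaluating the two operators on the $S$-class witnessing $|[x]_R \cap [z]_S| = 1$ (the paper's witness $[v]_S$), and your $(\Rightarrow)$ direction uses the same representative-splitting construction of $P$ and $Q$ with $\textbf{\textit{u}}_S(P) = \textbf{\textit{u}}_S(Q)$. The only difference is organizational: you factor both directions through one explicit preservation criterion (the operators agree iff every $S$-class meeting an $R$-class meets each of its $E_2$-pieces), which the paper instead establishes ad hoc inside each direction.
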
	
	
	\begin{proof}
		We prove $\Leftarrow$  direction first. So assume the condition holds. Then by Theorem \ref{t5} if there is a unique solution, it is $(S, R)$ produced by Algorithm 4.2. We note that by Lemma \ref{p21}, any  other solution, $(E_1, E_2)$ to the given $U_2L_1$ operator   must be such that $E_1 = S$ and $E_2 \leq R.$

		So, suppose to get a contradiction, that there exists a solution $(E_1, E_2)$ which is such  that $E_2 < R.$ That is, $E_2$ contains a splitting of at least one of the equivalences classes of $R,$ say $[a]_R.$ Hence $|[a]_R| \geq 2.$ By assumption there exists a $[z]_S \in S$  such that $|[a]_R \cap [z]_S| = 1.$ Call the element in this intersection $v$ say. We note that $[v]_S= [z]_S.$ Now as $[a]_R$ is spilt into smaller classes in $E_2,$ $v$ must be in one of these classes, $[v]_{E_2}.$  Now, $U_2L_1([v]_S)$ when $U_2$ is based on $E_2,$ contains $[v] _{E_2}$ but does not contain $[a]_R.$ This is because $[v]_S \cap ([a]_R - [v]_{E_2}) = \emptyset.$ That is, $U_{E_2}L_S([v]_S) \not\supseteq [a]_R$ but $U_RL_S([v]_S) \supseteq [a]_R.$ Hence $U_{E_2}L_S (X) \neq U_RL_S(X)$ for all $X \in \mathscr{P}(V).$ This is a contradiction to $(S, E_2)$ also being a solution to the given $U_2L_1$ operator for which $(S, R)$ is a solution.  Hence we have a contradiction and so $E_2 = R.$

		Now we prove $\Rightarrow $ direction. Suppose that $(E_1, E_2)$ is the unique solution, and assume that the condition does not hold.  By uniqueness,  $(E_1, E_2) = (S, R).$ Then, there exists an $[x]_R \in R$ such that for all $[y]_S \in S$ such that $[x]_R \cap [y]_S \neq \emptyset$  we have that $|[x]_R \cap [y] _S| \geq 2.$ 
		
		Suppose that $[x]_R$ has non-empty intersection with with $n$ sets in $S.$ We note that $n \geq 1.$ Form a sequence of these sets; $S_1, \dots S_n. $ Since $|[x]_R \cap S_i| \geq 2$ for each $i$ such that $i= 1, \dots n,$ let $\{a_{i1}, a_{i2} \}$ be in $[x]_R \cap S_i$ for each $i$ such that $i= 1, \dots n.$ We split $[x]_R$ to form a finer $E_2$ as follows: Let $P = \{a_{i1}\ | \ i= 1, \dots n\}$ and $Q = [x]_R - P$ be two equivalence classes in $E_2$ and for the rest of $E_2,$ for any $[y]_R \in R$ such that $[y]_R \neq [x]_R, $ let $[y] \in E_2$ iff $[y] \in R.$ Now, $U_RL_S(X) = Y $ iff the union of $S$-equivalence classes in $X$ intersects equivalence classes of $E_2$ whose union is equal to $Y.$ So, for the $(S, E_2)$ solution, the only way that $L_{E_2}L_S(X)$ could be different from $L_RL_S(X)$  is if there is an equivalence class in $S$ which intersects $P$ but not $Q$ or $Q$ but not $P.$ However, this is not the case because $\textit{\textbf{u}}_S(P) = \textit{\textbf{u}}_S(Q).$ Hence, $L_RL_S(X) = L_{E_2}L_S(X)$ for all $X \in \mathscr{P}(V)$ and  if $(S, R)$ is a solution for the given vector, then so is $(S, E_2)$ which is a contradiction of assumed uniqueness of $(S, R).$
	\end{proof}  
	
	\noindent The following result sums up the effects of Theorem \ref{t5} and Theorem \ref{t6}. 
	
	\vspace{2mm}
	\begin{theorem}
		Let $V$ be a finite set and let $U_2L_1: \mathscr{P}(V) \rightarrow \mathscr{P}(V)$ be a given fully defined operator on $\mathscr{P}(V).$ Then there exists a unique pair of equivalence relations solution $(E_1, E_2)$  iff the following holds:

		\vspace{2mm}
		\noindent (i) for each $[x]_{E_2}, [y]_{E_2} \in E_2,$ if $[x]_{E_2} \neq [y]_{E_2}$ then $\textbf{u}_{E_1}([x]_{E_2}) \neq \textbf{u}_{E_1}([y]_{E_2}), $

		\vspace{2mm}
		\noindent (iii) For any $[x]_{E_2} \in E_2,$ 
		there exists $[z]_{E_1} \in E_1$ such that, $ |[x]_{E_2} \cap [z]_{E_1}| = 1.$

	\end{theorem}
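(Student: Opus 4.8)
The plan is to read this theorem off Theorem \ref{t5} and Theorem \ref{t6}, which together already contain both halves of the argument; the real work is to align the two conditions, stated here in terms of the solution $(E_1, E_2)$, with the conclusions of those theorems, which are phrased partly in terms of the pair $(S, R)$ produced by Algorithm 4.2. The guiding observation is Lemma \ref{p21}: for any solution of a given $U_2L_1$ operator the first component is forced to equal $S$, so the entire uniqueness question concerns the second component alone and amounts to deciding whether it is pinned to the canonical $R$. Condition (i) here is literally condition (i) of Theorem \ref{t5}, and condition (iii) here is condition (i) of Theorem \ref{t6} rewritten under the identification $(E_1, E_2) = (S, R)$, so the whole proof is a matter of invoking the two source theorems in the right order.

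For the $\Rightarrow$ direction I would assume $(E_1, E_2)$ is the unique solution. Since a solution exists, Theorem \ref{t5} guarantees that the canonical pair $(S, R)$ is itself a solution and satisfies condition (i); uniqueness therefore forces $(E_1, E_2) = (S, R)$, so condition (i) holds for $(E_1, E_2)$ verbatim. Condition (iii) is then exactly the hypothesis of Theorem \ref{t6} under the identification $(E_1, E_2) = (S, R)$, so applying the $\Rightarrow$ direction of Theorem \ref{t6} to the unique solution $(S, R)$ yields it.

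For the $\Leftarrow$ direction I would assume $(E_1, E_2)$ is a solution satisfying (i) and (iii). Lemma \ref{p21} gives $E_1 = S$ and $E_2 \leq R$ for free, and the one step that needs genuine care is upgrading $E_2 \leq R$ to $E_2 = R$. This is precisely what the closing argument of Theorem \ref{t5} supplies: under condition (i) any two distinct classes of $E_2$ have different $\textbf{u}_{E_1}$-images, hence are separated by some output set, so $R \leq E_2$ and thus $E_2 = R$. Having secured $(E_1, E_2) = (S, R)$, condition (iii) is condition (i) of Theorem \ref{t6}, whose $\Leftarrow$ direction certifies that $(S, R)$ — equivalently $(E_1, E_2)$ — is the unique solution. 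The only genuine obstacle is therefore this translation between the algorithmic pair $(S, R)$ and the solution pair $(E_1, E_2)$: one must use condition (i) to force $E_2 = R$ rather than merely $E_2 \leq R$, after which everything reduces to bookkeeping, exactly parallel to the way Theorem \ref{t4} summarised Theorem \ref{t2} and Theorem \ref{t3} in the $L_2L_1$ case.
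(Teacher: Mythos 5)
Your proposal is correct and matches the paper's intent exactly: the paper offers no separate proof for this theorem, presenting it only as a summary of Theorem \ref{t5} and Theorem \ref{t6}, which is precisely the reduction you carry out. Your explicit handling of the identification $(E_1, E_2) = (S, R)$ --- using Lemma \ref{p21} for $E_1 = S$ and the separation argument under condition (i) to upgrade $E_2 \leq R$ to $E_2 = R$ --- is exactly the bookkeeping the paper leaves implicit.
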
	
	
	\section{Decomposing $L_2U_1$ Approximations} 
	
	For this case we observe that $L_2U_1$ is dual to the case previously investigated $U_2L_1$ operator. Due to the duality connection between $L_2U_1$ and $U_2L_1$, the question of unique solutions of the former reduces to the latter as the following proposition shows. \\
	
	\begin{proposition} \label{p22}
		Let $V$ be a finite set and let $L_2U_1: \mathscr{P}(V) \rightarrow \mathscr{P}(V)$ be a given fully defined operator on $\mathscr{P}(V).$   Then  any solution $(E_1, E_2),$ is also a solution of $U_2L_1: \mathscr{P}(V) \rightarrow P\mathscr{P}(V)$ operator where  $U_2L_1(X) = -L_2U_1(-X)$ for any $X \in \mathscr{P}(V).$  Therefore, the solution $(E_1, E_2)$ for the defined $U_2U_1$ operator is a unique iff the solution for the corresponding $U_2L_1$ operator is unique.	 
		
	\end{proposition}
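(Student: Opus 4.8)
The plan is to mirror the proof of Proposition \ref{p20} in structure, exploiting the De Morgan dualities (properties 9 and 10 of Section 2.1) to convert the complement of an $L_2U_1$-value into a $U_2L_1$-value over the \emph{same} relation pair. First I would verify the stated identity at the level of the relation-based operators: for any equivalence relations $E_1, E_2$ and any $X \subseteq V$, expand $-L_2U_1(-X) = -\low_{E_2}(\upp_{E_1}(-X))$, push the outer complement inside using property 10 to obtain $\upp_{E_2}(-\upp_{E_1}(-X))$, and then apply property 9 to recognise $-\upp_{E_1}(-X) = \low_{E_1}(X)$, yielding $\upp_{E_2}(\low_{E_1}(X)) = U_2L_1(X)$. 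This confirms that the operator defined by $U_2L_1(X) := -L_2U_1(-X)$ is genuinely the $U_2L_1$ operator built from the same $(E_1,E_2)$.

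With the identity in hand the forward direction is immediate. If $(E_1,E_2)$ is a solution of the given $L_2U_1$ operator, then $L_2U_1(X) = \low_{E_2}(\upp_{E_1}(X))$ for every $X$, and substituting into the definition of the associated $U_2L_1$ operator together with the computation above gives $U_2L_1(X) = \upp_{E_2}(\low_{E_1}(X))$ for all $X$, i.e. $(E_1,E_2)$ solves the corresponding $U_2L_1$ operator. For the converse I would use that complementation is an involution on $\mathscr{P}(V)$: replacing $X$ by $-X$ in the defining identity yields $L_2U_1(X) = -U_2L_1(-X)$, so the same dual computation shows that any solution of the $U_2L_1$ operator is a solution of the $L_2U_1$ operator.

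Finally, the uniqueness equivalence follows because the two solution sets coincide as sets of relation pairs: the map $(E_1,E_2)\mapsto(E_1,E_2)$ is a bijection between solutions of the given $L_2U_1$ operator and solutions of the associated $U_2L_1$ operator, so one admits a unique solution precisely when the other does.

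I do not expect a genuine obstacle here, since the argument is purely formal. The only point requiring care --- and the analogue of the single subtlety already present in Proposition \ref{p20} --- is bookkeeping the order and the \emph{type} of the two layers under complementation: unlike the $L_2L_1$/$U_2U_1$ case, the two approximations here are of different kinds, so one must check that complementation sends the inner upper approximation to a lower approximation and the outer lower approximation to an upper approximation simultaneously, which is exactly what the paired application of properties 9 and 10 accomplishes.
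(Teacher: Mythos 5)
Your proposal is correct and follows essentially the same route as the paper: both rest on the duality $U_2L_1(X) = -L_2U_1(-X)$ (which you justify via properties 9 and 10, a verification the paper leaves implicit) and the observation that this identity, together with the involutivity of complementation, makes the two operators share exactly the same set of relation-pair solutions, so uniqueness transfers. Your write-up is simply a more detailed version of the paper's argument, with no substantive difference in approach.
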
  
	
	\begin{proof}
		Recall that	$ U_2L_1(X) = -L_2U_1(-X)$. Hence, if there exists a solution $(E_1, E_2)$ which corresponds to the given $U_2L_1$ operator, this solution corresponds to a solution for the $L_2U_1$ operator  which is based on the same $(E_1, E_2)$ by the equation  $ L_2U_1(X) = -U_2L_1(-X).$ Similarly for the converse. 
	\end{proof}
	
	\noindent \textbf{Algorithm:} 	Let $V$ be a finite set and let $L_2U_1: \mathscr{P}(V) \rightarrow \mathscr{P}(V)$ be a given fully defined operator on $\mathscr{P}(V).$  To solve for a  solution, change it to solving for a solution for the corresponding $U_2L_1$ operator by the equation $ U_2L_1(X) = -L_2U_1(-X).$ Then, when we want to know the $U_2L_1$ output of a set we look at the $L_2U_1$ output of its complement set and take the complement of that. Next, use Algorithm 4.2 and the solution found will also be a solution for the initial $L_2U_1$ operator.

	\subsection{Characterising Unique Solutions}

	\begin{theorem}
		
		Let $V$ be a finite set and let $L_2U_1: \mathscr{P}(V) \rightarrow \mathscr{P}(V)$ be a given fully defined operator on $\mathscr{P}(V).$ If $(E_1, E_2)$ is a solution, then it is unique iff the following holds:

		\vspace{2mm}
		\noindent (i) for each $[x]_{E_2}, [y]_{E_2} \in E_2,$ if $[x]_{E_2} \neq [y]_{E_2}$ then $\textbf{u}_{E_1}([x]_{E_2}) \neq \textbf{u}_{E_1}([y]_{E_2}), $ 
		
		\noindent (ii) For any $[x]_{E_2} \in E_2,$ 
		there exists $[z]_{E_1} \in E_1$ such that, $ |[x]_{E_2} \cap [z]_{E_1}| = 1.$

		\vspace{2mm} 
		
	\end{theorem}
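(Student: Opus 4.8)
The plan is to exploit the duality between $L_2U_1$ and $U_2L_1$ already packaged in Proposition \ref{p22}, so that no fresh combinatorial argument is required. First I would fix a given fully defined $L_2U_1$ operator and form its dual operator $U_2L_1$ via $U_2L_1(X) = -L_2U_1(-X)$ for every $X \in \mathscr{P}(V)$. By Proposition \ref{p22}, a pair $(E_1, E_2)$ is a solution of the given $L_2U_1$ operator if and only if it is a solution of this associated $U_2L_1$ operator, and moreover the solution is unique for one operator precisely when it is unique for the other. Thus the question of when $(E_1, E_2)$ is the unique solution of the $L_2U_1$ operator is transported verbatim to the corresponding $U_2L_1$ operator.

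Next I would invoke the summarizing uniqueness theorem for the $U_2L_1$ case (the theorem immediately following Theorem \ref{t6}), which states that a pair $(E_1, E_2)$ is the unique solution of a fully defined $U_2L_1$ operator if and only if its conditions (i) and (iii) hold. The crucial observation is that those two conditions are phrased entirely in terms of the equivalence relations $E_1$ and $E_2$ themselves, namely that distinct classes of $E_2$ have distinct upper $E_1$-approximations, $\textbf{u}_{E_1}([x]_{E_2}) \neq \textbf{u}_{E_1}([y]_{E_2})$, and that every class of $E_2$ meets some class of $E_1$ in exactly one point. They make no reference to which composite operator is being decomposed, so they coincide with conditions (i) and (ii) in the present statement.

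Combining the two steps yields the result: $(E_1, E_2)$ is the unique solution of the given $L_2U_1$ operator iff it is the unique solution of the dual $U_2L_1$ operator (Proposition \ref{p22}), iff conditions (i) and (ii) hold (the $U_2L_1$ uniqueness theorem). I expect the only delicate point to be bookkeeping about the duality: one must verify that the map $X \mapsto -X$ is an involution on $\mathscr{P}(V)$ that carries a solution pair to the \emph{same} pair $(E_1, E_2)$ rather than to a swapped or complemented one, so that the characterizing conditions are preserved literally and not themselves dualised. Since Proposition \ref{p22} already certifies that the identical $(E_1, E_2)$ serves both operators, this reduces to remarking that the conditions are invariant under the reduction, and no genuinely new work remains.
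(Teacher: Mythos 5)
Your proposal is correct and follows essentially the same route as the paper, whose proof likewise reduces the $L_2U_1$ case to the $U_2L_1$ case via the duality of Proposition \ref{p22} and then invokes the $U_2L_1$ uniqueness characterisation (Theorems \ref{t5} and \ref{t6}, which together constitute the summarizing theorem you cite). Your extra remark that the duality $X \mapsto -X$ carries a solution pair to the \emph{same} pair $(E_1, E_2)$, so the conditions transfer literally rather than in dualised form, is exactly the bookkeeping the paper leaves implicit.
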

	
	\begin{proof}
		This follows from Proposition \ref{p22}, Theorem \ref{t5} and Theorem \ref{t6}. 
	\end{proof}  

	\section{Conclusion}  
	We have defined and examined the consequences of double successive rough set approximations based on two, generally unequal equivalence relations on a finite set. We have given algorithms to decompose a given defined operator into constituent parts. Additionally, in sections 4.2 and 4.3 we have found a conceptual translation of the main results which is very much in the spirit of what Yao suggested in \cite{TSid}.  These type of links are especially helpful in forming a coherent map of the mass of existing literature out there. 
	
	This type of analysis can be seen as somewhat  analogous to decomposing a wave into constituent sine and cosine waves using Fourier analysis. In our case, we work out the possibilities of what can be reconstructed if we know that a system has in-built layered approximations. It is possible that some heuristics of how the brain works can be modelled using such approximations and cognitive science is a possible application for the theory which we have begun to work out.

	\bibliographystyle{plain}
	
	\bibliography{mybib3b}

\end{document}